\documentclass[11pt]{amsart}
\usepackage[utf8]{inputenc}
\pdfoutput=1	

\usepackage{fullpage}
\usepackage{bbm}

\usepackage{ graphicx, amsmath, amssymb,csquotes}

\usepackage{tikz}
\usepackage{tikz-dependency}

\usetikzlibrary{arrows.meta}
\usetikzlibrary{decorations}
\usetikzlibrary{decorations.markings}
\usetikzlibrary{decorations.pathreplacing}

\tikzset{  fullVertex/.style={circle, draw=black, thick, fill= black,  minimum size =2.5mm, inner sep=0mm},
	diffVertex/.style={circle, draw=black, thick, fill= white,  minimum size =2.5mm},
	point/.style={circle, draw=black, thick, fill= black,  minimum size =1.0mm, inner sep=0mm},
	treeVertex/.style={regular polygon, regular polygon sides=4,   draw, thick, fill= white,  minimum size =3.8mm, inner sep=0mm}, 
	counterVertex/.style={draw=black, line width=.2mm, circle, path picture={ 
			\draw[line width=.2mm] (-1.3mm,-1.3mm) -- (1.3mm,1.3mm) ;
			\draw[line width=.2mm] (-1.3mm,1.3mm) -- (1.3mm,-1.3mm);
	}},
	treeCounterVertex/.style={regular polygon, regular polygon sides=4,   draw=black, line width=.2mm, fill= white,  minimum size =4.5mm, inner sep=0mm, path picture={ 
			\draw[line width=.2mm] (-1.5mm,-1.5mm) -- (1.5mm, 1.5mm) ;
			\draw[line width=.2mm] (-1.5mm, 1.5mm) -- (1.5mm,-1.5mm);
	}},
	-|-/.style={decoration={markings, 	mark=at position .5 with {\arrow{|}}},postaction={decorate}},
	every picture/.style=thick
}

\tikzset{
	ncbar angle/.initial=90,
	ncbar/.style={
		to path=(\tikztostart)
		-- ($(\tikztostart)!#1!\pgfkeysvalueof{/tikz/ncbar angle}:(\tikztotarget)$)
		-- ($(\tikztotarget)!($(\tikztostart)!#1!\pgfkeysvalueof{/tikz/ncbar angle}:(\tikztotarget)$)!\pgfkeysvalueof{/tikz/ncbar angle}:(\tikztostart)$)
		-- (\tikztotarget)
	},
	ncbar/.default=0.5cm,
}

\tikzset{square left bracket/.style={ncbar=0.5cm}}
\tikzset{square right bracket/.style={ncbar=-0.5cm}}

\usepackage{hyperref}
\usepackage[nameinlink]{cleveref}	
\crefname{enumi}{rule}{rules}

\makeatletter
\newcommand{\mylabel}[2]{#2\def\@currentlabel{#2}\label{#1}}
\makeatother

\usepackage{placeins} 
\usepackage{flafter} 

\newtheorem{theorem}{Theorem}[section]
\newtheorem{lemma}[theorem]{Lemma}

\theoremstyle{definition}
\newtheorem{definition}[theorem]{Definition}
\newtheorem{example}[theorem]{Example}

\numberwithin{equation}{section}

\usepackage[
backend=bibtex,
]{biblatex}
\bibliography{diffeomorphism2.bib}

\newcommand{\renop} { \mathcal {R}  }
\newcommand{\renopk}[1] { \renop \left[ #1 \right]  }
\renewcommand{\d}{\textnormal{d}}
\newcommand{\abs}[1]{\left | #1 \right |}

\title{Propagator-cancelling scalar fields}
\author{Paul-Hermann Balduf}
\date{\today}

\thanks{The author thanks Dirk Kreimer for helpful discussion as well as Daniel Reiche and Alexandra Glück for proofreading.}

\begin{document}

\begin{abstract}
  We examine a large class of scalar quantum field theories where vertices are able to cancel adjacent propagators. These theories are obtained as diffeomorphisms of the field variable of a free field. Their connected correlations functions can be computed efficiently with an algebraic procedure  dubbed connected perspective in earlier work.    Specifically,
  
  (1) We compare the Feynman rules in momentum-space to their position-space counterparts, they agree.  This is a a-posteriori justification for the Feynman rules of the connected perspective.
  
  (2) We extend the connected perspective to also incorporate counterterm vertices.
  
  (3) We examine a specific choice of diffeomorphism that assumes the vertices at different valence proportional to each other. We find that these relations lead to various simplifications.

  (4) We perturbatively compute the connected 2-point function to all orders for an arbitrary diffeomorphism of a massless field. We thereby give a systematic perturbative derivation of the exponential superpropagator known from the  literature.

  (5) We compute a number of  one-  and two-loop counterterms. We find that for the specific diffeomorphism, the one-loop counterterms arise from the bare vertices by a non-linear redefinition  of the momentum. 
  
  (6) We show that every diffeomorphism fulfils a set of infinitely many Slavnov-Taylor-like identities which express diffeomorphism-invariance of the S-matrix at loop-level.
  
  Finally, we comment on surprising similarities between the structure of propagator cancelling scalar theories and gauge theories .
\end{abstract}

\maketitle

\section{Introduction}

\subsection{Motivation}
This paper concerns propagator-cancelling scalar theories, that is scalar quantum field theories  where vertex Feynman rules are able to cancel adjacent propagators.  We restrict ourselves to those theories which can be obtained from a free field theory  by a non-linear redefinition of the field variable.  The implications of the latter restriction depend on the precise form of the propagator. If the propagator is quadratic in momenta and massless, then  all propagator-cancelling theories fall in this class, see \cref{thm:generality}. In the following, the word "diffeomorphism" will always refer to a transformation of the field variable, not to be confused with spacetime diffeomorphisms. There are several reasons to study propagator-cancelling theories and especially diffeomorphisms:
\begin{enumerate}
	\item  Non-linear field redefinitions do not alter the $S$-matrix. This is frequently used to simplify the Lagrangian of a quantum field theory,  e.g. when treating gauge invariance   \cite[ch. 6.3]{lee_gauge_1981} or non-local interactions \cite{chebotarev_smatrix_2019}. The invariance of the $S$-matrix has been demonstrated in the path integral formalism \cite{pointPathIntegral,apfeldorf} and also using graph theory \cite{thooft_diagrammar_1973,velenich,KY17,balduf_perturbation_2020,mahmoud_diffeomorphisms_2020,mahmoud_enumerative_2020}. The behaviour of correlation functions of the field diffeomorphism has never been addressed in detail to the author's knowledge apart from the statement that they vanish onshell.
	
	\item It is well known   that linear shifts in the field variable of an interacting field alter the type of interaction,e.g. \cite{peskin_introduction_1995}. Remarkably, they do not destroy renormalizability of a theory, which makes it possible to formulate theories with spontaneous symmetry breaking. One can ask whether a similar mechanism is at work in non-linear field transformations.
	
	\item There are two different connections with quantum Einstein gravity. Firstly, in gravity the  renormalization Hopf algebra equals the core Hopf algebra  \cite{gravity}. This is also true for a propagator-cancelling theory, which qualifies the latter as a model for the algebraic behaviour of gravity. 
	Secondly, although gravity is not a diffeomorphism of a free field \cite{berends_spin_1984}, non-linear redefinitions of the field variable have been considered already decades ago \cite{isham_quantum_1973}. By now, several different choices have been proposed, all of which lead to a non-renormalizable quantum theory. A better general understanding of field diffeomorphisms might clarify what can and what cannot be altered by them, apart from mere onshell invariance.  
	
	\item Historically, field redefinitions were prominently used to treat non-polynomial interaction terms, see \cref{subs:history}, but these computations often relied on ad-hoc prescriptions. They concentrated on the 2-point-function, leaving open the question how they relate to the modern formalism of perturbation theory.	
\end{enumerate}

\subsection{Historical background}\label{subs:history}

Perturbative renormalization of quantum fields is possible   if  the bare Lagrangian contains only  coupling constants with non-negative mass dimensions \cite{sakata_structure_1952,heisenberg_zur_1936}. 
Since this classification is based on order-by-order perturbation theory, it is conceivable that non-renormalizable interactions still give finite results either by a non-perturbative treatment \cite{okubo_note_1954}, by analytic properties of the interaction term \cite{blomer_zeromass_1971}, or by non-trivial cancellation effects of higher order terms in the perturbation expansion \cite{kamefuchi_structure_1953,dewitt_gravity_1964,lazarides_highenergy_1972}. An often studied example of a non-renormalizable interaction are theories of Liouville-type \cite{liouville_equation_1853}
\begin{align}\label{L_liouville}
\mathcal L = \frac 12 \partial_\mu \phi \partial^\mu \phi -  \exp(g \phi).
\end{align}
In two dimensions, Liouville theory is solved by mapping its solutions to modes of a free field  via B\"acklund transformation \cite{backlund_zur_1880} which enjoyed significant attention in the 1980s \cite{braaten_exact_1983,dhoker_classical_1982} for its connection to string theory \cite{polyakov_quantum_1981}. In four dimensions, it has been treated, e.g., in  \cite{osipov_feynman_1981,efimov_formulation_1965}.

A suitable field diffeomorphism, i.e. a global redefinition
\begin{align}\label{def_diffeomorphism}
\phi(x) &= \sum_{j=0}^\infty a_j \rho^{j+1}(x), \qquad a_0=1,
\end{align}
of the field variable $\phi$ in terms of another field $\rho$ by constant coefficients $a_j\in \mathbbm C$,  can possibly turn a theory with non-polynomial interaction term into a theory with polynomial interaction, but non-standard propagator \cite{volkov_quantum_1968}. This motivated the study of such non-standard propagators, where the \emph{exponential superpropagator}  
\begin{align}\label{superpropagator}
G_2(x) &:= \exp \left( -g^2 G_F(x) \right) , \qquad G_F (x) \text{ a suitable free propagator}
\end{align}
is the most prominent example. Contrary to ordinary propagators, it is not a tempered distribution in position space \cite{jaffe_form_1966} and defining its Fourier transform to momentum space requires additional assumptions or constraints. This has produced a number of results, e.g. \cite{okubo_note_1954,lehmann_superpropagator_1971,blomer_zeromass_1971}, which differ by finite terms $\delta_k$ in each order in $p^2$. With the conventions of \cite{bollini_exponential_1974},
\begin{align}\label{superpropagator_mom}
&G_2(p) = \frac{i g^24\pi ^2}{p^2}  +i\pi^2 g^4\sum_{k=0}^\infty \frac{4^{-k}g^{2k} (p^2)^k  \left(\ln \left( \frac{g^2 p^2}{4} \right)  + \delta_k \right) }{ \Gamma(k+1) \Gamma(k+2) \Gamma(k+3)}.
\end{align}

\subsection{Notation and conventions}

Let $\hat s$ be the differential operator defining a free field theory via its Lagrangian
\begin{align}\label{Lfree}
\mathcal L &= \frac 12 \phi \hat s \phi .
\end{align}
The Fourier transform $s_p$ of the field differential operator, 
\begin{align}\label{def_offshellvariable} 
	s_p \cdot e^{ipx} :=  \hat s e^{ipx},
\end{align}
is called the \emph{offshell variable} of the theory. A momentum $p$ is said to be \emph{onshell} if $s_p=0$.

\begin{example}[Standard scalar theories]
	The most common free scalar theory has the Lagrangian
	\begin{align*}
	\mathcal L = -\frac 12 \phi \partial_\mu \partial^\mu \phi -\frac 12 m^2 \phi^2, 
	\end{align*}
 field differential operator $\hat s = -\partial_\mu \partial^\mu- m^2$ and    offshell variable 
	\begin{align*}
		 s_p = p^2-m^2.
	\end{align*}
	The corresponding massless theory is obtained by setting $m=0$, i.e.
\begin{align*}
	\mathcal L &= -\frac 12 \phi \partial_\mu \partial^\mu \phi,  \qquad 	\hat s = -\partial_\mu \partial^\mu,\qquad 	s_p = p^2.
\end{align*}
\end{example}

In the following, we will make reference to a specific free Lagrangian of type \cref{Lfree} by its corresponding offshell variable. Addition of indices is understood as acting on the momenta, i.e. if there are numbered momenta $p_1, p_2, \ldots$ then $s_{1+2} := s_{p_1 + p_2} $ which in general is not equal $s_1+s_2$. If an edge $e$ of a Feynman graph carries momentum $p_e$ then we similarly write $s_e\equiv s_{p_e}$ for the corresponding offshell variable.

Many statements of the present paper are purely combinatoric in nature and can be formulated in any dimension of spacetime. For the sake of clarity, in all explicit calculations we use dimensional regularization \cite{bollini_dimensional_1972,thooft_regularization_1972}, working in $D = 4-2\epsilon $ dimensions. In this way, a Feynman graph evaluates to a Laurent series in the regularization parameter $\epsilon$. The part of this series with negative exponents of $\epsilon$ diverges in the physical limit $D \rightarrow 4$. it represents the divergent part of the graph under consideration. 
For a graph $\Gamma$, we define $\renopk \Gamma$ to be the projection of the   amplitude onto its divergent part,  
\begin{align}\label{def_R}
	\Gamma &= \sum_{k=-n}^\infty \epsilon^k c_k \qquad \Rightarrow \qquad \renopk{\Gamma} := \sum_{k=-n}^{-1} \epsilon^k c_k.
\end{align}

The propagator $G_F(z)$ of a theory is defined as the Green function of the corresponding field differential operator \cref{Lfree}, $\hat s G_F(z) = i\delta(z)$. In momentum space it is given by
\begin{align}\label{GF}
	G_F(p) &= \frac{i}{s_p + i0},
\end{align}
where, in the following, the causal prescription $+i0$ will be implicit.  An edge $e$ in a Feynman graph in momentum space thus contributes to the  amplitude with a factor $\frac{i}{s_e}$. The 4-dimensional Fourier transforms of the propagator $\frac i {s_p}$ for a massless field  reads, e.g.  \cite[p.30]{huang_quantum_1998}, 
\begin{align}\label{propagators}
	s&=p^2 : \quad &G_F (x)&= \int \frac{\d^D k}{(2 \pi)^D}\; \frac i {k^2}e^{-ikx}=   \frac{i}{ (2 \pi)^2 x^2 } \left( 1+ \epsilon \left( \gamma_E + \ln (\pi x^2)\right) + \mathcal O (\epsilon^2)  \right)   .
\end{align}

We will often encounter symmetric sums over all permutations of certain terms, therefore we  introduce a shorthand notation:
\begin{definition}\label{def_permutations}
	The expression $\langle k \rangle f(x_1,x_2, \ldots, x_n)$ denotes the sum over all $k$ different permutations of arguments in the function $f(x_1, \ldots, x_n)$.
\end{definition}
\begin{example}[4-valent tree amplitude]
There are six different ways to choose two out of four offshell variables but external momentum conservation of a 4-valent graph identifies them pair-wise like in $s_{1+2} = s_{3+4}$. Consequently there are only three actually different ways of building such offshell variables:
	\begin{align*}
		\langle 3 \rangle \frac{1}{s_{1+2}} = \frac{1}{s_{1+2}} + \frac{1}{s_{1+3}}  +\frac{1}{s_{1+4}}
	\end{align*}
\end{example}

\subsection{Field diffeomorphisms}

Diffeomorphisms of free fields have been studied in detail recently \cite{velenich,KY17,balduf_perturbation_2020,mahmoud_diffeomorphisms_2020}. In this section we list some central results. If a diffeomorphism \cref{def_diffeomorphism} $\phi(x) = \sum_{j=1}^\infty a_{j-1} \rho^{j}(x)$ is applied to a free Lagrangian \cref{Lfree}, one obtains a theory with an infinite set of $n$-valent interaction vertices with Feynman rules \cite{balduf_perturbation_2020}
\begin{align}\label{vn_general}
i v_n &= i \frac 12 \sum_{k=1}^{n-1} a_{n-k-1} a_{k-1} (n-k)!k! \sum_{P\in Q^{(n,k)}} s_P
\end{align}
where $s_P$ is the offshell variable \cref{def_offshellvariable} of a set of momenta $P$ and $Q^{(n,k)}$ is the set of all possibilities to choose $k$ out of $n$ external edges without distinguishing the order. Especially, for $k=1$ and $k=n-1$, the summands are proportional to the offshell variables adjacent to the vertex,
\begin{align}\label{vn_leading}
iv_n &= i(n-1)! a_{n-2}  \left( s_1 + \ldots + s_n \right) + \mathcal O\left( s_{i+j}  \right) .
\end{align}
 For $s=p^2-m^2$, the sum can be computed explicitly to produce \cite{KY17}
\begin{align}\label{vn}
i v_n &= i f_n \cdot \left( s_1 + s_2 + \ldots + s_n \right)  + i g_n m^2 , \\
\text {where} \qquad f_n &= B_{n-2,1} \left( 2! a_1, 3! a_2, \ldots \right) + B_{n-2,2} (2! a_1, 3! a_2, \ldots), \nonumber \\
g_n &= \frac 12 n (n-2)! \sum_{k=0}^{n-2} a_{n-k-2} a_k (n-k-2)k. \nonumber
\end{align}
Here, $B_{n,k}(x_1, \ldots)$ are the incomplete Bell polynomials, see e.g. \cite{comtet_advanced_1974}. 
If additionally $m=0$, such that $s_p = p^2$, then by momentum-conservation all sums of partitions of momenta can be rewritten in terms of the external momenta  and
\begin{align}\label{vn_massless}
iv_n &= \frac  {i(n-2)!}2\cdot \left( s_1+s_2 + \ldots + s_n \right) \cdot \sum_{k=0}^{n-2}a_{n-2-k} a_k (n-k-1)(k+1)   .
\end{align}

For scalar fields where the propagator is of quadratic order in momentum, so $s_p=p^2$ or $s_p=p^2+m^2$, essentially all propagator-cancelling theories are diffeomorphisms of a free field:
\begin{theorem} \label{thm:generality}
	Consider a scalar field theory $\rho$ with propagator quadratic in momentum.
	\begin{enumerate}
		\item If $\rho$ has interaction vertices $iv_n = k_n \cdot (p_1^2 + \ldots + p_n^2) + r_n \ \forall n>2$ where $k_n, r_n\in \mathbbm R$, then $\rho$ is a unique diffeomorphism of a field $\phi$ such that the vertices of $\phi$ are independent of momenta, $iv'_n = ir'_n $ where $r'_n\in \mathbbm R$.
		\item If $\rho$ is additionally a massless field and has interaction vertices $iv_n = k_n \cdot (p_1^2 + \ldots + p_n^2)$, then it is a diffeomorphism of a free massless scalar field $\phi$.
		\item There is no diffeomorphism between two power-counting renormalizable theories.
	\end{enumerate}
\end{theorem}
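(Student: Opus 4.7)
The plan is to match $\rho$-vertex Feynman rules order by order in valence, exploiting the triangular structure of \cref{vn} and \cref{vn_massless}.

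For part (1), fix $s_p = p^2 - m^2$ with $m$ the mass of $\rho$. Since any diffeomorphism with $a_0=1$ satisfies $\phi^2 = \rho^2 + \mathcal O(\rho^3)$, the free Lagrangian of the sought $\phi$ must carry the same mass $m$. Write $\mathcal L_\phi = \frac 12 \phi \hat s \phi + \sum_{j\geq 3} c_j \phi^j$ with momentum-independent $c_j$ to be determined. Under $\phi = F(\rho) := \sum_{j\geq 0} a_j \rho^{j+1}$ the kinetic term produces the $n$-valent $\rho$-vertex $i f_n \sum_k p_k^2 + i m^2(g_n - n f_n)$ according to \cref{vn}, while each $c_j \phi^j$ is derivative-free and contributes only to the momentum-independent part of $iv_n$ for $n \geq j$; the $\rho^n$-coefficient of $\sum_j c_j F(\rho)^j$ has the form $c_n + (\text{polynomial in } c_3,\ldots,c_{n-1},a_1,\ldots,a_{n-3})$. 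Matching the $\sum p_k^2$ coefficient of $iv_n = k_n \sum p_k^2 + r_n$ gives $i f_n = k_n$; using $B_{n-2,1}(2!a_1,3!a_2,\ldots)=(n-1)!a_{n-2}$ one has $f_n = (n-1)!a_{n-2} + B_{n-2,2}\bigl(2!a_1,\ldots,(n-2)!a_{n-3}\bigr)$, a triangular recursion uniquely solvable for $a_1,a_2,\ldots$. Matching the constants then yields a second triangular system with unit diagonal that uniquely determines $c_3, c_4,\ldots$.

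For part (2), specialize to $m = 0$ and $r_n = 0$ for every $n \geq 3$. The mass contribution $im^2(g_n - nf_n)$ vanishes identically, so the constant-matching equations reduce to $0 = c_n + (\text{polynomial in } a_1,\ldots,a_{n-3},c_3,\ldots,c_{n-1})$. Induction on $n$, starting from $0 = c_3$, forces $c_j = 0$ for every $j$, so $\mathcal L_\phi$ collapses to the free massless kinetic term.

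For part (3), the vertex combinatorics can be sidestepped by a direct Lagrangian argument. In $D=4$ every power-counting renormalizable scalar Lagrangian has the form $\frac 12 (\partial \phi)^2 - \frac 12 m^2 \phi^2 + \lambda_3 \phi^3 + \lambda_4 \phi^4$, since derivative interactions of valence $\geq 3$ carry couplings of negative mass dimension. Substituting $\phi = F(\rho)$, the kinetic term becomes $\frac 12 [F'(\rho)]^2 (\partial \rho)^2 = \frac 12 (\partial \rho)^2 + \frac 12 \bigl([F'(\rho)]^2 - 1\bigr)(\partial \rho)^2$; the bracketed factor, being a power series in $\rho$ without derivatives, expands into non-renormalizable $\rho^n (\partial \rho)^2$ interactions that cannot be cancelled by the remaining, derivative-free terms $-\frac 12 m^2 F(\rho)^2 + \lambda_3 F(\rho)^3 + \lambda_4 F(\rho)^4$. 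Hence renormalizability of $\rho$ forces $[F'(\rho)]^2 \equiv 1$, which with $F'(0)=1$ gives $F' \equiv 1$ and thus $F = \mathrm{id}$.

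The main obstacle will be the combinatorial bookkeeping in part (1), namely verifying that the expansion of $\sum_j c_j F(\rho)^j$ really produces a lower-triangular system with unit diagonal in $c_j$, so that both matching steps are uniquely solvable by recursion. Once this triangularity is established, parts (1) and (2) follow routinely, and part (3) by the dimensional-counting argument above.
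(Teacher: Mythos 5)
Your proposal is correct, and for parts (1) and (2) it follows the same route the paper takes (the paper delegates the details to \cite{balduf_perturbation_2020}): you exploit the triangular structure of $f_n = (n-1)!\,a_{n-2} + B_{n-2,2}(2!a_1,\dots,(n-3)!a_{n-4},\ldots)$ to solve for the $a_j$ uniquely, and then the unit-diagonal triangular system in the $c_j$ coming from $\sum_j c_j F(\rho)^j$ to fix the momentum-independent couplings; specializing to $m=0$, $r_n=0$ kills the $c_j$ by induction, which is exactly the content of \cref{vn_massless} that the paper invokes. Where you genuinely diverge is part (3): the paper deduces it from the \emph{uniqueness} in part (1) — a renormalizable theory already has momentum-independent vertices, so the unique diffeomorphism to momentum-independent form is the identity — whereas you argue directly at the Lagrangian level that a nontrivial $F$ generates $\rho^n(\partial\rho)^2$ interactions with negative-dimension couplings that no derivative-free term can cancel. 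Your version is more self-contained and elementary; the paper's version is shorter once (1) is in hand. Two minor points: your opening sentence of (3) fixes $D=4$, but the argument only needs that $\rho^n(\partial\rho)^2$ with $n\geq 1$ is power-counting non-renormalizable, which holds for any $D>2$ (and the whole statement would fail in $D=2$, which neither you nor the paper considers); and your "main obstacle," the unit-diagonal triangularity of the $c_j$ system, is immediate from $F(\rho)^j=\rho^j+\mathcal O(\rho^{j+1})$ with $a_0=1$, so there is no real gap there.
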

\begin{proof}
	For the first two cases, the most general interaction vertex which can be obtained from the corresponding $\phi$ can be brought into the required form, see \cite{balduf_perturbation_2020}. For (2), see \cref{vn_massless} and note that by tuning all parameters $a_n$, the proportionality constants of each vertex can be adjusted independently to match any given $k_n$.
	
	(3) follows from the fact that a theory can only be power-counting renormalizable if no vertex is proportional to squared momenta. By (1), there is a unique diffeomorphism to produce this form. The resulting theory might or might not be renormalizable, depending on valence $n$ of the vertices where the remaining coefficients are $r'_n \neq 0$.
\end{proof}

\subsection{Propagator cancellation and tree sums}

The vertices \cref{vn_general} are capable of cancelling a propagator $\frac{i}{s_e}$ where $e$ is an edge adjacent to the vertex by means of $s_e \cdot \frac{i}{s_e}=i$. This means that a vertex $iv_n$ residing in  a graph $\Gamma$ will generally change the topology of $\Gamma$ which poses a challenge to computations as well as combinatorial arguments in perturbation theory. 

It turns out that the vertex Feynman rules fulfil - for any choice of the diffeomorphism parameters $\{a_i\}_i$ - an infinite set of identities, as is seen most clearly in the form \cref{vn_general}: The momentum-dependent factors $s_P$ are in one to one correspondence with the possibilities to construct the $n$-valent vertex by joining two vertices of lower valence with an internal propagator. Since each vertex and propagator comes with an imaginary factor $i$, a graph with two joined vertices carries an overall minus sign compared to a single vertex at the same position. In this way, the contributions of $s_P$ to a $n$-point-vertex cancel out the contributions of the same $s_P$ arising from two joined vertices of lower valence. All that remains from $iv_n$ are those terms where $s_P = s_e$ is the offshell variable of an external edge of $iv_n$ and we have
\begin{align}\label{ordinary_recursive}
\sum_{v_j \star v_k = v_n} iv_j \frac{i}{s_P} iv_j \Big|_{\text{Terms not proportional to an external $s_e$}} &= -iv_n \Big|_{\text{Terms not proportional to an external $s_e$}}.
\end{align}
The product $v_j \star v_k$ means summation over all ways to choose the valences $j,k$ and also all possible permutations of external edges. 
On the right hand side the restriction implies that both vertices $v_j$ and $v_k$ cancel the intermediate propagator $\frac{i}{s_P}$ such that an overall factor $s_P$ arises, which, if $P$ is a non-trivial partition, is not proportional to any external $s_e$. From \cref{ordinary_recursive} it follows that in the connected tree-level $n$-point amplitude, the only remaining contributions are proportional to the external offshell variables $s_e$, that is, this amplitude has the form 
\begin{align}\label{Vn}
	-ib_n (s_1 + \ldots + s_n) =: iV_n
\end{align}
where $b_n$ is a c-number independent of kinematics.

\begin{definition}[Tree sum $b_n$]\label{def_bn}
	The tree sums $b_n$ for $n\geq 3$ are defined as the sum of all connected tree-level Feynman graphs of the field $\rho$  with a total of $n$ external edges, where  $n-1$ external edges  are onshell (i.e. $s_e=0$ for these edges $e$) and the last external edge is offshell. The propagator $\frac{i}{s_e}$ of this offshell edge $e$ is included in $b_n$. Finally, $b_2 :=1$.
\end{definition}
It can be shown \cite {KY17,balduf_perturbation_2020} that, regardless of the concrete form of $s_p$,
\begin{align}\label{thm_bn}
	b_{n+2} &= \sum_{k=1}^{n}\frac{(n+k)!}{n!}B_{n,k} \left( -1! a_1, -2! a_2, \ldots, -n! a_n\right) 
\end{align}
where $B_{n,k}$ are the incomplete Bell polynomials. 
Moreover, the same $b_n$ also are the coefficients of the inverse diffeomorphism of \cref{def_diffeomorphism},
\begin{align}\label{def_diffeomorphism_inverse} 
	\rho(x) &=   \sum_{n=1}^\infty  \frac{b_{n+1}}{n!} \phi^n(x).
\end{align}
This implies
\begin{align}\label{anbn}
	a_n &= \frac{1}{(n+1)!}\sum_{k=1}^n B_{n+k,k} \left( 0,-b_3,-b_4,-b_5,\ldots  \right) .
\end{align}

\subsection{The connected perspective}\label{sec:connectedperspective}

The fact that the tree sums \cref{def_bn} are mere numbers without any remaining internal propagators motivates to use these tree sums as \emph{metavertices} in computing connected correlation functions. This approach is dubbed \emph{connected perspective}, as opposed to the ordinary perspective with vertex Feynman rules \cref{vn_general},   and works according to the following Feynman rules: 
\begin{theorem}[Feynman rules of the connected perspective of a free field diffeomorphism]\label{thm_feynmanrules}
	~ \\ 
	Assuming vanishing of tadpoles, the $n$-point connected   amplitude  is obtained by summing over all graphs $\Gamma$ such that
	\begin{enumerate}
		\item Each internal edge $e\in \Gamma $ contributes a propagator factor $\frac{i}{s_e}$ .
		\item $\Gamma$ is built from $(k>2)$-valent metavertices with amplitude $iV_k = -ib_k (s_1 + s_2 + \ldots + s_k)$. Keeping a summand $s_e$ in this amplitude amounts to cancelling the adjacent edge $e$. 
		\item The metavertices do not cancel internal edges of $\Gamma$.
		\item There are no internal metavertices.
	\end{enumerate}
\end{theorem}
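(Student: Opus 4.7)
The plan is to prove the connected-perspective Feynman rules by reorganising the standard amplitude sum over Feynman graphs built from the ordinary vertices $iv_n$ of \cref{vn_general}. The essential tool is the recursive identity \cref{ordinary_recursive}, which says that the $s_P$-contributions coming from joining two lower-valence vertices by a cancelled propagator exactly annihilate the same $s_P$ appearing inside $iv_n$, whenever $P$ is not a singleton corresponding to an external edge. Applied inductively, this yields the tree-level result $iV_n = -ib_n(s_1 + \ldots + s_n)$ already recorded in \cref{Vn,def_bn}, so the Feynman-rule claim is true at tree level by definition.

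For the general case I would organise the sum over ordinary graphs $G$ by the pattern of which internal edges are hit by an $s_e$-term chosen from one of their endpoint vertices. Mark these internal edges as cancelled and contract them; this amalgamates groups of ordinary vertices of $G$ into clumps. Each clump is a connected subgraph of $G$ consisting of cancelled edges only, and since the cancelled edges are precisely those propagators removed by an adjacent vertex factor, the amplitude of each clump is a tree sum of the type of \cref{def_bn}. The contracted skeleton $\Gamma$ then has one node per clump and one edge per uncancelled internal edge of $G$, with external legs inherited from $G$.

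For a fixed skeleton $\Gamma$, summing over all preimages $G$ amounts to summing over all tree expansions of each clump. By the derivation of \cref{thm_bn} from \cref{ordinary_recursive}, this sum collapses at each valence-$k$ clump into the metavertex factor $iV_k = -ib_k(s_1 + \ldots + s_k)$, the sum running over the $k$ uncancelled legs incident at the clump, which are precisely the external legs of $G$ attached to the clump together with the internal edges of $\Gamma$ incident at the corresponding node. Rule (1) is then the standard internal propagator, rule (2) is the metavertex just derived, rule (3) is necessary to avoid double-counting an edge of $\Gamma$ that should instead have been absorbed into a larger clump, and rule (4) eliminates the redundancy in which any tree of metavertices joined by propagators collapses by another application of \cref{ordinary_recursive} into a single metavertex of larger valence already counted elsewhere.

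The main obstacle will be the combinatorial bookkeeping: one must verify that the symmetry factors of the ordinary-perspective graphs $G$, the multiplicities of the tree expansions within each clump, and the symmetry factor of the contracted skeleton $\Gamma$ assemble so that the preimage sum reproduces the metavertex amplitude with no spurious prefactors. Signs need equal care, since the overall minus in $iV_k$ arises from the interplay of the $i$-factors on vertices and propagators inside each clump, while the coefficient $b_k$ is encoded in the Bell-polynomial identity \cref{thm_bn}. Once this bookkeeping is in place, the four rules follow.
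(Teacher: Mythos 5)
Your reorganisation of the ordinary expansion --- contracting singly-cancelled internal edges into clumps, summing each clump via \cref{ordinary_recursive} into a metavertex $iV_k$, and reading off rules (1)--(3) from the contracted skeleton --- is exactly the picture the paper has in mind; note, though, that the paper does not prove \cref{thm_feynmanrules} here at all but defers it to the earlier reference, merely remarking that the graphs $\Gamma$ ``coincide with the topologies which eventually are left if the Feynman amplitudes are constructed naively from vertices $v_n$.'' So your outline is the intended argument rather than an alternative to it.

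Two concrete points are missing. First, you never say where the hypothesis ``assuming vanishing of tadpoles'' enters. It is needed in at least two places: a clump whose cancelled edges contain a cycle is not a tree, so its amplitude is not a tree sum --- instead the fully-contracted cycle leaves a scaleless loop integral (a tadpole, $G_F(0)$ in position space), and only the tadpole assumption removes these configurations from your clump decomposition; likewise, self-loops on a metavertex in the skeleton $\Gamma$ are tadpoles and must be discarded for the same reason. Without this your claim that ``the amplitude of each clump is a tree sum'' is false as stated. Second, your justification of rule (4) is misplaced: the reason an internal metavertex cannot occur is not double-counting but that every summand of $iV_k$ carries a factor $s_e$ that must cancel an adjacent edge, and by rule (3) that edge must be external to $\Gamma$; a metavertex adjacent to no external edge therefore contributes nothing. (Your double-counting argument is the correct justification of rule (3), not (4).) Finally, you correctly identify that the symmetry-factor and sign bookkeeping --- matching the multiplicity of the two-vertex configuration $v_j\,\frac{i}{s_P}\,v_k$ against the single-vertex configuration $v_n$ inside an arbitrary loop graph --- is where the real work lies; since that matching is the actual content of the theorem at loop level, the proposal as written is a plan rather than a proof.
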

Point $(4)$ is redundant and kept for clarity. An internal metavertex is one that is not adjacent to any external edge of the graph. 

These Feynman rules are slightly unconventional, but they have the advantage that no more cancelling of edges occurs, as opposed to \cref{vn}. The so-constructed graphs $\Gamma$ coincide with the topologies which eventually are left if the Feynman amplitudes are constructed naively from vertices $v_n$. For further explanations and examples see \cite{balduf_perturbation_2020}.

If tadpole graphs vanish then all - tree-level and loop-level - connected correlation functions of the diffeomorphism field $\rho$ differ from the respective ones of $\phi$ only by terms proportional to some offshell variable $s_e$ for an external edge $e$. The onshell connected correlation functions, that is, the elements of the $S$-matrix, are unaltered by a diffeomorphism.

Note that the construction of connected tree-level amplitudes with \cref{thm_feynmanrules} reminds of onshell methods from $S$-matrix theory. For example, the tree-level amplitude with precisely two external legs offshell has the general form $\sum i V_j \cdot \frac{i}{s_e} \cdot iV_k$ where the internal edge $e$ crucially is not cancelled. That means, from the perspective of the metavertices $V_{j,k}$, this edge appears \enquote{onshell} in the sense that the terms $\propto s_e$ vanish in $V_{j,k}$. In other words, the connected tree-level amplitude has poles only arising from internal propagators as asserted in BCFW relations  \cite{britto_new_2005,britto_direct_2005}. However, in the present case the individual constituents $iV_j$ are not completely onshell because then they would vanish identically. But at least they behave like onshell amplitudes with regard to the internal edges.

\subsection{Content}

\Cref{sec:positionspace} gives an interpretation of field diffeomorphisms in position space and thereby a posteriori motivates the peculiar Feynman rules of the connected perspective, \cref{thm_feynmanrules}. 
In \cref{sec:twopoint}  we derive the connected 2-point-function in momentum space.
In \cref{sec_divergences}, we extend the connected perspective to also incorporate counterterm vertices. We compute several of these meta-counterterms and show how to use them to remove subdivergences. 
\Cref{sec:rec} concerns a class of diffeomorphisms, where $b_n = \lambda^{n-2}$, called exponential diffeomorphism. We compute the explicit functional form of all such diffeomorphisms in position- and momentum space as well as the 2-point-functions. Further, we demonstrate that for this class of diffeomorphisms, all connected amplitudes can be computed from 2-point functions.
In \cref{sec_1PI}, we extract 1PI counterterms from the previously computed meta-counterterms. In particular, we show that for the exponential diffeomorphism,   the one-loop counterterms are structurally equal to the bare vertices $iv_n$ and that this is not the case at two-loop order. Finally, we derive the 1PI counterterm of the 2-point-function to all orders in perturbation theory. 
In \cref{sec:st}, we show that diffeomorphism-invariance of the $S$-matrix implies infinitely many Slavnov-Taylor identities between the 1PI counterterms. 
In \cref{sec:mhv} we explore the structural similarities between a field diffeomorphism and a nonabelian gauge theory. We argue that the recursion relations which define the connected tree-level amplitudes in the scalar case are   the equivalent of Berends-Giele relations in  QCD.

\section{Correlation functions in position space}\label{sec:positionspace}

\subsection{Structure of correlation functions}\label{sec:positionspace_func}
The   transformed field $\rho(x)$ is related to the   underlying free field $\phi$ via the inverse diffeomorphism \cref{def_diffeomorphism_inverse},
\begin{align}\label{def_diffeomorphism_invers2}
\rho(x) &= \sum_{n=1}^\infty \frac{b_{n+1}}{n!}\phi^n(x).
\end{align}
This allows for a straightforward computation of the correlation functions of $\rho$ in position space in terms of the correlation functions of $\phi$ by expanding each field operator $\rho(x)$ according to \cref{def_diffeomorphism_invers2}.  
The 1-point function is just the expectation value of $\rho(x)$.  Wick's Theorem \cite{wick} resolves $\langle \phi^n(x)  \rangle$ into a sum of all possible products of position space propagators \cref{propagators}. Since there is only one spacetime point, they are $G_F(0)$, which corresponds to a position-space tadpole graph and is independent of momenta. We  assume that all tadpole-graphs vanish,
\begin{align}\label{pos_tadpole}
	G_F(0) &\overset != 0.
\end{align}
Consequently the 1-point-function vanishes as well $	\left \langle \rho(x) \right \rangle =0$.

For the 2-point function, the Wick expansion reads
\begin{align*}
\left \langle \rho(x) \rho(y)\right \rangle &= \sum_{t_1=1}^\infty \sum_{t_2=1}^\infty \frac{b_{t_1+1} b_{t_2+1}}{t_1!t_2!} \left \langle \phi^{t_1}(x) \phi^{t_2} (y) \right \rangle. 
\end{align*}
The right hand side are correlation functions of a free field $\phi$. By Wick's theorem and using \cref{pos_tadpole}, the terms contributing to the 2-point function consist of an arbitrary number of edges between the two spacetime points $x,y$. Especially, as they do not involve any other vertex than these two, they can be interpreted as Feynman diagrams on two external vertices,  see graph $A$ in \cref{fig_posspace}. There are $t_1!$ different Wick contractions for each summand and
\begin{align}\label{posspace_2point}
\left \langle \rho(x) \rho(y) \right \rangle &= \sum_{t_1=1}^\infty  \frac{b_{t_1+1}^2 }{t_1!t_1! } t_1!G_F^{t_1} (x-y) = \sum_{t=1}^\infty  \frac{b_{t+1}^2 }{t! } G_F^{t} (x-y) . 
\end{align}
Conversely, this position-space representation of the 2-point-function allows to compute massless multiedges via Fourier transform as done in \cite{bollini_dimensional_1996}.

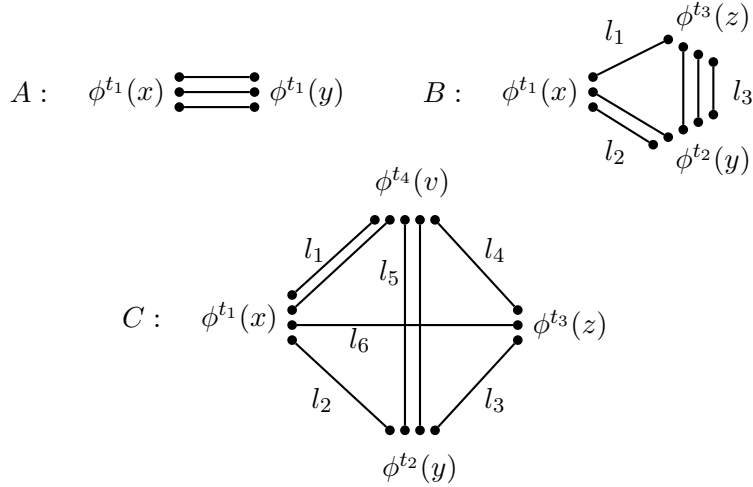
\begin{figure}[htbp]
	\centering
	\begin{tikzpicture}
	
	\node at (-1,0) {$A:$};
	
	\node at (.3,0){$\phi^{t_1}(x)$};
	
	\node [point] (x1) at (1,.2) {};
	\node [point] (x2) at (1,0) {};
	\node [point] (x3) at (1,-.2) {};
	
	\node at (2.7,0) {$\phi^{t_1}(y)$};
	\node [point] (y1) at (2,.2) {};
	\node [point] (y2) at (2,0) {};
	\node [point] (y3) at (2,-.2) {};
	
	\draw   (x1)--(y1);
	\draw   (x2)--(y2);
	\draw   (x3)--(y3);

	\node at (4.5,0) {$B:$};
	
	\node at (5.8,0) {$\phi^{t_1}(x)$};
	
	\node [point] (x1) at (6.5,-.2) {};
	\node [point] (x2) at (6.5,0) {};
	\node [point] (x3) at (6.5,.2) {};
	
	\node at (8.1,1) {$\phi^{t_3}(z)$};
	
	\node [point] (y1) at (7.5,.7) {};
	\node [point] (y2) at (7.7,.6) {};
	\node [point] (y3) at (7.9,.5) {};
	\node [point] (y4) at (8.1,.4) {};
	
	\node at (8.1,-.9) {$\phi^{t_2}(y)$};
	
	\node [point] (z1) at (7.3,-.7) {};
	\node [point] (z2) at (7.5,-.6) {};
	\node [point] (z3) at (7.7,-.5) {};
	\node [point] (z4) at (7.9,-.4) {};
	\node [point] (z5) at (8.1,-.3) {};
	
	\draw  (x3)--(y1);
	\draw  (x2)--(z2);
	\draw  (x1)--(z1);
	\draw  (y2)--(z3);
	\draw  (y3)--(z4);
	\draw  (y4)--(z5);
	
	\node at (6.8,-.8) {$l_2$};
	\node at (6.8,.8) {$l_1$};
	\node at (8.5,0) {$l_3$};

	\node at (.5,-3) {$C:$};
	
	\node at (1.8,-3) {$\phi^{t_1}(x)$};
	
	\node [point] (x1) at (2.5,-2.7) {};
	\node [point] (x2) at (2.5,-2.9) {};
	\node [point] (x3) at (2.5,-3.1) {};
	\node [point] (x4) at (2.5,-3.3) {};
	
	\node at (4.1,-1.2) {$\phi^{t_4}(v)$};
	
	\node [point] (y1) at (3.6,-1.7) {};
	\node [point] (y2) at (3.8,-1.7) {};
	\node [point] (y3) at (4,-1.7) {};
	\node [point] (y4) at (4.2,-1.7) {};
	\node [point] (y5) at (4.4,-1.7) {};
	
	\node at (6.2,-3.1) {$\phi^{t_3}(z)$};
	
	\node [point] (z1) at (5.5,-2.9) {};
	\node [point] (z2) at (5.5,-3.1) {};
	\node [point] (z3) at (5.5,-3.3) {};
	
	\node at (4.2,-5) {$\phi^{t_2}(y)$};
	
	\node [point] (v1) at (3.8,-4.5) {};
	\node [point] (v2) at (4,-4.5) {};
	\node [point] (v3) at (4.2,-4.5) {};
	\node [point] (v4) at (4.4,-4.5) {};
	
	\draw (x1) -- (y1);
	\draw (x2) -- (y2);
	\draw (y3) -- (v2);
	\draw (y4) -- (v3);
	\draw (y5) -- (z1);
	\draw (x3) -- (z2);
	\draw (x4) -- (v1);
	\draw (v4) -- (z3);
	
	\node at (2.8,-2.1) {$l_1$};
	\node at (2.9,-4.1) {$l_2$};
	\node at (5.2,-4.1) {$l_3$};
	\node at (5.2,-2.1) {$l_4$};
	\node at (3.8,-2.4) {$l_5$};
	\node at (3.4,-3.3) {$l_6$};
	
	\end{tikzpicture}
	\caption{Contributions to connected correlation functions in position space. Each dot represents a factor of $\phi$. Graphs where dots of the same monomial $\phi^j$ (i.e. same spacetime-point) are connected, are excluded due to \cref{pos_tadpole}. $A:$ 2-point function, $B:$ 3-point function, $C:$ 4-point function.}
	\label{fig_posspace}
\end{figure}

The 3-point function is sketched in graph $B$ in \cref{fig_posspace} and can be written as
\begin{align*}
\left \langle \rho(x) \rho(y) \rho(z) \right \rangle &= \sum_{t_1=1}^\infty \sum_{t_2=1}^\infty \sum_{t_3=1}^\infty \frac{b_{t_1+1} b_{t_2+1} b_{t_3+1}}{t_1! t_2! t_3!} \left \langle \phi^{t_1}(x) \phi^{t_2} (y) \phi^{t_3}(z)\right \rangle. 
\end{align*}
Let $l_1, l_2, l_3$ be the number of propagators between the points, then
\begin{align*}
&t_1 = l_1+l_2, \qquad t_2= l_2 + l_3, \qquad t_3= l_1 + l_3 \\
\Leftrightarrow \quad &l_1 = \frac 12 (t_1+t_3-t_2), \quad l_2 = \frac 12 (t_1+t_2-t_3), \quad l_3 = \frac 12 (t_2+t_3-t_1).
\end{align*}
We can rewrite the sums over $t_i$ in terms of $l_i$ where $l_i \geq 0$ under the condition that  $t_j\geq 1$.   After working out the combinatoric prefacotrs, the non-tadpole part of the  3-point function reads 
\begin{align*}
\left \langle \rho(  x)\rho(  y) \rho( z)\right \rangle &= \sum_{\stackrel{l_j\in \mathbbm N_0}{l_1+l_2+l_3\geq 2}}  \frac{b_{t_1+1}b_{t_2+1}b_{t_3+1}}{l_1! l_2! l_3! } G_F^{l_1}(  x-  y)^{l_2} G_F (  y-  z) G_F^{l_3} (  x- z).
\end{align*}
This structure continues in all $n$-point functions, see e.g. $C$ in \cref{fig_posspace}. For the 4-point-function, the relations between $t_j$ and $l_j$ read
\begin{align*}
t_1 &= l_1+l_2+l_6, \qquad t_2 = l_2+ l_3 + l_5 , \qquad t_3 = l_3 + l_4 + l_6, \qquad t_4 = l_1+l_4+l_5.
\end{align*}
Summing over all permutations, one then obtains
\begin{align}\label{pos_4point}
\left \langle \rho(x) \rho(y) \rho(z) \rho(v)\right \rangle &= \sum_{\stackrel{l_1, \ldots, l_6\in \mathbbm N_0}{t_j \geq 1\; \forall j}} \frac{b_{t_1+1} b_{t_2+1}  b_{t_3+1}  b_{t_4+1}}{l_1! l_2!l_3!l_4!l_5!l_6!} G_F^{l_1} (x-v) G_F^{l_2}(x-y) \cdots G_F^{l_6}(x-z).
\end{align}
The requirement $t_j\geq 1$ ensures that only complete contractions, that is terms without any disconnected spacetime point, appear. Consequently, we obtain a complete correlation function, not just a connected one. For example $l_1=1, l_3=1$ and all other $l_j$ zero is a valid contribution to \cref{pos_4point}, yet it is not a connected correlation function since it is actually a product of two 2-point-functions.

\begin{theorem}\label{thm_npoint_position}
	If tadpoles vanish, then the complete $n$-point amplitude in position space is 
	\begin{align*}
	\left \langle \rho(x_1) \cdots \rho(x_n)\right \rangle = \sum_{\stackrel{l_1, \ldots, l_k\in \mathbbm N_0}{t_j \geq 1 \; \forall j}} \frac{b_{t_1+1} \cdots b_{t_n+1}}{l_1! \cdots l_k!}G_F^{l_1}(x_1-x_2)G_F^{l_2}(x_1-x_3)\cdots G_F^{l_k}(x_{n-1}-x_n)
	\end{align*}
	where $k=\frac{n(n-1)}{2}$ is the number of ways to form pairs $x_i-x_j$.  $t_j$ are sums of $n-1$ indices $l_i$ labelling the $n-1$ edges incident to a vertex $j$ in a completely connected graph on $n$ vertices. Especially, each $l_i$ appears in precisely two distinct $t_j$ and each pair $\{t_i,t_j\}$ shares precisely one $l_i$.
\end{theorem}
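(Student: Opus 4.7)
The plan is to generalize the pattern exhibited in the $n=2,3,4$ examples worked out in the preceding pages. The three ingredients are: expanding each field operator via the inverse diffeomorphism \cref{def_diffeomorphism_invers2}, applying Wick's theorem to the resulting free-field correlators, and carrying out the combinatorial count of contractions.

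First I would write
\begin{align*}
\langle \rho(x_1) \cdots \rho(x_n)\rangle = \sum_{t_1,\ldots,t_n \geq 1} \prod_{j=1}^n \frac{b_{t_j+1}}{t_j!} \, \bigl\langle \phi^{t_1}(x_1) \cdots \phi^{t_n}(x_n) \bigr\rangle.
\end{align*}
For the free-field expectation value Wick's theorem gives a sum of perfect matchings of the $t_1 + \cdots + t_n$ operators. Any matching pairing two operators at the same point $x_j$ contributes a factor $G_F(0)$ and is discarded by the tadpole hypothesis \cref{pos_tadpole}. Hence only matchings producing a multigraph on the $n$ labelled vertices $x_1,\ldots,x_n$ with $l_{ij}$ edges between $x_i$ and $x_j$ (and no loops) survive.

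Next I would count the number of Wick matchings that produce a fixed multigraph $\{l_{ij}\}_{i<j}$ with prescribed vertex degrees $t_j = \sum_{i\neq j} l_{ij}$. At each vertex $j$ one first partitions the $t_j$ operators according to their destination vertex: $t_j!/\prod_{i\neq j} l_{ij}!$ ways. Then for each unordered pair $\{i,j\}$ one pairs the $l_{ij}$ operators at $x_i$ destined for $x_j$ with those at $x_j$ destined for $x_i$: $l_{ij}!$ ways. Since each $l_{ij}!$ in the denominator is double-counted over the outer product on $j$, the total number of matchings simplifies to
\begin{align*}
\frac{\prod_j t_j!}{\prod_{i<j} l_{ij}!}.
\end{align*}
Each matching contributes $\prod_{i<j} G_F(x_i - x_j)^{l_{ij}}$, and the $t_j!$ in the numerator cancels the $1/t_j!$ from the expansion of $\rho$.

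Finally I would re-index the sum: instead of summing over $(t_1,\ldots,t_n)$ and over multigraphs realizing these degrees, sum directly over the $k = n(n-1)/2$ edge multiplicities $l_{ij} \in \mathbbm{N}_0$ and define $t_j = \sum_{i\neq j} l_{ij}$. The condition $t_j \geq 1$ for all $j$ (which prevents any $\phi^0(x_j)$ factor from appearing and thus ensures all spacetime points are present) carries over verbatim. Relabelling the $l_{ij}$ by a single index $i \in \{1,\ldots,k\}$ enumerating the pairs yields exactly the stated formula. The only genuinely delicate step is the combinatorial count in the middle paragraph, specifically verifying that the double-counting of $l_{ij}!$ produces the clean quotient $\prod_j t_j! / \prod_{i<j} l_{ij}!$; everything else is a bookkeeping exercise. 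The final sentence of the theorem is immediate from the construction, since each $l_{ij}$ is incident to exactly the two vertices $i$ and $j$, and each unordered pair $\{i,j\}$ indexes exactly one multiplicity $l_{ij}$.
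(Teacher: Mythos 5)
Your proposal is correct and follows essentially the same route as the paper's own proof: expand each $\rho(x_j)$ via the inverse diffeomorphism, apply Wick's theorem, discard same-point contractions by the tadpole hypothesis, and count the matchings realizing a fixed multigraph as $\bigl(\prod_j t_j!/\prod_{i\neq j} l_{ij}!\bigr)\cdot\prod_{i<j} l_{ij}! = \prod_j t_j!/\prod_{i<j} l_{ij}!$, which is exactly the paper's product of binomials times the $l_i!$ permutations of parallel strands. Your phrasing of the pairing step (bijections between the $l_{ij}$ designated operators at each end of an edge) is, if anything, slightly more explicit than the paper's.
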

\begin{proof}
	By Wick's theorem, the complete amplitude is a sum of all possible contractions. It remains to show that the combinatorial factors take the claimed form. 
	
	Consider a completely connected graph on $n$ vertices and choose an arbitrary labeling of its $\frac{n(n-1)}{2}=:k$   edges with non-negative integers $\{l_i\}$. Each vertex $j$ is incident to precisely $n-1$ edges, the sum of these edge labels defines the corresponding $t_j$. Each edge connects two vertices, so each edge label $l_i$ contributes to precisely two distinct $t_j$. Also, each pair of vertices $\{i,j\}$ is connected by precisely one edge, hence the corresponding $\{t_i,t_j\}$ have precisely one edge label $l_i$ in common.  
	
	The diffeomorphism \cref{def_diffeomorphism_invers2} implies $t_j\geq 1$ for each $j$, hence the summation over $\{l_i\}$ must ensure these conditions but is otherwise unconstrained and $l_i=0$ is allowed. Now fix some $j$ and examine the vertex $j$. It has connections to $(n-1)$ remaining vertices with, say, multiplicities $l_1, \ldots, l_{n-1}$. There are $\binom{t_j}{l_1}$ ways to choose the $l_1$ strands leading to the first neighbor vertex, what remains are $t_j-l_1$ unused points at $j$. So there are $\binom{t_j-l_1}{l_2}$ choices to connect to the second neighbor vertex and so on. The same holds true for any other vertex, so there arises an overall symmetry factor of
	\begin{align*}
	&\binom{t_1}{l_1}\binom{t_1-l_1}{l_2}\cdots \binom{l_{n-1}}{l_{n-1}} \cdot \binom{t_2}{l_1} \binom{t_2-l_1}{l_n}\cdots \binom{l_{2n-3}}{l_{2n-3}} \cdots  
	&=  \frac{t_1! \cdots t_n!}{l_1! l_1! l_2! l_2! \cdots l_k! l_k!}.
	\end{align*}
	This factor counts the possibilities to permute how vertices are connected, but not the additional possibilities to permute the $l_i$ internal lines of such a connection, which induces another factor $l_i!$ for each connection. Finally, each vertex $j$ comes with a factor $\frac{b_{t_j+1}}{t_j!}$ and hence the overall factor of the summands is as claimed.
	
	Each index $l_j$ counts the number of Wick contractions between the corresponding pair $\{i,j\}$ of vertices. This translates to a factor $G_F^{l_j}(x_i-x_j)$ in the summand.
	
\end{proof}

Note that the combinatoric prefactor is excactly what one would expect for Feynman graphs where a $j$-valent vertex has the amplitude $\frac{b_j}{j!}$. These are the ordinary Feynman amplitudes in position space. They do not involve integrals since integration would only be necessary for undeterminded inner points, not for loops as in momentum-space.

Note further that by $b_{t_j}$ depending on $(n-1)$ of the indices $l_j$, the $k$ sums in \cref{thm_npoint_position} are not independent from each other as long as the coefficients $\{b_{t_j}\}$ are unspecified.

\subsection{Interpretation}\label{sec:pos_interpretation}
The position-space correlation functions as computed from Wick's theorem in \cref{sec:positionspace_func} allow for a very transparent interpretation of the momentum-space Feynman rules of the connected perspective \cref{thm_feynmanrules}. Namely, the fact that the metavertices cancel an adjacent propagator in momentum space is equivalent to this vertex being an external (i.e. at the position of an argument of the $n$-point function), not an inner (i.e. at an unspecified position to be integrated over) vertex in position space. To see this, consider first the $n$-valent metavertex $iV_n$ in momentum space, including its adjacent propagators. This is supposed to be the tree-level contribution to the connected $n$-point amplitude. We set $s_1 \neq 0$ and $s_2 = \ldots = s_n=0$. 
\begin{align}\label{npoint_mom}
 \langle \rho(p_1) \cdots \rho(p_n)\rangle  &= -i b_n s_1 \frac{i^n}{s_1 \cdots s_n}  =  b_n \prod_{j=2}^n \frac{i}{s_j} .
\end{align}
Now consider the summand in the connected $n$-point amplitude in position space which is proportional to $b_n$. By \cref{thm_npoint_position}, it is the summand where all $b_{t_j+1}$ except one are 1 and consequently $t_j=1$  . The sum over $\{l_j\}$ then collapses to $n$ terms, namely, each $l_j$ is either 0 or 1 such that the sum of the $l_j$ is $n-1$. The term corresponding to \cref{npoint_mom} is the one where $t_1 = n-1$, 
\begin{align}\label{npoint_pos}
\langle \rho(x_1) \cdots \rho(x_n)\rangle  &= \frac{b_n}{1}\prod_{j=2}^n G_F^1 (x_j-x_1)  .
\end{align}
Such terms can be matched one by one to corresponding terms in \cref{npoint_mom}: The term with position-space propagators originating from one position $x_j$ corresponds to the term where the momentum-space propagator $\frac{i}{s_j}$ is cancelled, as can be seen by an explicit Fourier transform  of \cref{npoint_pos}, using the momentum-space propagators \cref{propagators}:
\begin{align*}
\langle \rho(p_1) \cdots \rho(p_n)\rangle &=\prod_{k=1}^n \int  \d^4 x_k   \; b_n \prod_{j=2}^n G_F(x_j-x_1)   \prod_{i=1}^n e^{ip_i x_k} =b_n \delta(p_1 + p_2 + \ldots + p_n) \prod_{j=2}^n  \frac{i}{s_j}    .
\end{align*}
The result equals \cref{npoint_mom} up to an overall delta function, which is not written in the momentum space functions by convention. The relation between both terms is represented graphically in the first row of \cref{fig_pos_mom}.

\begin{figure}[htbp]
	\centering
	\begin{tikzpicture}

	\node at (0,2.5) {position-space:};

\node [] at (-5,0) {$\propto b_5$};
	
	\node [] (c) at (0,0){};
	
	\node [label=left:$\phi^4(x_1)$] (x1) at ($ (c)+(180:1.5)$) {};
	\node [point] (x12) at ($ (x1)+(0,.3)$) {};
	\node [point] (x13) at ($ (x1)+(0,.1)$) {};
	\node [point] (x14) at ($ (x1)+(0,-.1)$) {};
	\node [point] (x15) at ($ (x1)+(0,-.3)$) {};
	
	\node [point,label=below:$\phi(x_5)$] (x5) at ($ (c)+(252:1.5)$) {};
	\node [point,label=right:$\phi(x_4)$] (x4) at ($ (c)+(324:1.5)$) {};
	\node [point,label=right:$\phi(x_3)$] (x3) at ($ (c)+(396:1.5)$) {};
	\node [point,label=above:$\phi(x_2)$] (x2) at ($ (c)+(468:1.5)$) {};
	
	\draw (x12) -- (x2);
	\draw (x13) -- (x3);
	\draw (x14) -- (x4);
	\draw (x15) -- (x5);

	\node at (6,2.5) {momentum-space:};
	\node[treeVertex] (c) at (6,0){};
	\draw [>-] (c) -- ++ (180:1) node [label=left:$p_1$]{};
	\draw [-] (c)-- ++(252:1) node [label=below:$p_5$]{};
	\draw [-] (c)-- ++(324:1) node [label=right:$p_4$]{};
	\draw [-] (c)-- ++(396:1) node [label=right:$p_3$]{};
	\draw [-] (c)-- ++(468:1) node [label=above:$p_2$]{};

	\node [] at (-5,-5) {$\propto b_3 b_4$};
	
	\node [] (c) at (0,-5){};
	
	\node [label=left:$\phi^3(x_1)$] (x1) at ($ (c)+(180:1.5)$) {};
	\node [point] (x12) at ($ (x1)+(0,.2)$) {};
	\node [point] (x13) at ($ (x1)+(0,0)$) {};
	\node [point] (x14) at ($ (x1)+(0,-.2)$) {};
	
	\node [label=above:$\phi^2(x_2)$] (x2) at ($ (c)+(108:1.5)$) {};
	\node [point] (x22) at ($ (x2)+(-.1,-.05)$) {};
	\node [point] (x23) at ($ (x2)+(.1,.05)$) {};
	
	\node [point,label=below:$\phi(x_5)$] (x5) at ($ (c)+(252:1.5)$) {};
	\node [point,label=right:$\phi(x_4)$] (x4) at ($ (c)+(324:1.5)$) {};
	\node [point,label=right:$\phi(x_3)$] (x3) at ($ (c)+(396:1.5)$) {};

	\draw (x12) -- (x22);
	\draw (x13) -- (x4);
	\draw (x14) -- (x5);
	\draw (x23) -- (x3);

	\node[treeVertex] (c1) at (6,-5.3){};
	\node[treeVertex] (c2) at (6.2,-4.6){};
	
	\draw[-|-] (c1)--(c2);
	
	\draw [>-] (c1) -- ++ (180:1) node [label=left:$p_1$]{};
	\draw [-] (c1)-- ++(252:1) node [label=below:$p_5$]{};
	\draw [-] (c1)-- ++(324:1) node [label=right:$p_4$]{};
	\draw [-] (c2)-- ++(396:1) node [label=right:$p_3$]{};
	\draw [>-] (c2)-- ++(468:1) node [label=above:$p_2$]{};

	\node [] at (-5,-10) {$\propto b_6 b_4b_3$};
	
	\node [] (c) at (0,-10){};
	
	\node [label=left:$\phi^5(x_1)$] (x1) at ($ (c)+(180:1.5)$) {};
	\node [point] (x12) at ($ (x1)+(0,.4)$) {};
	\node [point] (x13) at ($ (x1)+(0,.2)$) {};
	\node [point] (x14) at ($ (x1)+(0,0)$) {};
	\node [point] (x15) at ($ (x1)+(0,-.2)$) {};
	\node [point] (x16) at ($ (x1)+(0,-.4)$) {};
	
	\node [label=above:$\phi^3(x_2)$] (x2) at ($ (c)+(108:1.5)$) {};
	\node [point] (x22) at ($ (x2)+(-.2,-.05)$) {};
	\node [point] (x23) at ($ (x2)+(0,0)$) {};
	\node [point] (x24) at ($ (x2)+(.2,.05)$) {};
	
	\node [label=right:$\phi^2(x_3)$] (x3) at ($ (c)+(36:1.5)$) {};
	\node [point] (x32) at ($ (x3)+(-.05,0)$) {};
	\node [point] (x33) at ($ (x3)+(.05,-.2)$) {};
	
	\node [point,label=below:$\phi(x_5)$] (x5) at ($ (c)+(252:1.5)$) {};
	\node [point,label=right:$\phi(x_4)$] (x4) at ($ (c)+(324:1.5)$) {};

	\draw (x12) -- (x22);
	\draw (x13) -- (x23);
	\draw (x14) -- (x33);
	\draw (x15) -- (x4);
	\draw (x16) -- (x5);
	\draw (x24) -- (x32);

	\node[treeVertex] (c1) at (6,-10.5){};
	\node[treeVertex] (c2) at (6,-9.4){};
	\node[treeVertex] (c3) at (7,-10){};
	
	\draw [-,bend angle=30, bend left] (c1) to (c2);
	\draw [-, bend angle=30, bend right] (c1) to (c2);
	\draw [-] (c1) -- (c3);
	\draw [-] (c2) -- (c3);
	
	\draw [>-] (c1) -- ++ (180:.8) node [label=left:$p_1$]{};
	\draw [-] (c1)-- ++(252:.8) node [label=below:$p_5$]{};
	\draw [-] (c1)-- ++(324:.8) node [label=right:$p_4$]{};
	\draw [>-] (c3)-- ++(10:.8) node [label=right:$p_3$]{};
	\draw [>-] (c2)-- ++(110:.8) node [label=above:$p_2$]{};

	\end{tikzpicture}
	\caption{Correspondence between momentum-space and position-space Feynman rules for selected graphs of the 5-point function. If a vertex cancels an external propagator in momentum space, it is turned from an internal to an external vertex in position space. A momentum $p_i$ is the Fourier transform of a position $x_i$. The edge which is cancelled by a metavertex is marked with an arrowhead.}
	\label{fig_pos_mom}
\end{figure}
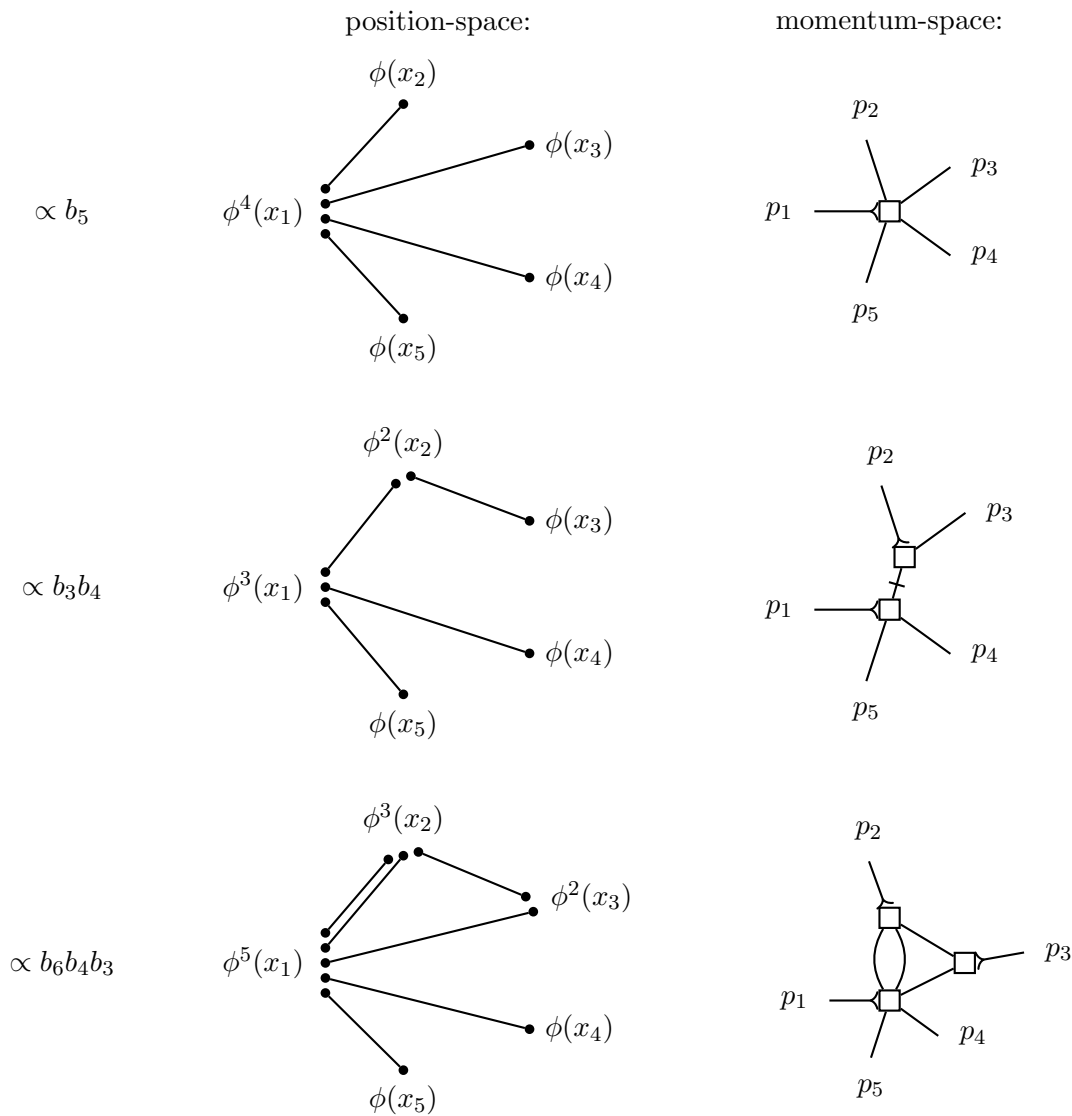

Their counterpart in position-space makes it completely obvious why the momentum-space Feynman rules in the connected perspective \cref{thm_feynmanrules} allow for maximal $n$ metavertices in a $n$-point amplitude: There are not more than $n$ points in position-space which could possibly be turned into metavertices. This argument can even be applied to loop graphs in momentum-space as shown in the bottom row in \cref{fig_pos_mom}. If a momentum-space amplitude contains loops, it involves integrals over internal momenta. All integrations in momentum-space are leftovers of the Fourier transform, their ultraviolet divergence corresponds to two of the  external  coordinates approacing each other.

\section{The 2-point function in momentum space}\label{sec:twopoint}

Using the Feynman rules of the connected perspective \cref{thm_feynmanrules}, the connected amplitude with two external momenta is supported on Feynman graphs with up to two metavertices, both of which are external. Excluding Tadpoles, the only remaining graph topology is that of $l$-loop multiedges $M^{(l)}$   where the vertices are $(l+2)$-valent metavertices $iV_{l+2} = -ib_{l+2} \cdot s$ cancelling the external propagator $\frac i s$, see \cref{fig_G2}. The two external propagators are not included. 
\begin{align}\label{G2}
G_2 (s)&=    -is  + \sum_{l=1}^\infty (-ib_{l+2}s)^2\frac{ M^{(l)}(s)}{(l+1)!}  =: -is \left(1+ G_2^{\text{fin}}(s) +  G_2^{\text{div}}(s) \right) + \mathcal O \left( \epsilon  \right).
\end{align}
Here we have separated the singular part of $G_2(s)$ where $\renop$ is defined in \cref{def_R}:
\begin{align}\label{G2div}
	G_2^{\text{div}}(s) &:=   \frac{i}{s}\renopk{G_2(s)}= -is \sum_{l=1}^\infty b_{l+2}^2\frac{M^{(l)}_{\text{div}} (s)}{(l+1)!} . 
\end{align}

\begin{figure}[htbp]
	\centering
	\begin{tikzpicture}

		\node at (-3.5,0) {$G_2(s) $};
		\node[anchor=west] at (-2.7,0) {$=$};
		
		\node (c1) at (-1.7,0) [] {};
		\node (c2) at (-1, 0) []{}; 
		\draw (c1) -- (c2);
		
		\node [anchor=west] at (-.6,0) {$+$};
		
		\node [treeVertex]  (c1) at (1,0) {};
		\node [treeVertex]  (c2) at (2.5,0) {};
		
		\draw [>-] (c1) -- + (180:.5);
		\draw [bend angle =30, bend left] (c1) to (c2);
		\draw [bend angle =30, bend right] (c1) to (c2);
		\draw [>-] (c2) -- + (0:.5);
		
		\node [anchor=west] at (3.5,0) {$+$};
		
		\node [treeVertex]  (c1) at (5,0) {};
		\node [treeVertex]  (c2) at (6.5,0) {};
		
		\draw [>-] (c1) -- + (180:.5);
		\draw [bend angle =45, bend left] (c1) to (c2);
		\draw  (c1) to (c2);
		\draw [bend angle =45, bend right ] (c1) to (c2);
		\draw [>-] (c2) -- + (0:.5);
		
		\node [anchor=west]at (7.6,0) {$+\quad \ldots$};
		
		\node [anchor = west] at (-2.7,-1){$=$};
		
		\node at (-1.5, -1) {$-is$};
		
		\node [anchor=west] at (-.6,-1) {$+$};
		
		\node at (1.8, -1) {$ (-ib_3 s)^2 \frac 1{2!} M^{(1)}(s)  $};
		
		\node [anchor=west]at (3.5,-1) {$+$};
		
		\node at (5.8, -1) {$ (-ib_4 s)^2 \frac 1{3!}M^{(2)}(s) $};
		
		\node [anchor=west]at (7.6,-1) {$+\quad \ldots$};

	\end{tikzpicture}
	\caption{The amputated connected two-point-amplitude $G_2$ in momentum-space with external momentum $s:= p^2$ in the connected perspective. Only multiedges contribute. The two external propagators are not included. Both metavertices cancel the external edges as indicated with arrowheads.}
	\label{fig_G2}
\end{figure}

\begin{example}[Massless theory]\label{ex_G2_massless}
	In all explicit examples, we consider the massless theory $s_p=p^2$ in $D=4-2 \epsilon$ dimensions. Using \cref{lem_Ml},
	\begin{align*}
		G_2(s) &= -is- \sum_{l=1}^\infty \frac{b_{l+2}^2}{(l+1)!} \frac{\left(- i  s \right) ^{l+1}}{(4\pi)^{2l}  \left( l! \right) ^2} \left(  \frac{1}{\epsilon }  -1 + (2l+1) H_l -l\gamma_E -l \ln s + l \ln (2\pi)   \right) +\mathcal O \left( \epsilon  \right).  
	\end{align*}
	The two constituents are 
	\begin{align}
		G_2^{\text{fin}} (s) &=   -\sum_{l=1}^\infty \frac{b_{l+2}^2}{(l+1)!} \frac{\left(- i  s \right) ^{l}}{  (4\pi)^{2l}\left( l! \right) ^2} \left(  (2l+1) H_l-1 -l\gamma_E + l \ln (4\pi) -l \ln s  \right) , \label{G2fin_massless} \\
		G_2^{\text{div}}(s) &=    -\sum_{l=1}^\infty \frac{b_{l+2}^2}{(l+1)!} \frac{\left(- i   s \right) ^{l}}{ (4\pi)^{2l} \left( l! \right) ^2} \frac 1 \epsilon  . \label{G2div_massless}
	\end{align}
	As announced in \cref{sec:pos_interpretation}, the divergent part $G_2^{\text{div}}$ corresponds to the coincidence of points in position space, i.e. $x=0$, as can be verified by compuing the Fourier-transform of the summands in \cref{G2div}, using e.g.\cite[pp. 155,163]{kanwal_generalized_2004}, which produces terms $\propto (i\square)^l \delta(x)$.
\end{example}

The following comments are in order:
\begin{itemize}
	\item Even if the multiedges $M^{(l)}$ are technically 1PI graphs, $G_2(s)$ is indeed  the amputated \emph{connected}, not the 1PI 2-point-function. The latter requires systematic use of 1PI counterterms and will be computed in \cref{sec_1PI_2point}.
	\item In the onshell limit $s\rightarrow 0$, $G_2(s)$ reproduces the free two-point function. The $S$-matrix is unaltered as claimed in  \cite{balduf_perturbation_2020}.
	\item One of $G^{\text{fin}}_2$ and $G^{\text{div}}_2$  can be chosen  freely by picking suitable coefficients $b_n$. If one of those functions is fixed, the other one is, too. Especially, to have $G_2 \equiv -is$ for all $s$ (i.e. a free 2-point function), one has to set all $b_j=0$ and consequently \emph{all} $n$-point-functions are free, such a theory is a free theory altogether. This is in accordance with the Jost-Schroer-Federbush-Johnson-Theorem  \cite{Jost-Schroer,federbush_uniqueness_1960,Pohlmeyer}.
	\item One could be tempted to choose $b_n \propto \sqrt{\epsilon}$ such that $  G_2^{\text{div}}(s)$ becomes a regular function, but this choice does not remove the divergences of higher $n$-point functions. It seems impossible to render the theory finite by choosing an \enquote{infinitesimal} diffeomorphism.
\end{itemize}

At least in the massless case, some higher correlation functions can in principle be computed  but the explicit results are not of immediate interest for the following sections. For illustration, we consider the 3-point-function in \cref{sec:triangles}. Here, we merely note that starting from two loop order, the 3-point and all higher correlation functions contain non-lokal divergences such as $ \frac{1}{\epsilon} \ln s_1$ in \cref{G33}. This indicates that, to remove these subdivergences, we have to extend the connected perspective in a way to also include counterterm vertices, which will be done in the following section.

\section{Divergences in the connected perspective}\label{sec_divergences}

In this chapter, we extend the formalism of the connected perspective (\cref{thm_feynmanrules}) to incorporate even counterterms, eventually giving rise to finite amplitudes. To this end, we define \enquote{meta-counterterms} $C_k$ which share the combinatorial properties of metavertices: they absorb all possible internal cancellations and appear in graphs without changing the graph topology. 

It turns out that the metacounterterms can be classified by three numbers $j,k,l$ corresponding to the graphs they arise from, namely $C_{n,k}^{(l)}$ cancels the superficial divergence of graphs with 
\begin{itemize}
	\item $n$ external legs
	\item $k \leq n$ external legs offshell (which implies precisely $k$ metavertices)
	\item $l$ loops. 
\end{itemize}
For fixed $n$, the $l$-grading also implies an overall order in momentum. To impose $k\leq n$, we definie
\begin{align*}
C_{n, k >n}^{(l)} &:=0 \qquad \forall n,k,l
\end{align*} 
since there are no graphs and consequently no divergences with $k>n$, i.e. more metavertices than external legs. All counterterms are in the minimal subtraction (MS) scheme in dimensional regularization, but remember that we demand vanishing of tadpoles. In massive theories, where tadpoles do not vanish automatically, they produce an additional contribution to the counterterms which we do not include explicitly.

In the connected perspective, by \cref{thm_feynmanrules} a metavertex must not cancel internal edges . The same restriction applies to meta-counterterms with the only difference that a meta-counterterm can cancel more than one of its adjacent edges.

\begin{theorem}\label{thm_metacounterterms}	
		In the connected perspective of a free field diffeomorphism, all amplitudes can be made finite if meta-counterterms are included according to the following rules:
	\begin{enumerate}
		\item  Proceed according to the BPHZ renormalization prescription, recursively replacing divergent subgraphs $\gamma \subset \Gamma$ by their corresponding meta-counterterm which subtracts the divergence. Finally, remove the superficial divergence.
		\item The meta-counterterm $C_{n,k}$ is inserted in place of a graph on $k$ metavertices, it cancels $k$ out of its $n$ adjacent edges simultaneously. 
		\item No internal edge of a graph must be cancelled, neither by a metavertex nor by a meta-counterterm.
		\item There are neither internal metavertices nor internal meta-counterterms.
	\end{enumerate}
	
\end{theorem}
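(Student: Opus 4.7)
The plan is to reduce the claim to standard BPHZ applied in the \emph{ordinary perspective}, where vertices are $iv_n$ as in \cref{vn_general}, and then repackage the resulting counterterms diagrammatically. By construction the connected-perspective Feynman rules of \cref{thm_feynmanrules} are derived from the ordinary ones through the algebraic identity \cref{ordinary_recursive}, and so connected amplitudes coincide between the two perspectives. Since the ordinary perspective is a conventional (albeit vertex-rich) scalar QFT organized by the loop number $l$, BPHZ renormalization applies order by order in $l$. All that remains is to exhibit, for every divergent connected subgraph, a local counterterm that can be written as a meta-counterterm $C_{n,k}^{(l)}$ obeying rules (2)--(4).

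First I would show that the superficial divergence of any connected non-tadpole graph $\Gamma$ with $n$ external legs, $k$ external metavertices and $l$ loops is a polynomial in precisely the $k$ offshell variables $s_e$ attached to those metavertices. The argument is a lift of the passage $iv_n \mapsto iV_n$ carried out in the derivation of \cref{Vn}: the cancellation \cref{ordinary_recursive} between a vertex and a pair of joined vertices propagates through any divergent subgraph, because only the momentum-dependence of the outermost $iv$-factors enters the cancellation, not the loop integrals inside. After this reorganization, the surviving superficial divergence depends on external offshell variables only through the $k$ external-metavertex legs. This defines $C_{n,k}^{(l)}$ and establishes rule (2).

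I would then install the $C_{n,k}^{(l)}$ into the BPHZ recursion and verify the remaining rules inductively in $l$. At loop order one, the superficial divergence of every primitive graph is by the previous step already of the desired form. At higher order, subdivergences have been removed by $C_{n',k'}^{(l')}$ with $l'<l$ that by induction respect rules (2)--(4); the remaining superficial divergence is then local and, by the same cancellation identity, again couples only to the external-metavertex edges of the reduced graph. Rules (3) and (4) become transparent in the position-space picture of \cref{sec:pos_interpretation} and \cref{thm_npoint_position}: a metavertex corresponds to an \emph{external} position-space coordinate, so a subgraph with $k$ metavertices is attached to exactly $k$ external position-space points, and shrinking it to its counterterm produces cancellations only on the corresponding $k$ external edges of the parent amplitude.

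The main obstacle I expect is the compatibility of rules (3) and (4) with BPHZ nesting: \emph{a priori}, a divergent subgraph could be connected to the rest of the diagram through what are, from the parent graph's viewpoint, internal edges, and then a naive counterterm insertion might try to cancel one of them, violating rule (3). Ruling this out is exactly where the position-space correspondence of \cref{sec:pos_interpretation} is crucial: the $k$ metavertices in $\gamma$ sit at external coordinates of the parent graph to begin with, so no meta-counterterm can cancel an edge that would be internal in the parent. Rule (4) is then a corollary, since an internal meta-counterterm would be forced to cancel all of its adjacent edges, which are internal, contradicting rule (3); such contributions must therefore be absorbed into the ordinary vertices $iv_n$ and do not appear as independent $C_{n,k}^{(l)}$.
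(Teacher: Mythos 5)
Your overall architecture (BPHZ plus a compatibility check with the combinatorial rules) matches the paper's, but the load-bearing step is wrong as stated and the real subtlety is not addressed. You claim that the superficial divergence of a graph with $k$ external metavertices is ``a polynomial in precisely the $k$ offshell variables $s_e$ attached to those metavertices.'' It is not: already $C_{4,2}^{(l)}$ in \cref{C42l} contains a factor $M^{(l)}_{\text{div}}(s_{1+2})\propto s_{1+2}^{\,l-1}$ (see \cref{Mldiv}), i.e.\ positive powers of the offshell variable of a nontrivial \emph{partition} of the external momenta, not of the individual legs. The content of the theorem is therefore not that only the $k$ variables $s_e$ appear, but that the extra momentum powers forced by a positive superficial degree of divergence cannot produce any \emph{additional edge cancellations}. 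The paper's proof spends its final paragraph exactly on this: the new powers can only depend on the total momentum $s_{1+\cdots+j}$ entering the graph at each metavertex, and such a factor does not cancel any single adjacent propagator; in the degenerate case where a metavertex has only one external leg, the cancellation it could induce is the one already mandated by \cref{thm_feynmanrules}. Your ``lift'' of \cref{ordinary_recursive} through the loop integrals does not substitute for this, because \cref{ordinary_recursive} is an identity among tree-level vertex factors and says nothing about the momentum dependence generated by the integrations themselves.

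The second problem is your resolution of the rule-(3) compatibility. You argue from the position-space picture that ``no meta-counterterm can cancel an edge that would be internal in the parent.'' But $C_{n,k}$ is by definition a symmetric sum over all assignments of the $k$ cancellations to its $n$ legs, so upon insertion into a parent graph it \emph{does} contain orientations that would cancel internal edges; rule (3) is the instruction to discard them, and the nontrivial claim — which the paper proves in its ``$\Leftarrow$'' step and illustrates in \cref{ex_subdivergence} — is that the surviving orientations are in bijection with the divergent subgraphs that actually occur in the connected perspective (e.g.\ the orientation of $C_{4,2}^{(2)}$ in which both cancelled legs meet the same metavertex is absent from the sum precisely because the corresponding graph $\Gamma_2$ has an internal metavertex and does not exist). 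Presenting this as automatic from position space inverts the logic of the proof. Your derivation of rule (4) from rule (3) is fine and agrees with the paper's remark that (4) is redundant.
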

Like in \cref{thm_feynmanrules}, Point (4) is implied by the first three and kept for clarity
\begin{proof}
	For the general procedure to render Feynman integrals finite, we can ignore the physical interpretation of the connected perspective and just apply   the well-known procedure of BPHZ  renormalization \cite{bogoliubow_ueber_1957,hepp_proof_1966,zimmermann_convergence_1969,weinberg_highenergy_1960,kreimer_hopf_1998,kreimer_combinatorics_2002}. That is, in an $l$-loop amplitude one first has to subtract all subdivergences which arise from graphs with less than $l$ loops and finally the superficial divergence, which is guaranteed to be local. 
	
	To be shown is the compatibility of this procedure with the combinatorial restrictions of the connected perspective, i.e. that there is an 1:1 correspondence between divergent subgraphs and meta-counterterms which are inserted as stated in the theorem.
	
	$\Rightarrow$: If $\gamma \subset \Gamma$ is a divergent subgraph with $n$ legs and $k$ metavertices, then all $k$ metavertices are external in $\Gamma$ and $\gamma$ is adjacent to at least $k$ external legs of $\Gamma$. At the same position a meta-counterterm $C_{n,k}$ can be inserted since at least $k$ external edges are available to be cancelled.
	
	$\Leftarrow$: A meta-counterterm $C_{n,k}$ is, amongst others, a sum over all different ways to assign the $k$ cancellations to its $n\geq k$ legs. If a leg $e$ is cancelled, then it is adjacent to a metavertex of the graph whose divergence contributes to $C_{n,k}$. Consequently, when inserted into a graph $\Gamma$, a meta-counterterm is a sum of counterterms, some of which would cancel internal edges of $\Gamma$. By the theorem, we must not include the terms which cancel internal edges. This means that only those terms in $C_{n,k}$ remain which are divergences of allowed graphs in the connected perspective. 
	
	Finally,    if $\Gamma$ is a divergent graph with $n$ external edges and $k$ metavertices and superficial degree of divergence greater than zero, then the amplitude of $\Gamma$ contains some new positive power of momenta $s^j$ which did not arise from metavertices.
	One might think that this factor invalidates the above discussion since it is in principle arbitrary which momentum is chosen as a scale variable, thus introducing an arbitrary edge cancellation. 
	
	This ist not so, because in fact $s$ can not be choosen freely amongst the external momenta of $\gamma$. It can only depend on the \emph{total} momenta entering the graph $\gamma$ at metavertices. Let $V$ be a metavertex adjacent to $j>1$ external legs $s_1, \ldots, s_j$. Then $s$ can only depend on $s_{1+\ldots + j}$, not on any of the individual momenta. But a factor of $s_{1+\ldots + j}$ does not cancel any adjacent propagator, therefore it does not influence the above discussion. Now let $V$ be a metavertex adjacent to only a single external leg $s_1$. Then it is possible to choose $s=\propto s_1$ and $s$ cancels the adjacent edge. But that edge is cancelled anyway since $V$ needs to cancel one of its external legs by \cref{thm_feynmanrules}. So again, the overall momentum scale can not introduce any new cancellations.  
\end{proof}

\begin{example} (3-point-function)\label{ex_subdivergence}
	The requirement of not cancelling internal edges automatically selects the correct parts of the meta-counterterms.  Consider the three-loop graph $\Gamma_1$ shown in \cref{fig_ren_3point_gamma}. It has a quadratic subdivergence $\gamma\subset \Gamma_1$. This subdivergence is removed by the counterterm graph $\tilde \Gamma$ where a meta-counterterm $C_{4,2}^{(2)}$ is inserted into the cograph $\frac{\Gamma_1}{\gamma}$. On the other hand, the graph $\Gamma_2$ amounts to a different orientation of $\gamma$ in the same cograph. However, $\Gamma_2$ is not present in the connected perspective since it has an internal metavertex. This restriction is automatically respected by the meta-counterterm $C_{4,2}^{(2)}$: When cancelling two edges, only those graphs contribute to $C_{4,2}^{(2)}$ where said edges are incident to two distinct metavertices, see \cref{fig_ren_4point_2loop}. If we label the edges of $C_{4,2}^{(2)}$ as $1,2,3,4$, then the graphs shown in \cref{fig_ren_4point_2loop} are $\propto s_{1+2}$ or permutations thereof, but not $\propto s_{j}$ where $j\in \left \lbrace 1,2,3,4 \right \rbrace $. The only cancellations stem from the metavertices.
\end{example}

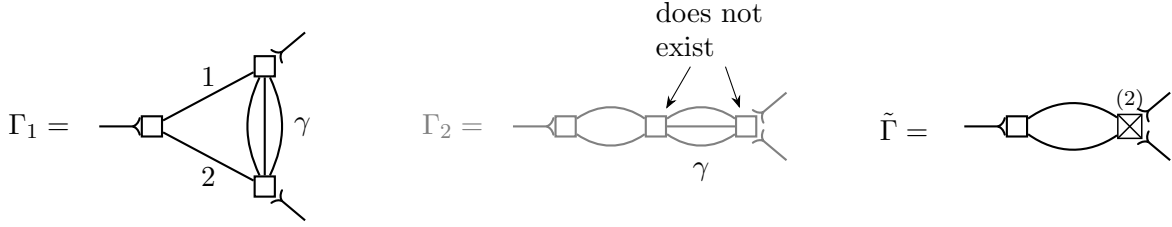
\begin{figure}[htbp] 
	\centering
	\begin{tikzpicture}

		\node at (.5,0) {$\Gamma_1  = $};
		\node [treeVertex] (c1) at (2,0){};
		\node [treeVertex] (c2) at (3.5,.8){};
		\node [treeVertex] (c3) at (3.5,-.8){};
		\draw (c1) --  (c2) node[pos=.5,above]{ 1};
		\draw (c1) -- (c3) node[pos=.5,below]{ 2};
		\draw [bend angle = 25, bend left](c2) to (c3);
		\draw (c2) -- (c3);
		\draw [bend angle = 25, bend right](c2) to (c3);
		\draw [>-](c1) -- + (180:.7);
		\draw [>-] (c2) -- + (40:.7);
		\draw [>-] (c3) -- + (-40:.7);

		\node   at (4,0) {$\gamma$};
		
		\node [gray] at (6,0) {$\Gamma_2  = $};
		\node [gray, treeVertex] (c1) at (7.5,0){};
		\node [gray, treeVertex] (c2) at (8.7,0){};
		\node [gray, treeVertex] (c3) at (9.9,0){};
		\draw [gray, bend angle = 35, bend left](c1) to (c2);
		\draw [gray, bend angle = 35, bend right](c1) to (c2);
		\draw [gray, bend angle = 35, bend left](c2) to (c3);
		\draw [gray](c2) -- (c3);
		\draw [gray, bend angle = 35, bend right](c2) to (c3);
		\draw [gray, >-](c1) -- + (180:.7);
		\draw [gray, >-] (c3) -- + (40:.7);
		\draw [gray, >-] (c3) -- + (-40:.7);

			\node [text width=1.5cm](tx) at ($(c2)+(60:1.5)$) {does not  exist};
		\draw [thin, -Stealth, bend angle = 20,shorten >=1mm ] (tx) to (c2);
		\draw [thin, -Stealth, bend angle = 20,shorten >=1mm ] (tx) to (c3);
		
		\node   at (9.3,-.6) {$\gamma$};
		
		\node at (12,0) {$\tilde \Gamma  = $};
		\node [treeVertex] (c1) at (13.5,0){};
		\node [treeCounterVertex, label={[label distance=-1mm]90:{$\scriptstyle (2)$} }] (c2) at (15,0){};
		\draw [bend angle = 35, bend left](c1) to (c2);
		\draw [bend angle = 35, bend right](c1) to (c2);
		\draw [>-](c1) -- + (180:.7);
		\draw [>-] (c2) -- + (40:.7);
		\draw [>-] (c2) -- + (-40:.7);

	\end{tikzpicture}
	\caption{Three-loop contributions to the 3-point function. They are two different ways to insert the divergent subgraph $\gamma$, only $\Gamma_1$ contributes in the connected perspective. The meta-counterterm is drawn as a crossed out metavertex. }
	\label{fig_ren_3point_gamma}
\end{figure}

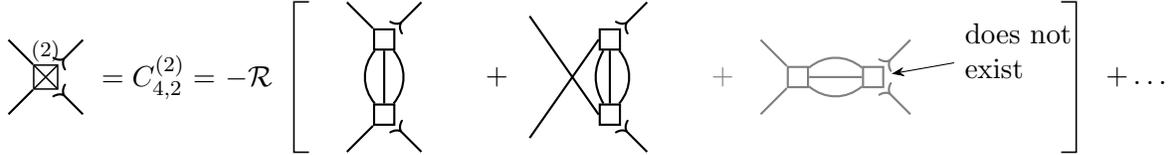
\begin{figure}[htbp] 
	\centering
	\begin{tikzpicture}
		
		\node [treeCounterVertex, label={[label distance=-1mm]90:{$\scriptstyle (2)$} }] (c) at (-1.5,0) {};
		\draw  [>-](c) --++ (45:.7);
		\draw  (c) --++ (135:.7);
		\draw  (c) --++ (225:.7);
		\draw [>-] (c) --++ (315:.7);
		
		\node at (0,0) {$=C_{4,2}^{(2)}=$};
		
		\node at (1.2,0) {$-\renop  $};
		
		\draw [semithick]  (2,-1) to [ncbar=.1](2,1);

		\node [treeVertex] (c1) at (3,-.5){};
		\node [treeVertex] (c2) at (3,.5){};
		\draw [bend angle =40, bend left] (c1) to (c2);
		\draw (c1) -- (c2);
		\draw [bend angle =40, bend right] (c1) to (c2);
		\draw (c2) -- + (135:.7);
		\draw [>-] (c2) -- + (45:.7);
		\draw [>-] (c1) -- + (-45:.7);
		\draw (c1) -- + (-135:.7);
		
		\node at (4.5,0) {$+$};
		
		\node [treeVertex] (c1) at (6,-.5){};
		\node [treeVertex] (c2) at (6,.5){};
		\draw [bend angle =30, bend left] (c1) to (c2);
		\draw (c1) -- (c2);
		\draw [bend angle =40, bend right] (c1) to (c2);
		\draw [bend angle =30 ] (c1.180) to + (125:1.6);
		\draw [>-] (c2) -- + (45:.7);
		\draw [>-] (c1) -- + (-45:.7);
		\draw [ bend angle = 30 ] (c2.180) to + (-125:1.6);
	
		\node [gray] at (7.5,0) {$+$};
		
		\node [gray, treeVertex] (c1) at (8.5,0){};
		\node [gray, treeVertex] (c2) at (9.5,0){};
		\draw [gray, bend angle =40, bend left] (c1) to (c2);
		\draw [gray] (c1) -- (c2);
		\draw [gray, bend angle =40, bend right] (c1) to (c2);
		\draw  [gray](c1) -- + (135:.7);
		\draw  [gray](c1) -- + (-135:.7);
		\draw  [gray,>-](c2) -- + (45:.7);
		\draw  [gray,>-](c2) -- + (-45:.7);
		
		\node [text width=1.5cm](tx) at ($(c2)+(10:2)$) {does not  exist};
		\draw [thin, -Stealth, bend angle = 20,shorten >=1mm ] (tx) to (c2);
		
		\draw [semithick] (12,-1) to [ncbar=-.1] (12,1);
		
		\node [] at (13,0) {$+\ldots$};

	\end{tikzpicture}
	\caption{Relevant part of   $C^{(2)}_{4,2}$. Indicated by arrows, the rightmost graph involves a metavertex which cancels two adjacent edges simultaneously, and another one which cancels no adjacent edge. By \cref{thm_feynmanrules}, such metavertices do not exist. Consequently, this graph does not contribute to $C_{4,2}^{(2)}$.}
	\label{fig_ren_4point_2loop}
\end{figure}

\begin{lemma}\label{lem_subdivergences}
	In the graphs contributing to the connected $n$-point amplitude, all possible subdivergences can be removed with meta-counterterms $C_{m,k}$ where $ k<n$. 
\end{lemma}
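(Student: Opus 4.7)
The plan is to read off the claim directly from the structural constraints in \cref{thm_feynmanrules} together with the standing assumption that tadpoles vanish. Let $\Gamma$ be a graph contributing to the connected $n$-point amplitude, and let $\gamma\subsetneq\Gamma$ be a proper superficially-divergent subgraph of the kind that requires a meta-counterterm insertion according to \cref{thm_metacounterterms}. By \cref{thm_feynmanrules}(4) every vertex of $\Gamma$ is an external metavertex, so the total number of metavertices of $\Gamma$ is some integer $p$ with $1\le p\le n$. Denote by $k$ the number of metavertices of $\Gamma$ that lie in $\gamma$; this is precisely the second index of the counterterm $C_{m,k}$ that replaces $\gamma$. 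The goal is to show $k\le p-1$, hence $k<n$.

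First I would treat the extremal case $k=p$, i.e.\ $\gamma$ containing every metavertex of $\Gamma$. Since $\gamma\subsetneq\Gamma$, there is at least one edge $e$ of $\Gamma$ that is not in $\gamma$; but both endpoints of $e$ must be vertices of $\Gamma$, hence vertices of $\gamma$. In the cograph $\Gamma/\gamma$ the subgraph $\gamma$ is contracted to a single vertex, so $e$ becomes a self-loop on that vertex, that is, a tadpole edge. Any Feynman graph containing a tadpole edge evaluates to zero by the standing assumption \cref{pos_tadpole} that $G_F(0)=0$. Therefore such a $\Gamma$ does not contribute to the amplitude and need not be renormalised. Consequently every nonvanishing proper divergent subgraph satisfies $k\le p-1\le n-1$, which is what we need.

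Second, I would note that the argument is stable under the recursive BPHZ nesting used in \cref{thm_metacounterterms}: at every stage one is subtracting a proper divergent subgraph of the current graph, which is itself a graph in the connected perspective, and so the same bound applies step by step. Nested or overlapping forests do not change the count, because in the forest formula each inserted counterterm still arises from a proper subgraph of the ambient $\Gamma$, to which the above combinatorial argument applies verbatim.

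I do not expect a real obstacle here; the proof is combinatorial and one-line once the tadpole exclusion is invoked. The only point that has to be stated carefully is the equivalence between ``$\gamma$ contains all metavertices of $\Gamma$'' and ``every edge of $\Gamma\setminus\gamma$ becomes a tadpole upon contraction,'' which follows purely from the fact that in the connected perspective there are no internal vertices, so every edge has both endpoints among the (meta)vertices.
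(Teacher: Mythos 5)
Your proposal is correct and follows essentially the same route as the paper's own proof: bound the number of metavertices a divergent subgraph can contain by observing that if it contained all of them, the cograph would be a tadpole and hence vanish by assumption, then handle nested subdivergences recursively. Your version is marginally more careful in bounding $k$ by the actual metavertex count $p$ of $\Gamma$ rather than by $n$ directly, but the key idea and the use of the tadpole exclusion are identical.
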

\begin{proof}
	Follows from \cref{thm_feynmanrules,thm_metacounterterms}: The connected $n$-point amplitude is supported on graphs $\Gamma$ with at most $n$ metavertices. A subdivergence is given by a subgraph $\gamma$  with $k$ vertices where $k \leq n$. If $k=n$, then the cograph $\frac \Gamma \gamma$ contains just a single metavertex and is a tadpole which is assumed to vanish. Hence $k<n$. The superficial divergence of a graph with $k$ metavertices is subtracted in $C_{m,k}$ where $m\geq k$ is the number of legs. If $\gamma$ is primitive, this finishes the proof, otherwise proceed inductively, using meta-counterterms $C_{m_2,k_2}$ where $k_2 < k$.
\end{proof}

\subsection{k=0 and k=1 legs offshell}
If all external legs are onshell, i.e. $k=0$, then the amplitudes of the connected perspective vanish, consequently there is no divergence and
\begin{align}\label{Cn0}
C_{n,0}^{(l)} &= 0 \qquad \forall n,l \qquad \Rightarrow \quad C_{n,0} =0.
\end{align}
If only one external leg is offshell, the amplitude is supported on graphs with a single metavertex. Such graphs are tadpoles and we assume them to vanish.  We therefore have
\begin{align}\label{Cn1}
C_{n,1}^{(l)} &= 0 \qquad \forall n,l \qquad \Rightarrow \quad C_{n,1} =0.
\end{align}
Graphically, these two identities are shown in \cref{Cn1_picture}.

\begin{figure}[htbp]
	\centering
	\begin{tikzpicture}

	\node [treeCounterVertex] (c) at (-.5,0) { };
	\draw [-|] (c) --++ (0:.5);
	\draw [-|] (c) --++ (50:.5);
	\draw [-|] (c) --++ (100:.5);
	\draw [-|] (c) --++ (-50:.5);
	\draw [-|] (c) --++ (-100:.5);
	\node at (-1,.1){$\vdots$};

	\node at (.8,0) {$=0$};

	\node [treeCounterVertex] (c) at (4,0) { };
	\draw [>-] (c) --++ (0:.5);
	\draw [-|] (c) --++ (50:.5);
	\draw [-|] (c) --++ (100:.5);
	\draw [-|] (c) --++ (-50:.5);
	\draw [-|] (c) --++ (-100:.5);
	\node at (3.5,.1){$\vdots$};
	
	\node at (5.3,0) {$=0$};

	\end{tikzpicture}
	\caption{The meta-counterterms $C_{n,k}$ for connected amplitudes vanish identically if $k=0$ or $k=1$, i.e. zero or one external leg is offshell.}
	\label{Cn1_picture}
\end{figure}

\subsection{n=2 legs}\label{sec_n2}

The two-point-function $n=2$ is supported on $l$-loop multiedge graphs $M^{(l)}(s)$. 
Since tadpoles are assumed to vanish, these graphs have no subdivergences. Correspondingly, no other meta-counterterms $C_{n,k>2}$ are necessary as asserted by \cref{lem_subdivergences}. The $l$-loop meta-counterterm for the 2-point-function is just the divergent part of $-M^{(l)}$,
\begin{align}\label{C22l}
C_{2,2}^{(l)}(s) &= -(-i b_{l+2}s)^2  \frac{M^{(l)}_{\text{div}}(p^2)}{(l+1)!} = b_{l+2}^2 s^2  \frac{M^{(l)}_{\text{div}}(p^2)}{(l+1)!}, 
\end{align}
and the all-order counterterm consequently is
\begin{align*}
C_{2,2} (s) &:= \sum_{l=1}^\infty C_{2,2}^{(l)}= - (-is)^2 \sum_{l=1}^\infty   \frac{b_{l+2}^2 }{(l+1)!}M^{(l)}_{\text{div}}(p^2).
\end{align*}

\begin{example}[Massless theory]\label{ex_C22}
	In the massless theory in $D=4-2\epsilon$ dimensions, the divergences of multiedges are given by \cref{Mldiv} and therefore
	\begin{align*}
	C_{2,2} (s) &= + \sum_{l=1}^\infty \frac{b_{l+2}^2}{(l+1)!} \frac{\left(-is\right) ^{l+1}}{ (4\pi)^{2l} \left( l! \right) ^2} \frac 1 \epsilon.
	\end{align*}
	This of course coincides with $- (-is)G_2^{\text{div}} (s) $ from \cref{G2div_massless}. Adding this counterterm, the connected two-point-function \cref{G2} is finite:
		\begin{align}\label{G2R}
	G^R_2 (p^2) &:= G_2(p^2) +C_{2,2}(p^2)=-is \left(1- G_2^{\text{fin}}(s) \right).
	\end{align}
\end{example}

\subsection{ k=2 legs offshell}
 
A meta-counterterm with $k=2$ of its legs offshell represents the superficial divergence of a graph on 2 metavertices i.e. a multiedge.

\begin{figure}[htbp]
	\centering
	\begin{tikzpicture}

	\node [treeCounterVertex, label={[label distance=-2mm]5:{$\scriptstyle (l)$}}  ] (c) at (-.5,0) { };
	\draw  (c) --++ (0:.5);
	\draw [>-] (c) --++ (-120:.5);
	\draw [>-] (c) --++ (120:.5);

	\node at (.5,0) {$=$};
	
	\node at (1.2,0) {$-\renop  $};
	
	\draw [semithick]  (2,-.8) to [ncbar=.1](2,.8);
	 
	\node [treeVertex] (c1) at (3,-.4){};
	\node [treeVertex, label={[label distance=1mm]180:{$\scriptstyle (l)$}}  ] (c2) at ($(c1)+ (60:.8)$){};
	\draw [bend angle =50, bend left] (c1) to (c2);
	\draw [bend angle =30, bend left] (c1) to (c2);
	\draw [bend angle =50, bend right] (c1) to (c2);
	\draw [>-](c1) -- + (-120:.5);
	\draw (c2) -- + (0:.5);
	\draw [>-](c2) -- + (120:.5);
	
	\node at (4.5,0) {$+$};
	
	\node [treeVertex, label={[label distance=1mm]0:{$\scriptstyle (l)$}}] (c1) at (5.5,.4){};
	\node [treeVertex] (c2) at ($(c1) + (-60:.8)$){};
	\draw [bend angle =50, bend left] (c1) to (c2);
	\draw [bend angle =30, bend left] (c1) to (c2);
	\draw [bend angle =50, bend right] (c1) to (c2);
	
	\draw [>-](c1) -- + (120:.5);
	\draw (c2) -- + (0:.5);
	\draw [>-](c2) -- + (-120:.5);

	\draw [semithick] (7,-.8) to [ncbar=-.1] (7,.8);

	\end{tikzpicture}
	\caption{Meta-counterterm $C_{3,2}^{(l)}$  according to \cref{C32l}. For the indicated orientation of cancelled edges, only two graphs contribute. }
	\label{C32_picture}
\end{figure}
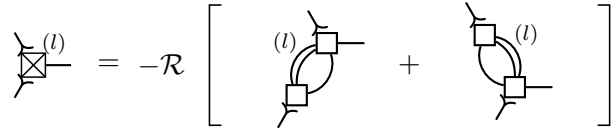

For graphs on $k=2$ metavertices, but with $n>2$ external edges, there are multiple orientations of the 2-vertex-multiedge. With $n=3$ external edges, one of the metavertices is adjacent to one external edge and the other one to the remaining two, see \cref{C32_picture}, and there are three ways to choose which two edges are offshell. The $l$-loop meta-counterterm reads
\begin{align}\label{C32l}
C_{3,2}^{(l)} &=\frac{b_{l+2} b_{l+3}}{(l+1)!} \left( s_1 (s_2+s_3) M^{(l)}_{\text{div}}(s_1) + s_2 (s_1+s_3) M^{(l)}_{\text{div}}(s_2)+s_3 (s_1+s_2) M^{(l)}_{\text{div}}(s_3) \right)  .
\end{align}

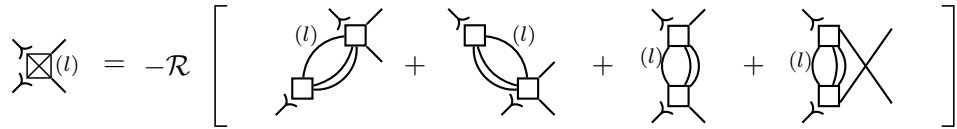
\begin{figure}[htbp]
	\centering
	\begin{tikzpicture}

	\node [treeCounterVertex, label={[label distance=-1mm]0:{$\scriptstyle (l)$}}  ] (c) at (-.5,0) {};
	\draw   (c) --++ (45:.5);
	\draw   (c) --++ (-45:.5);
	\draw  [>-](c) --++ (135:.5);
	\draw  [>-](c) --++ (-135:.5);

	\node at (.5,0) {$=$};
	
	\node at (1.2,0) {$-\renop  $};
	
	\draw [semithick]  (2,-.8) to [ncbar=.1](2,.8);

	\node [treeVertex] (c1) at (3,-.3){};
	\node [treeVertex, label={[label distance=2mm]180:{$\scriptstyle (l)$}}] (c2) at ($(c1) + (45:1)$){};
	\draw [bend angle =40, bend left] (c1) to (c2);
	\draw [bend angle =30, bend right] (c1) to (c2);
	\draw [bend angle =50, bend right] (c1) to (c2);
	\draw (c2) -- + (45:.5);
	\draw (c2) -- + (-45:.5);
	\draw [>-] (c2) -- + (135:.5);
	\draw [>-] (c1) -- + (-135:.5);
	
	\node at (4.5,0) {$+$};
	
		\node [treeVertex] (c1) at (6,-.3){};
	\node [treeVertex, label={[label distance=2mm]00:{$\scriptstyle (l)$}}] (c2) at ($(c1) + (135:1)$){};
	\draw [bend angle =40, bend right] (c1) to (c2);
	\draw [bend angle =30, bend left] (c1) to (c2);
	\draw [bend angle =50, bend left] (c1) to (c2);
	\draw (c1) -- + (45:.5);
	\draw (c1) -- + (-45:.5);
	\draw [>-] (c1) -- + (-135:.5);
	\draw [>-] (c2) -- + (135:.5);
	
		\node at (7,0) {$+$};
	
	\node [treeVertex] (c1) at (8,-.4){};
	\node [treeVertex, label={[label distance=-1mm]190:{$\scriptstyle (l)$}}  ] (c2) at ($(c1)+ (90:.8)$){};
	\draw [bend angle =40, bend left] (c1) to (c2);
	\draw [bend angle =30, bend right] (c1) to (c2);
	\draw [bend angle =50, bend right] (c1) to (c2);
	\draw [>-](c1) -- + (-135:.5);
	\draw (c1) -- + (-45:.5);
	\draw [>-](c2) -- + (135:.5);
	\draw (c2) -- + (45:.5);
	
		\node at (9,0) {$+$};
	
	\node [treeVertex] (c1) at (10,-.4){};
	\node [treeVertex, label={[label distance=-1mm]220:{$\scriptstyle (l)$}}  ] (c2) at ($(c1)+ (90:.8)$){};
	\draw [bend angle =40, bend left] (c1) to (c2);
	\draw [bend angle =20, bend right] (c1) to (c2);
	\draw [bend angle =40, bend right] (c1) to (c2);
	\draw [>-](c1) -- + (-135:.5);
	\draw [bend angle =40 ](c1.330) to + (55:1.2);
	\draw [>-](c2) -- + (135:.5);
	\draw [bend angle =40 ] (c2.30) to + (-55:1.2);

	\draw [semithick] (11.5,-.8) to [ncbar=-.1] (11.5,.8);

	\end{tikzpicture}
	\caption{Meta-counterterm $C_{4,2}^{(l)}$  according to \cref{C42l} for one of six orientations. }
	\label{C42_picture}
\end{figure}

With $n=4$ external edges and $k=2$ metavertices, two different configurations of multiedges are possible: Either each metavertex is adjacent to two external edges or one of them to three and one to only one external edge. In the former case, the multiedge depends on a sum  offshell variable $s_{i+j}$. There are six possibilities to choose two out of four edges offshell, each of them contributes four graphs as shown in \cref{C42_picture}; the sum can be written as
\begin{align}\label{C42l}
C_{4,2}^{(l)} &= \left \langle 4 \right \rangle \cdot  b_{l+2} b_{l+4} s_1 (s_2+s_3+s_4) \frac {M^{(l)}_{\text{div}}(s_1)}{(l+1)!}  + \left \langle 3 \right \rangle \cdot b_{l+3}^2 (s_1+s_2)(s_3+s_4) \frac { M^{(l)}_{\text{div}}(s_{1+2})}{(l+1)!}.
\end{align}
As expected, $C_{4,2}^{(l)}$ again cancels only two out of its four external edges.

Computing the higher valent meta-counterterms, this pattern continues:
\begin{align}\label{C52}
C_{5,2}^{(l)} &= \left \langle 5 \right \rangle \cdot  b_{l+2} b_{l+5} s_1 (s_2+s_3+s_4+s_5) \frac{M^{(l)}_{\text{div}}(s_1)}{(l+1)!}  + \left \langle 10 \right \rangle\cdot  b_{l+3} b_{l+4} (s_1+s_2) (s_3+s_4+s_5) \frac{M^{(l)}_{\text{div}} (s_{1+2})}{(l+1)!}  \\
C_{6,2}^{(l)} &= \left \langle 6 \right \rangle \cdot  b_{l+2} b_{l+6} s_1 (s_2+\ldots + s_6) \frac{M^{(l)}_{\text{div}}(s_1)}{(l+1)!} + \left \langle 15 \right \rangle \cdot  b_{l+3} b_{l+5} (s_1+s_2) (s_3+\ldots + s_6) \frac{M^{(l)}_{\text{div}}(s_{1+2})}{(l+1)!}  \nonumber  \\
&\qquad + \left \langle 10 \right \rangle \cdot b_{l+4}^2 (s_1+s_2+s_3)(s_4+s_5+s_6) \frac{M^{(l)}_{\text{div}}(s_{1+2+3})}{(l+1)!}. \nonumber
\end{align}
The meta-counterterm $C_{n,2}^{(l)}$ is a symmetric sum of terms, each of which corresponds to a way to partition $n$ into two nonempty disjoint sets. These partitions are elegantly given by Bell polynomials
\begin{align}\label{Bn2}
	B_{n,2} (x_1, x_2,\ldots) &= \sum_{j=1}^{n-1} \frac 12 \binom n j x_j x_{n-j}.
\end{align}
	Compare e.g. \cref{C52} to the Bell polynomials 
	\begin{align*}
		B_{5,2} \left( b_{l+2},b_{l+3},b_{l+4},\ldots  \right) &= 5 b_{l+2} b_{l+5} + 10 b_{l+3} b_{l+4} \\
		B_{6,2} \left( b_{l+2},b_{l+3},b_{l+4},\ldots  \right) &= 6 b_{l+2} b_{l+6} + 15 b_{l+3} b_{l+5}+ 10 b_{l+4}^2
	\end{align*}

\begin{lemma}\label{lem_Cn2} 
	The $l$-loop meta-counterterm with $n$ edges, two of which are cancelled, is
	\begin{align*}
		C_{n,2}^{(l)} &= \frac 12 \sum_{j=1}^{n-1}  \left \langle K_j \right \rangle  b_{l+1+j}b_{l+n+1-j}   (s_1+\ldots + s_j) (s_{j+1}+ \ldots+ s_n) \frac{M^{(l)}_{\text{\normalfont div}}(s_{1+\ldots + j})}{(l+1)!}
	\end{align*}
where  $K_j = \binom n j $.
\end{lemma}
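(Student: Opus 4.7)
The plan is to enumerate the graph topologies contributing to $C_{n,2}^{(l)}$, compute the amplitude of each, and collect them into the claimed formula. First, by the Euler relation, a connected graph with two internal vertices and $l$ loops carries exactly $l+1$ internal edges, so the only admissible topology is an $(l+1)$-fold multiedge $M^{(l)}$ between two metavertices. Labelling the external legs by $1,\ldots,n$, if $j$ of them are incident to the first metavertex and $n-j$ to the second, then the two metavertices have valences $l+1+j$ and $l+n+1-j$ respectively; the index $j$ ranges over $\{1,\ldots,n-1\}$ because neither metavertex may become a tadpole.

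Next, I compute the amplitude of such a configuration using the Feynman rules of \cref{thm_feynmanrules,thm_metacounterterms}. Expanding each metavertex amplitude $-ib_m(s_1+\ldots+s_m)$ and retaining only the summands proportional to external offshell variables (the remaining ones would cancel internal edges and are forbidden), the metavertex with $j$ external legs contributes the effective prefactor $-ib_{l+1+j}(s_1+\ldots+s_j)$, and the other contributes $-ib_{l+n+1-j}(s_{j+1}+\ldots+s_n)$. Joining them through the $l+1$ internal propagators, with the multiedge symmetry factor $1/(l+1)!$, produces the $l$-loop integral $M^{(l)}$ evaluated at the total internal momentum, which by momentum conservation corresponds to the offshell variable $s_{1+\ldots+j}$. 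Applying $-\renop$ extracts the divergence without touching the polynomial prefactor, yielding the summand
\begin{align*}
b_{l+1+j}\,b_{l+n+1-j}\,(s_1+\ldots+s_j)(s_{j+1}+\ldots+s_n)\,\frac{M^{(l)}_{\textnormal{div}}(s_{1+\ldots+j})}{(l+1)!}.
\end{align*}

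Finally, I organise the sum over labelings of external legs. For fixed $j$ there are $K_j=\binom{n}{j}$ distinct ways to choose which $j$ external legs attach to the first metavertex, and the shorthand $\langle K_j\rangle$ collects these as a symmetric sum. Running $j$ from $1$ to $n-1$ double-counts each unordered bipartition of the external legs, because the two metavertices of the multiedge are interchangeable; the overall factor $\tfrac12$ compensates for this. The resulting expression coincides with the lemma, and can be cross-checked both against the explicit cases \cref{C42l,C52} and against the Bell polynomial identity $B_{n,2}(x_1,x_2,\ldots)=\tfrac12\sum_{j=1}^{n-1}\binom{n}{j}x_j x_{n-j}$ under the identification $x_m\leftrightarrow b_{l+1+m}$, as already noted by the author.

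The main obstacle is the combinatorial bookkeeping: one must verify that the permutation shorthand combined with the sum over $j$ counts each underlying topology with multiplicity one, and that the connected-perspective prohibition against cancelling internal edges is correctly encoded in the effective vertex factors. Both points are handled by the admissibility analysis in \cref{thm_metacounterterms}, so no analytic input beyond the divergence of the multiedge $M^{(l)}$ enters the argument.
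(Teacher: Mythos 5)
Your proof is correct and follows essentially the same route as the paper's: classify the contributing topologies as $(l+1)$-fold multiedges labelled by a bipartition of the external legs into sets of size $j$ and $n-j$, read off the metavertex factors $-ib_{l+1+j}(s_1+\ldots+s_j)$ restricted to external cancellations, attach the multiedge divergence with its symmetry factor $1/(l+1)!$, and account for the exchange of the two metavertices with the overall $\tfrac12$. The only cosmetic difference is that you justify $j\in\{1,\ldots,n-1\}$ and the edge count via Euler's relation rather than simply demanding nonempty sets, and you fold the sign bookkeeping ($(-i)^2$ against $-\renop$) into a single sentence, but both match the paper's argument.
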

\begin{proof}
	$C_{n,2}^{(l)}$ is given by the divergences of multiedge graphs $M^{(l)}$. A multiedge graph amounts to a partition of the $n$ external edges into precisely two nonempty disjoint sets, each of which contains the edges connected to one of the two vertices in $M^{(l)}$. We sum over symmetric permutations, therefore it is sufficient to store the cardinality $j$ of one of the two sets and fix this set to be $\{s_1, \ldots , s_j\}$. The other set  contains all remaining variables, which are to be connected to the second vertex. For one fixed permutation of the external edges, we have to sum over all ways to cancel one edge of each of the sets, this produces a factor $(s_1 + \ldots + s_j)\cdot (s_{j+1} + \ldots + s_n)$. 
	
	The momentum flowing through the multiedge is the sum of either set, we choose $s_{1+\ldots + j}$. The multiedge comes with a symmetry factor and we have to sum over $j$. The vertex-factors $(-i)^2$ produce one minus sign which is cancelled because the meta-counterterm is minus the divergence of the graph.
	
	The number of possible permutations is given by the Bell polynomial $B_{n,k}$ with values \cref{Bn2}. Note that the factor $\frac 1 2$ therein already accounts for the possibility to exchange both sets.

\end{proof}

\begin{example}[One loop]\label{ex_Cn2_1loop}
	Assume that   $M^{(1)}_{\text{div}}$ is independent of momenta, this is true for example in $D=4-2\epsilon$ dimensions for quadratic propagators. Then the explicit prefactors in \cref{lem_Cn2} constitute the only momentum-dependence. The product $(s_1+ \ldots + s_j)(s_{j+1} + \ldots + s_n)$ contains $j \cdot (n-j)$ summands. There are $K_j = \binom n j$ such terms and the sum is symmetric. The elementary symmetric quadratic polynomial $E_{2}(s_1, \ldots, s_n)$ has $\frac{n(n-1)}{2}$ factors, therefore
	\begin{align*}
		\left \langle K_j \right \rangle (s_1 + \ldots + s_j)(s_{j+1} + \ldots + s_n) &= \frac{\binom n j j (n-j)}{\frac{n(n-1)}{2}}E_2 (s_1 , \ldots, s_n) = 2 \binom{n-2}{j-1}E_2 (s_1, \ldots, s_n)
	\end{align*}
	and
	\begin{align*}
		C_{n,2}^{(1)} &=  E_2 (s_1, \ldots, s_n) M^{(1)}_{\text{div}}\sum_{j=1}^{n-1} \binom{n-2}{j-1} b_{j+2} b_{n-j+2}.
	\end{align*}
\end{example}

\subsection{k=3 legs offshell}

Meta-counterterms with $k=3$ represent the superficial divergence of triangle graphs where the sides are multiedges. These multiedges constitute subdivergences which have to be subtracted with suitable $C_{2,k}^{(l)}$ meta-counterterms. The latter are known from \cref{lem_Cn2} and thanks to \cref{lem_subdivergences} they are sufficent to eliminate all possible subdivergences. The removal of a subdivergence for $k=3$ has already been illustrated in \cref{ex_subdivergence}. 

To clarify the procedure, we compute the counterterm of the massless connected  2-loop 3-point  amplitude from \cref{sec:triangles}. In the connected perspective, three different topologies contribute. We will ignore in the following the multiedge graph $M^{(2)}$, which produces $C_{3,2}^{(2)}$ as indicated in \cref{C32_picture} and was computed already in \cref{C32l}. The two remaining graphs are shown in \cref{fig_3point_2loop}.

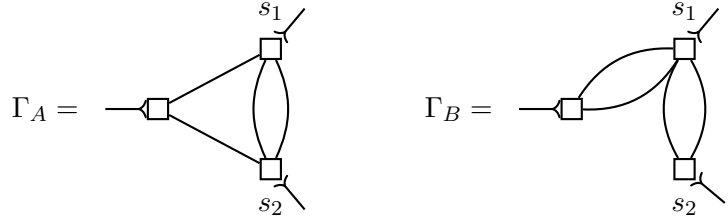
\begin{figure}[htbp] 
	\centering
	\begin{tikzpicture}

	\node at (-.5,0) {$\Gamma_A  = $};
	\node [treeVertex] (c1) at (1,0){};
	\node [treeVertex] (c2) at (2.5,.8){};
	\node [treeVertex] (c3) at (2.5,-.8){};
	\draw (c1) --  (c2);
	\draw (c1) -- (c3);
	\draw [bend angle = 25, bend left](c2) to (c3);
	
	\draw [bend angle = 25, bend right](c2) to (c3);
	\draw [>-](c1) -- + (180:.7);
	\draw [>-](c2) -- + (50:.7);
	\draw [>-](c3) -- + (-50:.7);
	
	\node at ($ (c2)+ (0,.5) $) {$s_1$};
	\node at ($ (c3)+ (0,-.5) $) {$s_2$};

	\node at (5,0) {$\Gamma_B  = $};
	\node [treeVertex] (c1) at (6.5,0){};
	\node [treeVertex] (c2) at (8,.8){};
	\node [treeVertex] (c3) at (8,-.8){};
	\draw [bend angle = 30, bend left](c2) to (c1);
	\draw [bend angle = 30, bend right](c2) to (c1);
	
	\draw [bend angle = 30, bend left](c2) to (c3);
	\draw [bend angle = 30, bend right](c2) to (c3);
	\draw [>-](c1) -- + (180:.7);
	\draw [>-](c2) -- + (50:.7);
	\draw [>-](c3) -- + (-40:.7);
	\node at ($ (c2)+ (0,.5) $) {$s_1$};
	\node at ($ (c3)+ (0,-.5) $) {$s_2$};

	\end{tikzpicture}
	\caption{The two topologies of  two-loop graphs contributing to the connected three-vertex correlation function where all three legs are cancelled. Each graph has three different permutations $s_1\rightarrow s_2 \rightarrow s_3$, they are not indicated. }
	\label{fig_3point_2loop}
\end{figure}

	If $s=p^2$ and $D=4-2\epsilon$, the graphs $\Gamma_A,\Gamma_B$ are given by \cref{G34_2loop,ex_multiedges}, where we sum over all orientations:
\begin{align}\label{3p_2l_GammaAB}
		\Gamma_A 		&=  i s_1 s_2 s_3 b_3 b_4^2 \frac{1}{4 (4\pi)^4}\left(  \frac{3}{\epsilon^2} +  \frac{1}{\epsilon } \left(  15 -6\gamma_E  + 6 \ln (4 \pi)  -2 \left( \ln s_1 + \ln s_2 + \ln s_3\right) \right)    \right)   + \text{finite terms}  \\
		\Gamma_B &= \left \langle 3 \right \rangle \cdot i b_5 b_3^2 s_1 s_2 s_3 \cdot \frac{ M^{(1)}(s_2)}{2!} \frac{ M^{(1)}(s_3)}{2!}\nonumber  \\
		&=  i b_5 b_3^2 s_1 s_2 s_3 \frac{1}{4(4\pi)^4}\left( \frac 3 {\epsilon^2} + \frac 1 \epsilon \left( 12-6\gamma_E +6\ln(4\pi) -2\left( \ln s_1 +\ln s_2+\ln s_3\right) \right)  \right)+ \text{finite terms}.\nonumber
\end{align}

\begin{figure*}[htbp]
	\begin{tikzpicture}

	\node [anchor=east] at (0,0) {$\tilde \Gamma_1 =$};
	\node [treeVertex] (c) at (1,0){ };
	\node [treeCounterVertex, label={[label distance=-1mm]0:{$\scriptstyle (1)$}} ] (c2) at (2.5,0){ };
	
	\draw [bend angle=35,bend left] (c) to (c2);
	\draw [bend angle=35,bend right] (c) to (c2);
	
	\draw [>-](c) -- + (180:.7);
	\draw [>-](c2) -- + (60:.7);
	\draw [>-](c2) -- + (-60:.7);
	
	\node [anchor=east] at (7,0) {$\tilde \Gamma_2 = C_{3,3}^{(2)} =$};
	
	\node [treeCounterVertex, label={[label distance=-1mm]0:{$\scriptstyle (2)$}} ] (c) at (8,0){};
	\draw [>-](c) -- + (180:.7);
	\draw [>-](c) -- + (60:.7);
	\draw [>-](c) -- + (-60:.7);

	\end{tikzpicture}
	\caption{Counterterm graphs to absorb the divergences of the graphs shown in \cref{fig_3point_2loop}.   $\tilde \Gamma_1$ absorbs the subdivergences of both $\Gamma_A$ and $\Gamma_B$ while $\tilde \Gamma_2$ cancels the superficial divergences of $\Gamma_A+ \Gamma_B$. }
	\label{fig_3point_2loop_counterterms}
\end{figure*}

The topologies $\Gamma_A$ and $\Gamma_B$ involve divergent subgraphs. To eliminate the subdivergences, we need the counterterm graph $\tilde \Gamma_1$ from  \cref{fig_3point_2loop_counterterms}. It contains the meta-counterterm of the 1-loop multiedge, $C_{4,2}^{(1)}$ from \cref{C42l}, which is to be inserted into another 1-loop multiedge where it must not cancel internal edges.  Let $s_1,s_2$ be the external edges as indicated in \cref{fig_3point_2loop}, then the appropriate amplitude of the meta-counterterm is
\begin{align*}
C_{4,2}^{(1)}\Big|_{s_3=s_4=0} &=  2 b_3 b_5 s_1 s_2  \frac{M^{(1)}_{\text{div}}}{2!}  + 2 b_4^2 s_1s_2 \frac{M^{(1)}_{\text{div}}}{2!}.
\end{align*}
Summing over all three orientations, the counterterm graph has the amplitude
\begin{align*}
\tilde \Gamma_1 &=   -  \frac 12 i   \left( b_3^2 b_5 + b_3 b_4^2\right) s_1 s_2s_3  M^{(1)}_{\text{div}} \left( M^{(1)}_{\text{fin}}(s_1)+M^{(1)}_{\text{fin}}(s_2)+M^{(1)}_{\text{fin}}(s_3)\right)\\
&\qquad  - \frac 32 i \left( b_3^2 b_5 + b_3 b_4^2\right)  s_1 s_2 s_3   \left(M^{(1)}_{\text{div}}\right)^2 +\text{finite terms} .
\end{align*}
	If $s=p^2$ and $D=4-2\epsilon$, this is
\begin{align}\label{3p_2l_tildeGamma}
	\tilde \Gamma_1 &=  - 2 i   \left( b_3^2 b_5+ b_3 b_4^2\right) s_1 s_2s_3 \frac{1}{4(4\pi)^4}\frac{1}{\epsilon } \left( 6-3\gamma_E + 3 \ln (4\pi) -\ln s_1-\ln s_2-\ln s_3\right) \\
	&\qquad  - 6 i  \left( b_3^2 b_5 + b_3 b_4^2\right)  s_1 s_2 s_3   \frac{1}{4(4\pi)^4}\frac{1}{\epsilon^2}  +\text{finite terms} . \nonumber 
\end{align}
The remaining graph $\tilde \Gamma_2$ from \cref{fig_3point_2loop_counterterms} absorbs the overall divergence, 
\begin{align}\label{C33_2loop}
	C_{3,3}^{(2)} &= -\renop \left( \Gamma_A + \Gamma_B + \tilde \Gamma_1 \right) .
\end{align}
In the massless case, using \cref{3p_2l_GammaAB,3p_2l_tildeGamma},
	\begin{align}\label{C33_2loop_massless}
	C_{3,3}^{(2)}	&= i s_1 s_2 s_3  \frac{3}{4 (4\pi)^4} \left( \left( b_3 b_4^2+b_3^2 b_5\right) \frac{1}{\epsilon^2} -  b_3 b_4^2   \frac 1 \epsilon\right).
\end{align}

\section{Exponential diffeomorphisms}\label{sec:rec}

So far, the set of parameters $\left \lbrace a_j \right \rbrace _{j\in \mathbbm N}$ in \cref{def_diffeomorphism} or equivalently $\left \lbrace b_j \right \rbrace  _{j\in \mathbbm N}$ in \cref{def_diffeomorphism_inverse} has been arbitrary. In the present section,   we choose  $u\in \mathbbm N$ fixed and arbitrary and require 
\begin{align}\label{rec_bn}
	b_n &= \begin{cases}
		1 & n=2\\
		\lambda^{n-2} & \exists k\in \mathbbm N_0: uk = n-2 \\
		0 & \text{else}
	\end{cases}.
\end{align}
We call this class of theories \emph{exponential diffeomorphisms}.

Note that the choice \cref{rec_bn} implies that not only the $n$-valent metavertex \cref{Vn}, but actually all summands of the connected $n$-point function are proportional to $\lambda^{n-2}$. More precisely, the connected tree-level $n$-point function has the form
\begin{align}\label{rec_Gntl_lambda}
	G_n^{\text{tl}} &= -i \lambda^{n-2}  \left( E_1(s_1, \ldots, s_n) + R_n\right)
\end{align}
where $E_1(s_1, \ldots, s_n) = s_1 + \ldots + s_n$ is the elementary symmetric polynomial of order one and $R_n$ is a rational function symmetric in $\{s_1, \ldots,  s_{1+2} , \ldots, s_{1+2+3}, \ldots   \}$ of overall order one in $s$.

\subsection{Inverse field diffeomorphism}

The hyperbolic function of order $u$ of the $r^{\text{th}}$ kind is defined as \cite{ungar_generalized_1982}
\begin{align*}
	H_{u,r}(x) &:= \sum_{k=0}^\infty \frac{x^{uk+r}}{\Gamma\left( uk+1+r \right)  }.
\end{align*}

\begin{lemma}\label{lem_rec_positionspace}
	If the connected $n$-point functions $ib_n$ of a field $\rho(x)$ fulfil  \cref{rec_bn} for a fixed $u\in \mathbbm N$, then $\rho$ is related to a free field $\phi(x)$ by
	\begin{align*}
		\lambda \rho(x) 	&=   H_{u,1}(\lambda \phi(x))= \lambda \phi \cdot \begin{cases}{}_1 F_1 (1;2|\lambda \phi (x)), &u=1\\
			{}_0F_{u-1} \left(  \left \lbrace   \right \rbrace ;  \left \lbrace \frac 2 u, \frac 3 u, \ldots, \frac{u-1}{u}, \frac{u+1}{u} \right \rbrace   \Big| \left( \frac{\lambda \phi(x)}{u} \right) ^u \right)  & u\geq 2.	
		\end{cases}
	\end{align*}
\end{lemma}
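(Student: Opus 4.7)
The proof is essentially a direct computation from the definition of the inverse diffeomorphism \cref{def_diffeomorphism_inverse}, combined with the Gauss multiplication formula for the Gamma function to obtain the hypergeometric form. I would proceed in two steps.

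First, I would substitute \cref{rec_bn} into \cref{def_diffeomorphism_inverse}. Since $b_{n+1}$ is nonzero only when $n+1-2=n-1$ is a nonnegative integer multiple of $u$, the sum collapses to a sum over $k\geq 0$ with $n=uk+1$. Note that the case $b_2=1$ corresponds to $k=0$, consistent with $\lambda^0=1$. Absorbing a factor of $\lambda$ into $\phi$, this yields
\begin{align*}
\lambda \rho(x) &= \sum_{k=0}^\infty \frac{\lambda^{uk+1}\phi^{uk+1}(x)}{(uk+1)!} = \sum_{k=0}^\infty \frac{(\lambda \phi(x))^{uk+1}}{\Gamma(uk+2)} = H_{u,1}(\lambda \phi(x)),
\end{align*}
which is the first equality of the claim.

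Second, to pass to the hypergeometric representation, I would apply the Gauss multiplication formula
\begin{align*}
\Gamma(uz) &= (2\pi)^{(1-u)/2} u^{uz-1/2}\prod_{j=0}^{u-1}\Gamma\!\left(z+\tfrac{j}{u}\right)
\end{align*}
to $\Gamma(uk+2)$ with $z=k+2/u$. This produces a product $\prod_{j}\Gamma(k+j)$ over $j\in\{2/u,3/u,\ldots,(u-1)/u,1,(u+1)/u\}$. The factor $\Gamma(k+1)=k!$ splits off, and the prefactor is determined by evaluating at $k=0$, where $\Gamma(uk+2)=1$. One obtains
\begin{align*}
(uk+1)! &= u^{uk}\, k!\, \prod_{j\in S}(j)_k, \qquad S=\left\{\tfrac{2}{u},\tfrac{3}{u},\ldots,\tfrac{u-1}{u},\tfrac{u+1}{u}\right\}.
\end{align*}
Dividing $(\lambda\phi)^{uk+1}$ by this expression and rewriting $(\lambda\phi)^{uk}/u^{uk}=(\lambda\phi/u)^{uk}$ converts the power series into the hypergeometric form ${}_0F_{u-1}\bigl(;S\bigm|(\lambda\phi/u)^u\bigr)$. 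For $u=1$ the set $S$ reduces to $\{2\}$ and the formula collapses to ${}_1F_1(1;2\,|\,\lambda\phi)$, since $(1)_k=k!$ supplies the additional Pochhammer symbol needed so that ${}_0F_0$ (ill-defined with empty denominator set of size zero and no compensating factor) is more naturally written as ${}_1F_1$.

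No genuine obstacle arises: the proof is a rearrangement of power series. The only subtlety is verifying the Gauss multiplication identity carefully enough to identify the exact set $S$ of lower parameters, especially the missing entry $u/u=1$ and the presence of $(u+1)/u$ instead, and handling the edge case $u=1$ where the formula reduces to the elementary identity $\lambda\phi\cdot {}_1F_1(1;2\,|\,\lambda\phi)=e^{\lambda\phi}-1=H_{1,1}(\lambda\phi)$.
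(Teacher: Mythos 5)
Your proof is correct and follows essentially the same route as the paper: substitute \cref{rec_bn} into \cref{def_diffeomorphism_inverse} to collapse the sum to $\sum_k (\lambda\phi)^{uk+1}/\Gamma(uk+2)$, then apply the Gauss multiplication formula (with the normalization fixed at $k=0$, which is the paper's "second" application of the formula) to identify the lower-parameter set $\{\tfrac 2u,\ldots,\tfrac{u-1}{u},\tfrac{u+1}{u}\}$ of the ${}_0F_{u-1}$, treating $u=1$ separately as ${}_1F_1(1;2\,|\,\lambda\phi)$. No gaps.
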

\begin{proof}
	 If a propagator-cancelling field has metavertex amplitudes $b_n$ then it is related to a free field $\phi$ by the diffeomorphism \cref{def_diffeomorphism_inverse}
	\begin{align*}
		\rho(x) &= \sum_{n=1}^\infty \frac{b_{n+1}}{n!}\phi^n(x)=  \phi \sum_{k=0}^\infty \frac{\left( \lambda \phi \right) ^{ku}}{(uk+1)!}=  \phi \sum_{k=0}^\infty \frac{\left( \lambda \phi \right) ^{ku}}{\Gamma(uk+2)}.
	\end{align*}
	For $u=1$, the series is
	\begin{align*}
		\sum_{k=0}^\infty \frac{(\lambda \phi)^k}{\Gamma(k+2)} &=  \sum_{k=0}^\infty \frac{\Gamma(s)}{\Gamma(k+2)}\frac{\Gamma(k+1)}{\Gamma(1)}\frac{(\lambda \phi)^k}{k!} = {}_1F_1 (1; 2|\lambda \phi).
	\end{align*}
	In the general case, use Gauss' product formula for Gamma functions \cite[Sec. 26]{gauss_disquisitiones_1813} twice:
	\begin{align*}
		\sum_{k=0}^\infty \frac{1}{\Gamma(uk+2)}\left( \lambda \phi \right) ^{ku}	&=  \sum_{k=0}^\infty \frac{(2\pi)^{\frac {u-1} 2 }}{ \Gamma\left( \frac{2+uk}{u} \right) \cdots \Gamma \left( \frac{u+1+uk}{u} \right) u^{uk + \frac 3 2} }  \left( \lambda \phi \right) ^{uk} \\
		&=  \sum_{k=0}^\infty \frac{(2\pi)^{\frac {u-1} 2 }}{ \Gamma\left( \frac 2 u + k  \right) \cdots \Gamma \left( 1+k \right)  \Gamma \left( \frac{u+1}{u}+k \right) u^{\frac 3 2} }    \frac{\Gamma \left( \frac 2 u  \right) \cdots \Gamma \left( \frac{u+1}{u} \right) u^{\frac 3 2}}{(2 \pi)^{\frac{u-1}{2}} \Gamma(2)}  \frac{ \left( \lambda \phi \right) ^{uk} }{u^{uk}}.
	\end{align*}
\end{proof}

We note in passing that the diffeomorphisms given by \cref{rec_bn} fulfil
\begin{align*}
	\frac{\text d^u }{\text d \phi^u} \rho(x) &= \lambda^u \cdot  \rho(x)
\end{align*}
and are therefore sums of terms $\rho \propto e^{q_i \lambda \phi}$ where $q_i$ are $u^{\text{th}}$ roots of unity. This fact motivates the name \emph{exponential diffeomorphisms}.

\begin{example}\label{ex_rec_positionspace}
	For small $u$, the hypergeometric functions evaluate to
	\begin{align*}
		u=1 :\quad \rho  &= \lambda^{-1} \left( e^{\lambda \phi}-1 \right) \\
		u=2 : \quad \rho  & = (2\lambda)^{-1} \left( e^{\lambda\phi} - e^{-\lambda \phi} \right) = \lambda^{-1} \sinh \left( \lambda \phi  \right)  \\
		u=3 : \quad \rho &= (3\lambda)^{-1} \left( e^{\lambda \phi} + (-1)^{\frac 1 3} e^{-(-1)^{\frac 13} \lambda \phi} + (-1)^{\frac 23} e^{(-1)^{\frac 23} \lambda \phi} \right) \\
		 &= (3\lambda)^{-1}e^{-\frac{\lambda \phi }{2}} \left(  e^{\frac 3 2 \lambda \phi }  + 2 \sin \left( \frac 16 (3\sqrt 3 \lambda \phi -\pi) \right)   \right) \\
		u=4 : \quad \rho &= (2\lambda)^{-1} \left( \sin (\lambda \phi) + \sinh(\lambda \phi) \right) .
	\end{align*}
\end{example}

\subsection{Field diffeomorphism}
Using \cref{anbn}, the diffeomorphism coefficients $a_n$ can be computed in principle, but there seems to be no easy explicit formula. One obtains
\begin{align*}
	u=1: \quad a_n &= \frac{(-1)^n \lambda^n}{n+1} \\
	u>1: \quad a_n &= \begin{cases}
		\frac{(-1)^k \lambda^n}{(n+1)!} \cdot \alpha_k ,\quad &n = k\cdot u\\
		0 &\text{else}
	\end{cases}
\end{align*}
where the sequences $\left \lbrace \alpha_k \right \rbrace _{k\in \mathbbm N}$ have recently been interpreted in terms of Whitney numbers\cite{engbers_restricted_2017}.
\begin{example}
	The case $u=1$ amounts to $\alpha_n = n!$. Other examples  of $\left \lbrace \alpha_k \right \rbrace $ are 
	\begin{align*}
		u&=2: \quad \left \lbrace 1,9,225,11025,893025,\ldots   \right \rbrace\quad\text{\cite[A001818]{oeis}}  \\
		u&= 3 : \quad \left \lbrace 1,34,5446,2405116, 2261938588 ,\ldots   \right \rbrace \quad  \text{\cite[A292750]{oeis}} \\
		u &= 4 : \quad \left \lbrace 1,125,124125,477257625,\ldots   \right \rbrace  \\
		u &= 5: \quad \left \lbrace 1,461,2846481,95573412836,\ldots  \right \rbrace  
	\end{align*}
\end{example}

For $u=1,2$, the function $\phi(\rho)$ can be obtained by inverting the function $\rho(\phi)$ from \cref{ex_rec_positionspace}:
\begin{align}\label{rec_inverse}
	u=1 : \quad \phi &= \lambda^{-1} \ln (1+\lambda \rho)\\
	u=2 : \quad \phi &= \lambda^{-1} \operatorname{asinh} (\lambda \rho) = \lambda^{-1} \ln (\sqrt{1+(\lambda \rho)^2}+ \lambda \rho ). \nonumber 
\end{align} 
It is instructive to write down the Lagrangian density for these diffeomorphisms in the case $s=p^2-m^2$, namely
\begin{align}
u=1: \qquad 	\mathcal L  &= -\frac 12 \partial_\mu \phi \partial^\mu \phi -\frac 12 m^2 \phi^2  = -\frac 12 \frac{1}{(1+\lambda \rho)^2}\partial_\mu \rho \partial^\mu \rho -\frac 12 \frac{m^2}{\lambda^2}\ln^2 (1+\lambda \rho) \label{rec_L_u1}\\
	u=2: \qquad \mathcal L &= -\frac 12 \frac{1}{1+(\lambda \rho)^2} \partial_\mu \rho \partial^\mu \rho -\frac 12 \frac{m^2}{\lambda^2}\sinh^2 (\lambda \phi).\nonumber
\end{align}
In both cases, the kinetic term is multiplied by the inverse squared of the field. The mass term on the other hand becomes an unorthodox transcendental interaction term. Setting $m=0$ and defining a field $\varrho := 1+\lambda \rho$,   \cref{rec_L_u1} takes the form  
\begin{align}\label{rec_u1_L}
 u=1: \qquad 	\mathcal L &= -\frac{1}{2\lambda^2} \cdot  \frac{1}{\varrho^2}\partial_\mu \varrho \partial^\mu \varrho
\end{align} 
which vagely reminds of the Einstein-Hilbert-Lagrangian.
Using \cref{vn_massless} and $s_j = p_j^2$, the vertex Feynman rules of the Lagrangian \cref{rec_u1_L}, in terms of the field $\rho = \lambda^{-1}(\varrho -1)$, are
\begin{align*} 
	iv_n &= i (-1)^n \lambda^{n-2} \frac  {(n-1)!}2\cdot \left( p_1^2 + p_2^2 + \ldots + p_n^2 \right)  .
\end{align*}

\subsection{Two-point-function in position space}

\begin{lemma}\label{rec_lem_G2}
	If the field $\rho(x)$ fulfils \cref{rec_bn} for a fixed $u\in \mathbbm N$, then its  connected full two-point function in position space  is
	\begin{align*}
		G(z) &= \frac 1 \lambda H_{u,1}(\lambda^2 G_F(z)) 
	\end{align*}
	where $G_F(z)$ is the Greens function of the original field differential operator, i.e. $\hat s_z G_F(z) = i\delta(z)$   and $H_{u,1}$ can be expressed in $_0F_{u-1}$ like in \cref{lem_rec_positionspace}.
\end{lemma}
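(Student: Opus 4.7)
The plan is to substitute the exponential-diffeomorphism coefficients \cref{rec_bn} directly into the general position-space formula for the two-point function and then recognise the resulting series as the hyperbolic function $H_{u,1}$.

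First I would invoke \cref{posspace_2point}, which holds under the (standing) assumption of vanishing tadpoles and expresses the connected two-point function as
\begin{align*}
G(z) \;=\; \sum_{t=1}^{\infty} \frac{b_{t+1}^{\,2}}{t!}\, G_F^{\,t}(z).
\end{align*}
No assumption about the specific diffeomorphism enters this identity, so it is available to us regardless of the choice of $\{b_n\}$.

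Next I would plug in the coefficients \cref{rec_bn}. By construction $b_{t+1}$ vanishes unless $t+1-2$ is a multiple of $u$, i.e.\ unless $t = uk+1$ for some $k \in \mathbbm{N}_{0}$, in which case $b_{t+1} = \lambda^{uk}$ (the case $k=0$ reproducing $b_2=1$). The sum over $t$ therefore collapses to a sum over $k$, leaving
\begin{align*}
G(z) \;=\; \sum_{k=0}^{\infty} \frac{\lambda^{2uk}}{(uk+1)!}\, G_F^{\,uk+1}(z),
\end{align*}
which, after pulling out the appropriate power of $\lambda^{2}$, is manifestly of the form $\lambda^{-2}\sum_{k\ge 0}(\lambda^{2}G_F)^{uk+1}/\Gamma(uk+2)$, i.e.\ the hyperbolic function $H_{u,1}$ of argument $\lambda^{2}G_F(z)$ up to the claimed prefactor.

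Finally, to obtain the alternative representation in terms of generalised hypergeometric functions, I would reuse the identities already established in the proof of \cref{lem_rec_positionspace}: the case $u=1$ rewrites as a ${}_1F_1(1;2|\cdot)$, while for $u\ge 2$ two applications of Gauss's multiplication formula for the Gamma function convert the series into ${}_0F_{u-1}$ with shifted parameters. No step in this argument is combinatorially subtle; the main obstacle is purely bookkeeping of the powers of $\lambda$ entering through $b_{t+1}^{2}$ versus those carried by the argument $\lambda^{2}G_F(z)$, which must be tracked carefully to match the stated normalisation.
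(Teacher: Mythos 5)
Your proposal is correct and follows essentially the same route as the paper: substitute \cref{rec_bn} into \cref{posspace_2point}, let the sum collapse to $t=uk+1$, and identify the resulting series with $H_{u,1}(\lambda^2 G_F)$ via the same Gamma-function manipulations as in \cref{lem_rec_positionspace}. Note that your bookkeeping actually yields the prefactor $\lambda^{-2}$ rather than the $\lambda^{-1}$ written in the lemma statement; the subsequent massless example (\cref{rec_u1_G2_pos}) confirms $\lambda^{-2}$ is the correct normalisation, so your computation is right and the lemma as printed contains a typo.
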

\begin{proof}
	The non-tadpole part of the 2-point function is \cref{posspace_2point}, using \cref{rec_bn}, it reads
	\begin{align}\label{rec_pos_G2_series}
		\left \langle \rho(x) \rho(y) \right \rangle &= \sum_{t=1}^\infty \frac{\left( \lambda^{t-1} \right) ^2 \delta_{t-1 = uk}}{t!} G_F^t(x-y) = G_F(x-y) \sum_{k=0}^\infty  \frac{\left( \lambda^{2} G_F(x-y)\right)^{uk} }{\Gamma(uk+2)}  . 
	\end{align}
	This is up to a different argument the same series as in the proof of \cref{lem_rec_positionspace}. 
\end{proof}
\begin{example}[Massless field]
	Consider the  standard, massless theory $s=p^2$ with propagator \cref{propagators}. Like in \cref{ex_rec_positionspace}, one obtains
	\begin{align}\label{rec_u1_G2_pos}
		u=1: \quad G(z) &= \lambda^{-2} \left( \exp \left( \frac{i\Gamma(1-\epsilon) \lambda^2}{(z^2)^{1-\epsilon} 4 \pi^{2-\epsilon}}   \right) -1 \right)  = \lambda^{-2} \left( e^{ i\frac{\lambda^2}{z^2 (2 \pi)^2} }   -1 \right) +   \mathcal O (\epsilon) \\
		u=2: \quad G(z) &= \lambda^{-2} \sinh \left(  \frac{ \Gamma(1-\epsilon)\lambda^2}{(z^2)^{1-\epsilon} 4 \pi^{2-\epsilon}} \right)  .\nonumber 
	\end{align}
	In stark contrast to the free propagator $G_F$ (\cref{propagators}) or the perturbative 2-point function of any renormalizable theory, these functions have an essential singularity at $z=0$. But apart from that, they are finite in the limit $\epsilon \rightarrow 0$. Especially, the case $u=1$ amounts to  the exponential superpropagator \cref{superpropagator}. 
\end{example}

\subsection{Higher correlation functions in position space}\label{sec:pos_higher}

By \cref{thm_npoint_position}, the $n$-point function in position space for a theory fulfilling \cref{rec_bn} with $u=1$ is 
\begin{align*}
	\left \langle \rho(x_1) \cdots \rho(x_n)\right \rangle = \sum_{\stackrel{l_1, \ldots, l_k\in \mathbbm N_0}{t_j \geq 1 \; \forall j}} \frac{\lambda^{t_1-1} \cdots \lambda^{t_n-1}}{l_1! \cdots l_k!}G_F^{l_1}(x_1-x_2)G_F^{l_2}(x_1-x_3)\cdots G_F^{l_k}(x_{n-1}-x_n).
\end{align*}
Here, $l_j\in \mathbbm N_0$ represent the number of edges between a pair of points and $t_j\in \mathbbm N$ the number of edges at one point $j\in \left \lbrace 1,\ldots, n \right \rbrace  $. Hence each $l_i$ contributes to precisely two $t_j$ and   $t_1 + \ldots + t_n = 2 (l_1  + \ldots + l_k)$ and 
\begin{align}\label{rec_pos_npoint}
	\left \langle \rho(x_1) \cdots \rho(x_n)\right \rangle = \frac{1}{\lambda^n}\sum_{\stackrel{l_1, \ldots, l_k\in \mathbbm N_0}{t_j \geq 1 \; \forall j}} \frac{\lambda^{2 l_1} \cdots \lambda^{2 l_k}}{l_1! \cdots l_k!}G_F^{l_1}(x_1-x_2)G_F^{l_2}(x_1-x_3)\cdots G_F^{l_k}(x_{n-1}-x_n).
\end{align}
 Recognizing the 2-point function  
\begin{align*}
	G (z) &:= \sum_{l=0}^\infty \frac{\lambda^{2l}}{l!}G_F^{l}(z),
\end{align*}
we see that \cref{rec_pos_npoint} is the sum of all ways to connect the $n$ points by those 2-point functions. This gives us yet another interpretation of the recursion relations \cref{rec_bn}: They are the unique choice of parameters $b_n$ for which the position-space correlation functions factor into products of position-space superpropagators.

\subsection{Two-point function in momentum space}
Assuming recursion relations \cref{rec_bn}, the connected amputated 2-point function \cref{G2} for $s_p=p^2$ in $D=4-2\epsilon$ dimensions specialises to
\begin{align}
	G_2^{\text{fin}} (s) &=   \sum_{k=1}^\infty   \frac{\left(- i \pi^2 \lambda^2 s \right) ^{uk}}{ \Gamma(uk+2)\Gamma^2 (uk+1)} \left(  (2uk+1) H_{uk}-1 -uk\gamma_ {\text E}  + u k \ln (2\pi) -uk \ln  s   \right)  \label{rec_G2fin} \\
	G_2^{\text{div}}(s) &=  \sum_{k=1}^\infty   \frac{\left(- i  \lambda^2 s \right) ^{uk}}{ (4\pi)^{2uk} \Gamma(uk+2) \Gamma^2 (uk+1)}. \label{rec_G2div}
\end{align}

Restricting further to $u=1$, the finite part 
\begin{align}\label{rec_u1_G2fin}
	G_2^{\text{fin}} (s) &=   \sum_{k=1}^\infty   \frac{\left(- i \pi^2 \lambda^2 s \right) ^{ k}}{ \Gamma( k+2)\Gamma^2 ( k+1)} \left(  (2 k+1) H_{ k}-1 - k\gamma_ {\text E}  +   k \ln (2\pi) - k \ln  s   \right)  
\end{align}
almost coincides with the exponential superpropagator \cref{superpropagator_mom}. The difference of our formula \cref{rec_u1_G2fin} to the various historic results is a finite constant $\delta_k$ for every order $s^k$. In a renormalizable theory, such differences are expected when using different renormalization prescriptions, compare e.g.  \cite{celmaster_renormalizationprescription_1979}. In the present, non-renormalizable case, the historic results assumed different non-standard conditions to fix a unique Fourier-transform, our own result is in MS and assumes the vanishing of tadpoles. \Cref{rec_u1_G2fin,rec_u1_G2_pos} show that the exponential superpropagator can indeed be obtained within rigorous perturbation theory both in position space and in momentum space.

\begin{lemma}\label{rec_lem_G2div}
	In $D=4-2\epsilon$ dimensions with $s_p=p^2$, if a theory fulfils \cref{rec_bn} then the divergent part of its connected amputated 2-point function is
	\begin{align*}
		G_2^{\text{\normalfont div}} (s) &=  \begin{cases}
			{}_0 F_2 \left( \left \lbrace ; 1,2,\big| -i\pi^2 \lambda^2 s \right \rbrace   \right)-1 & u=1 \\
			{}_0F_{3u-1} \left( \left \lbrace   \right \rbrace ; \frac 1 u , \frac 1 u , \frac 2 u, \frac 2 u , \frac 2 u , \ldots, \frac{u-1}{u},\frac{u-1}{u},\frac{u-1}{u}, 1, 1, \frac{u+1}{u}\Big| \left( \frac{ - i  \lambda^2 s  }{ (4\pi)^2 u^3} \right)^u \right)  -1& u>1.
		\end{cases} 
	\end{align*}
\end{lemma}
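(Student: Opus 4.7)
The strategy is to recognise the explicit series in \cref{rec_G2div} as a generalised hypergeometric function, and the bridge between the two is Gauss's multiplication formula
\[
\Gamma(uz) = (2\pi)^{\frac{1-u}{2}}\, u^{uz-\frac12}\prod_{j=0}^{u-1}\Gamma\!\left(z+\tfrac{j}{u}\right),
\]
applied to the two Gamma factors $\Gamma(uk+1)$ and $\Gamma(uk+2)$ in the denominator of \cref{rec_G2div}. This is the same technique already used in the proof of \cref{lem_rec_positionspace}, so the computation is structurally parallel.

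For $u=1$, the proof is direct: $\Gamma(k+1)^2\Gamma(k+2) = (1)_k^2\,(2)_k$, and after writing $(1)_k^2=(1)_k\cdot k!$ to supply the $k!$ required in the hypergeometric denominator, the summand in \cref{rec_G2div} takes the shape $w^k / [(1)_k(2)_k k!]$ with an appropriate argument $w$, which is precisely the defining series of ${}_0F_2(;1,2|w)$ minus the $k=0$ term.

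For $u\geq 2$, I would apply Gauss's formula to $\Gamma(uk+1)$ at $z=k+1/u$ and to $\Gamma(uk+2)$ at $z=k+2/u$. This produces a combined product of $3u$ factors $\Gamma(k+\alpha)$, the multiplicities of which are easy to read off: the arguments $\alpha=2/u,3/u,\ldots,(u-1)/u$ and $\alpha=1$ each appear three times (twice from $\Gamma(uk+1)^2$ and once from $\Gamma(uk+2)$), while $\alpha=1/u$ appears twice and $\alpha=(u+1)/u$ appears once. Converting via $\Gamma(k+\alpha)=\Gamma(\alpha)(\alpha)_k$ and then using $(1)_k^3=(1)_k^2\,k!$ to extract the mandatory $k!$, one is left with exactly $3u-1$ Pochhammer symbols, matching the parameter list stated in the lemma. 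The collected power of $u$ from the three applications of Gauss's formula combines with $(4\pi)^{2uk}$ into the claimed argument $\bigl(-i\lambda^2 s/((4\pi)^2 u^3)\bigr)^u$.

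The main obstacle is purely bookkeeping: ensuring that the three constant prefactors $(2\pi)^{(1-u)/2}$ and the three constants $\Gamma(\alpha)$ cancel cleanly so that no spurious $\pi$- or $u$-dependence survives. The standard identity
\[
\prod_{j=1}^{u-1}\Gamma\!\left(\tfrac{j}{u}\right) = \frac{(2\pi)^{(u-1)/2}}{\sqrt{u}},
\]
a direct consequence of the reflection formula and the sine-product $\prod_{j=1}^{u-1}\sin(j\pi/u)=u/2^{u-1}$, together with $\Gamma(1+1/u)=(1/u)\Gamma(1/u)$, handles this cancellation in one stroke and reduces the overall constant prefactor to the clean $u^{3uk}$ that becomes part of the hypergeometric argument. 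With the constants dispatched, the series literally reads as ${}_0F_{3u-1}$ with the stated parameters, and subtraction of the $k=0$ term yields the sum starting at $k=1$ as required.
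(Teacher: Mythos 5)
Your proof is correct and follows essentially the same route as the paper: the paper's proof likewise reduces \cref{rec_G2div} to a hypergeometric series by quoting precisely the Gamma-function identity that your three applications of Gauss's multiplication formula derive, with the same parameter multiplicities ($1/u$ twice, $2/u,\ldots,(u-1)/u$ thrice, $1$ twice after extracting $k!$, $(u+1)/u$ once) and the same $u^{3uk}$ bookkeeping. (Your derived argument $-i\lambda^2 s/(4\pi)^2$ for $u=1$ is the one consistent with the general-$u$ formula; the $-i\pi^2\lambda^2 s$ printed in the lemma's $u=1$ case appears to be a typo.)
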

\begin{proof}
	Analogous to \cref{lem_rec_positionspace}.  Use
	\begin{align*}
		\Gamma \left( uk+2 \right) \Gamma^2 \left( uk+1 \right)  &= \frac{\Gamma \left( \frac 2 u + k  \right) \cdots \Gamma \left( \frac{u-1}u \right)  \Gamma \left( \frac{u+1}{k} \right)  \Gamma^2 \left( \frac 1 u + k  \right) \cdots \Gamma^2 \left( \frac{u-1}{u}+k \right)  }{\Gamma \left(  \frac 2 u  \right) \cdots \Gamma \left( \frac{u-1}{u} \right) \Gamma \left( \frac{u+1}{u} \right)  \Gamma^2 \left( \frac 1 u  \right) \cdots \Gamma ^2\left( \frac{u-1}{u} \right)  }\Gamma ^3\left( k+1 \right)  u^{3uk }
	\end{align*}
to rewrite \cref{rec_G2div} as a hypergeometric function.
\end{proof}

\subsection{One loop divergences}\label{sec_rec_1loop}

Consider massless one-loop graphs, not including external propagators. In $D=4-2\epsilon$ dimensions, the only divergent one-loop graph is the one-loop multiedge $M^{(l)}$ which contributes to all $n$-point functions, see  \cref{fig_rec_1L}. Similar to \cref{rec_Gntl_lambda}, $G_n^{(1)} \propto \lambda^n$.

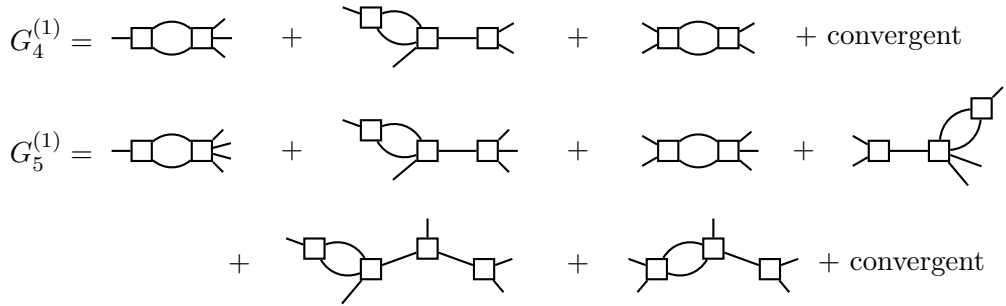
\begin{figure}[htbp] 
	\centering
	\begin{tikzpicture}

		\node at (0,-1.5) {$G_4^{(1)}=$};
		
		\node [treeVertex] (c1) at (2,-1.5){};
		\node [treeVertex] (c2) at ($(c1) + (180:.8)$){};
		\draw [bend angle =40, bend left] (c1) to (c2);
		\draw [bend angle =40, bend right] (c1) to (c2);
		\draw (c2) -- + (180:.4);
		\draw (c1) -- + (0:.4);
		\draw (c1) -- + ( 40:.4);
		\draw (c1) -- + (-40:.4);
		
		\node at (3.2,-1.5) {$+$};
		
		\node [treeVertex] (c1) at (5,-1.5){};
		\node [treeVertex] (c2) at ($(c1) + (160:.8)$){};
		\node [treeVertex] (c3) at ($(c1) + (0:.8)$){};
		\draw [bend angle =40, bend left] (c1) to (c2);
		\draw [bend angle =40, bend right] (c1) to (c2);
		\draw (c2) -- + (160:.4);
		\draw (c1) -- (c3);
		\draw (c1) -- + (220:.6);
		\draw (c3) -- + ( -30:.4);
		\draw (c3) -- + (30:.4);
		
		\node at (7,-1.5) {$+$};
		
		\node [treeVertex] (c1) at (9,-1.5){};
		\node [treeVertex] (c2) at ($(c1) + (180:.8)$){};
		\draw [bend angle =40, bend left] (c1) to (c2);
		\draw [bend angle =40, bend right] (c1) to (c2);
		\draw (c2) -- + (210:.4);
		\draw (c2) -- + (150:.4);
		\draw (c1) -- + ( 30:.4);
		\draw (c1) -- + (-30:.4);
		
		\node at (11,-1.5){+ convergent};
		
		\node at (0,-3) {$G_5^{(1)}= $};
		
		\node [treeVertex] (c1) at (2,-3){};
		\node [treeVertex] (c2) at ($(c1) + (180:.8)$){};
		\draw [bend angle =40, bend left] (c1) to (c2);
		\draw [bend angle =40, bend right] (c1) to (c2);
		\draw (c2) -- + (180:.4);
		\draw (c1) -- + (45:.4);
		\draw (c1) -- + ( 15:.4);
		\draw (c1) -- + (-15:.4);
		\draw (c1) -- + (-45:.4);

		\node at (3.2,-3) {$+$};
		
		\node [treeVertex] (c1) at (5,-3){};
		\node [treeVertex] (c2) at ($(c1) + (160:.8)$){};
		\node [treeVertex] (c3) at ($(c1) + (0:.8)$){};
		\draw [bend angle =40, bend left] (c1) to (c2);
		\draw [bend angle =40, bend right] (c1) to (c2);
		\draw (c2) -- + (160:.4);
		\draw (c1) -- (c3);
		\draw (c1) -- + ( 220:.6);
		\draw (c3) -- + ( 45:.4);
		\draw (c3) -- + (-45:.4);
		\draw (c3) -- + (0:.4);
		
		\node at (7,-3) {$+$};
		
		\node [treeVertex] (c1) at (9,-3){};
		\node [treeVertex] (c2) at ($(c1) + (180:.8)$){};
		\draw [bend angle =40, bend left] (c1) to (c2);
		\draw [bend angle =40, bend right] (c1) to (c2);
		\draw (c2) -- + (210:.4);
		\draw (c2) -- + (150:.4);
		\draw (c1) -- + ( 40:.4);
		\draw (c1) -- + ( 0:.4);
		\draw (c1) -- + (-40:.4);
		
		\node at (10,-3) {$+$};
		
		\node [treeVertex] (c1) at (11,-3){};
		\node [treeVertex] (c3) at ($(c1) + (0:.8)$){};
		\node [treeVertex] (c2) at ($(c3) + (45:.8)$){};
		\draw [bend angle =40, bend left] (c3) to (c2);
		\draw [bend angle =40, bend right] (c3) to (c2);
		\draw (c2) -- + (45:.4);
		\draw (c1) -- (c3);
		\draw (c1) -- + ( 150:.4);
		\draw (c1) -- + ( 210:.4);
		\draw (c3) -- + (-20:.6);
		\draw (c3) -- + (-50:.6);

		\node at (2.5,-4.5) {$+$};
		
		\node [treeVertex] (c1) at (5,-4.3){};
		\node [treeVertex] (c2) at ($(c1) + (200:.8)$){};
		\node [treeVertex] (c3) at ($(c1) + (-20:.8)$){};
		\node [treeVertex] (c4) at ($(c2) + (160:.8)$){};
		\draw [bend angle =40, bend left] (c2) to (c4);
		\draw [bend angle =40, bend right] (c2) to (c4);
		\draw (c1) -- + (90:.4);
		\draw (c1) -- (c3);
		\draw (c3) -- + ( -50:.4);
		\draw (c3) -- + (20:.4);
		\draw (c1) -- (c2);
		\draw (c2) -- + (230:.6);
		\draw (c4) -- + (160:.4);
		
		\node at (7,-4.5) {$+$};
		
		\node [treeVertex] (c1) at (8.8,-4.3){};
		\node [treeVertex] (c2) at ($(c1) + (200:.8)$){};
		\node [treeVertex] (c3) at ($(c1) + (-20:.8)$){};
		\draw [bend angle =40, bend left] (c1) to (c2);
		\draw [bend angle =40, bend right] (c1) to (c2);
		\draw (c2) -- + (160:.4);
		\draw (c2) -- + (230:.4);
		\draw (c1) -- + (90:.4);
		\draw (c1) -- (c3);
		\draw (c3) -- + ( 20:.4);
		\draw (c3) -- + (-50:.4);
		
		\node at (11.3,-4.5){+ convergent};
		
	\end{tikzpicture}
	\caption{Divergent contribution  to the $4-$ and 5-point amplitudes. Permutations of external edges are not indicated. }
	\label{fig_rec_1L}
\end{figure}

\begin{example}[5-point function]\label{ex_rec_1loop_An}
	Assuming  \cref{rec_bn} for $u=1$, the graphs  in \cref{fig_rec_1L} evaluate to
	\begin{align*}
		G^{(1)}_4 &= -\lambda^4 \left( \langle 4 \rangle  s_1 (s_2+s_3+s_3) M^{(1)}(s_1)+\langle 12 \rangle  s_1 s_2 \frac{1}{s_{1+2}} (s_3+s_4) M^{(1)}(s_1)+  \right.  \\
		&\qquad \left.  +  \langle 3 \rangle(s_1 + s_2) (s_3+s_4) M^{(1)} (s_{1+2})  \right)  + \text{convergent graphs} \\
		G^{(1)}_5 &= -\lambda^5 \left( \langle 5 \rangle s_1 (s_2+s_3+s_4+s_5) M^{(1)} (s_1)   + \langle 20 \rangle  s_1 s_2 \frac{1}{s_{1+2}}(s_3+s_4+s_5) M^{(1)} (s_1) \right.\\
		&\qquad  + \langle 10 \rangle (s_1+s_2)(s_3+s_4+s_5) M^{(1)}(s_{1+2})  + \langle 15 \rangle (s_1+s_2) \frac{1}{s_{1+2} }s_3 s_4 s_5 M^{(1)}(s_3)\\
		&\qquad \left. + \langle 60 \rangle s_1 s_2 \frac{1}{s_{1+2}}s_3 \frac{1}{s_{4+5}}(s_4+s_5) M^{(1)}(s_1)  +\langle 30 \rangle (s_1+s_2) s_3 \frac{1}{s_{4+5}} (s_4+s_5) M^{(1)}(s_{1+2})  \right)+ \text{c}.   
	\end{align*}
\end{example}

We are interested only in the leading terms of $G_n^{(1)}$, which do not involve uncancelled propagators, because all other terms are products of  $G_j^{(1)}$ with $j<n$. The divergence of the leading terms is cancelled by the meta-counterterm $C_{n,2}^{(1)}$.

\begin{lemma}\label{lem_Cn2_rec_1loop}
	In $D=4-2\epsilon$ dimensions and for $s_p=p^2$ and assuming the relations \cref{rec_bn} for $u=1$, the one-loop meta-counterterm with $n$ edges cancels precisely two of its edges and reads
	\begin{align*}
		C_n^{(1)} \equiv C_{n,2}^{(1)} &= -\lambda^n \frac{2^{n-3}}{(4\pi)^2}  \frac 1 \epsilon E_2 \left( p_1^2, \ldots, p_n^2 \right).
	\end{align*}
	 $E_2(s_1, s_2, \ldots, s_n)$ is the elementary symmetric polynomial of order two in $n$ variables.
\end{lemma}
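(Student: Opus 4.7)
The proof reduces to a bookkeeping exercise built on \cref{lem_Cn2} (the general formula for $C_{n,2}^{(l)}$) specialised to one loop, together with the substitution $b_n = \lambda^{n-2}$ from \cref{rec_bn} with $u=1$.

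First I would justify the identification $C_n^{(1)} = C_{n,2}^{(1)}$, i.e.\ that the one-loop counterterm really does cancel precisely two adjacent edges. In $D=4-2\epsilon$ the only one-loop topology with a superficial divergence is the bubble $M^{(1)}$, which sits on two metavertices; any one-loop graph with fewer than two metavertices is forbidden in the connected perspective (only external metavertices exist, and a one-vertex graph is a tadpole, which we discard). Hence by \cref{thm_metacounterterms} only a $k=2$ meta-counterterm can appear at one loop, and its value for the $n$-point amplitude is exactly $C_{n,2}^{(1)}$.

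Next I would specialise \cref{lem_Cn2} to $l=1$. The divergent part of the one-loop bubble is momentum-independent, and matching $l=1$ terms in \cref{G2div} and \cref{G2div_massless} gives $M^{(1)}_{\text{\normalfont div}} = -1/[(4\pi)^2\epsilon]$. The momentum-dependent factor then collapses by the counting argument of \cref{ex_Cn2_1loop}:
\[
\langle \tbinom n j \rangle\,(s_1+\cdots+s_j)(s_{j+1}+\cdots+s_n) = 2\tbinom{n-2}{j-1}\,E_2(s_1,\ldots,s_n),
\]
since the $\binom n j\cdot j(n-j)$ monomials distribute evenly across the $\binom n 2$ terms of $E_2$. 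The key simplification from $u=1$ now kicks in: $b_{2+j}b_{n+2-j} = \lambda^j\,\lambda^{n-j} = \lambda^n$ is independent of $j$, so it pulls out of the sum, leaving only $\sum_{j=1}^{n-1}\binom{n-2}{j-1}=2^{n-2}$.

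Collecting all prefactors—$\tfrac{1}{2}$ from the partition of vertices in \cref{lem_Cn2}, $\tfrac{1}{(l+1)!}=\tfrac{1}{2}$ from the bubble symmetry, the combinatorial $2$ from the symmetric sum, and $2^{n-2}$ from the binomial identity—yields precisely $2^{n-3}$, while the sign comes from $M^{(1)}_{\text{\normalfont div}}$. This reproduces the claimed expression. I expect no genuine obstacle; the only pitfall is keeping the three distinct symmetry factors straight so that the final power lands at $2^{n-3}$ rather than $2^{n-2}$. The conceptual point worth highlighting is that the exponential choice $u=1$ is the unique assumption that decouples the product of $b$-coefficients from the partition sum, allowing the geometric-series identity to give a closed-form all-$n$ answer.
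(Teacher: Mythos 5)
Your proof is correct and follows essentially the same route as the paper: specialise \cref{lem_Cn2} to $l=1$, note that the bubble is the only divergent one-loop topology so only $k=2$ contributes, use the counting identity of \cref{ex_Cn2_1loop} to collapse the momentum factor onto $E_2$, and exploit $b_{j+2}b_{n-j+2}=\lambda^n$ to reduce the partition sum to $\sum_j\binom{n-2}{j-1}=2^{n-2}$. Your explicit tally of the three symmetry factors landing on $2^{n-3}$ is, if anything, slightly more careful than the paper's own bookkeeping.
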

\begin{proof}
	Follows from \cref{lem_Cn2} since at one-loop level, the only divergent graph is the one-loop-multiedge, or, equivalently, $C_{n,k>2}^{(1)}=0$. The multiedge has a symmetry factor $\frac 12$ and, as computed in \cref{ex_multiedges}, its divergent part is momentum-independent,  $M^{(1)}_{\text{div}} (s_{P_j}) = -(4\pi)^{-2} \epsilon^{-1}$. Therefore  we   insert \cref{rec_bn} into \cref{ex_Cn2_1loop}: 
	\begin{align*}
		C_{n,2}^{(1)} &=  -E_2 (s_1, \ldots, s_n) \frac 12 (4\pi)^{-2} \epsilon^{-1}\sum_{j=1}^{n-1} \binom{n-2}{j-1} \lambda^j \lambda^{n-j} =  -E_2 (s_1, \ldots, s_n) \frac 1 {2(4\pi)2\epsilon }  (1+1)^{n-2}.
	\end{align*}
\end{proof}

\subsection{Higher correlation functions in momentum space}

We have seen in \cref{sec:pos_higher} that if \cref{rec_bn} is fulfilled with $u=1$, then the $n$-point-amplitudes factor into products of all-order 2-point-functions in position space. A similar statement holds in momentum space.

\begin{theorem}[Feynman rules of exponential diffeomorphism]\label{thm_rec_fey}
	Assume the coefficients $b_n$ fulfil \cref{rec_bn} for $u=1$. Then the connected $(n>2)$-point amplitude $G_n$ is given by the following connected graphs:
	\begin{enumerate}
		\item The vertices are metavertices $V_n = -i\lambda^{n-2}(s_1 + \ldots + s_n)$ for any valence $n\geq 3$.
		\item Edges between metavertices are given by the function $G_2(s)$.
		\item $G_2(s)$ is the full, amputated 2-point-function, the sum of all multiedges \cref{rec_u1_G2fin}.
		\item In $G_{n>2}$, there is at most one  edge directly between any two metavertices.
		\item Every Metavertex cancels precisely one of the $n$ external edges.
		\item Every external edge which is not cancelled by a metavertex is dressed by $G_2(s)$.
		
	\end{enumerate}
\end{theorem}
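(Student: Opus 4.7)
The plan is to prove this theorem by a skeleton expansion of the graphs generated by the original connected perspective (\cref{thm_feynmanrules}), exploiting the exponential factorisation $b_n = \lambda^{n-2}$ to sum all multi-edge corrections into dressed propagators. Every contribution to the amputated connected $n$-point amplitude in the perspective of \cref{thm_feynmanrules} can be uniquely decomposed as a simple skeleton graph on metavertices together with a multiplicity assignment for each skeleton edge; for $u=1$ the sum over these multiplicities factorises edge by edge into $G_2$.

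Concretely, I would first associate to every graph $\Gamma$ contributing to the amputated connected $n$-point function its \emph{skeleton} $\Gamma_s$: the simple graph obtained by replacing each bundle of $\ell+1 \geq 1$ parallel edges between two metavertices by a single skeleton edge (recording the multiplicity $\ell$), and by absorbing any chain of self-energy bubbles on an external leg into a dressing of that leg. This grouping produces a simple connected graph on metavertex nodes satisfying rule~(4), because by construction no two metavertices are joined by more than one skeleton edge.

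The second step is to sum the amplitudes of all graphs with a fixed skeleton. The key algebraic input is that for $b_n = \lambda^{n-2}$ the addition of a single parallel edge on a skeleton bond contributes two new factors of $\lambda$, one from each adjacent metavertex, so that the metavertex factor $V_{k+\ell}$ of the inflated graph factorises as $V_k \cdot \lambda^{\ell}$ times the external summand structure, which depends only on $\Gamma_s$. Combined with the multi-edge factor $M^{(\ell)}(s_e)/(\ell+1)!$ on each skeleton bond, the sum over $\ell_e \geq 0$ on each skeleton edge $e$ produces, by the defining identity \cref{G2}, precisely the 2-point function $G_2(s_e)$ (up to the conventional external-propagator factor that defines the amputation). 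The analogous sum on each external leg establishes rule~(6). Rule~(5), that every metavertex cancels exactly one external edge, is inherited from the ``no internal cancellations'' clause of \cref{thm_feynmanrules}: once all multi-edges have been absorbed into $G_2$, each skeleton metavertex has a well-defined list of external adjacencies, and by the original restriction only these external summands of $V_k$ contribute.

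The principal obstacle I anticipate is the combinatorial bookkeeping that the multiplicities $\ell_e$ on neighbouring skeleton edges can be summed independently and that the symmetry factors of all the multi-edges $M^{(\ell)}/(\ell+1)!$ assemble correctly into $G_2$. A secondary delicate point is ensuring that summing multi-edge insertions on external legs does not double-count with the metavertex cancellation structure; an independent consistency check is provided by the position-space factorisation established in \cref{sec:pos_higher}, which must reproduce the same expression upon Fourier transformation and thereby serves as a cross-check on both the combinatorics and the identification of the dressed edge with $G_2$.
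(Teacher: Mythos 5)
Your proposal is correct and takes essentially the same route as the paper: the paper's proof likewise reduces everything to the observation that, for $b_n=\lambda^{n-2}$, a bundle of $j$ parallel edges between metavertices $V_{n_1},V_{n_2}$ carries the same $\lambda$-power ($\lambda^{n_1+n_2-4}$) as the $(j-1)$-loop term of $G_2$ inserted between the smaller metavertices $V_{n_1-j+1},V_{n_2-j+1}$, which is exactly your edge-by-edge factorisation of the skeleton sum. One minor caution: because internal metavertices are forbidden there are no \emph{chains} of self-energy bubbles on a leg — each skeleton edge or uncancelled external leg carries a single multiedge bundle whose multiplicity you sum into $G_2$, so $G_2$ here is the sum of multiedges, not a Dyson-resummed chain.
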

\begin{proof}
	Consequence of \cref{thm_feynmanrules} by identification of the connected 2-point amplitude with a full propagator. The only point to be shown is that the recursion relations \cref{rec_bn} are sufficient to identify each internal multiedge with the corresponding term in $G_2$. 
	
	To see this, let there be two metavertices $V_{n_1}, V_{n_2}$ and between them $j$ propagators, this object is $\propto \lambda^{n_1+n_2-4}$. There are $n_1-j$ resp. $n_2-j$ edges left adjacent to the metavertices which are not between them. Now, the contribution to $G_2$ which has $j$ internal edges is the $(j-1)$-loop multiedge and comes with prefactor $\lambda^{2j-2}$ due to its two vertices $V_{j+1} \propto \lambda^{j-1}$. The two vertices cancel the two outer edges of the multiedge, hence these propagators do not contribute to $G_2$. To make up for the missing external edges, one needs to connect to $G_2$ two metavertices $V_{n_1-j+1}$ and $V_{n_2-j+1}$ respectively. These produce factors $\lambda^{n_1-j-1}$ and $\lambda^{n_2-j-1}$. In total, the $j$-edge term of $C_2$ together with the two vertices is $\propto \lambda^{2j-2 + n_1-j-1+n_2-j-1} = \lambda^{n_1+n_2-4}$. This is the same factor as if we did not insert $C_2$ and instead used $j$ internal edges. 
	
	Phrased differently, connecting $C_2$ to a vertex $V_n$ amounts to splitting $V_n$ into two parts $V_{n_1} \cdot V_{n-n_1+2}$ and cancelling the intermediate propagator. But in both cases, the amplitude is $\lambda^{n-2}$ by \cref{rec_bn}.
	
\end{proof}

Observe that to compute a $n$-point function  by \cref{thm_rec_fey}, assuming $G_2$ is known, only finitely many integrals remain  to be solved. The graph with highest loop number is the completely connected graph on $n$ vertices, it has $\frac{n(n-1)}{2}$ internal edges and hence $\frac 12 (n^2-3n+2)$ loops.

\begin{example}[Lowest amplitudes]
	Following \cref{thm_rec_fey}, the   lowest $n$-point amplitudes are shown in \cref{fig_rec_npoint}. If one knows $G_2$, one can compute $G_3$ with only one integration and $G_4$ with three integrations. Since all metavertices are external, the graphs constructed from \cref{thm_rec_fey} all carry symmetry factor 1. 
\end{example}

\begin{figure}[htbp] 
	\centering
	\begin{tikzpicture}
		
		\node [anchor=west]at (-1,.5) {$ G_2 = $};
		
		\node [fullVertex] (c1) at (1.2,.5){};
		\draw (c1) -- + (180:.5);
		\draw (c1) -- + (0:.5);

		\node   at (6,.5) {$ G_3= \ \left \langle 3 \right \rangle  $};
		
		\node [treeVertex] (c1) at (7.6,.5){};
		\node [fullVertex] (c2) at ($(c1) + (140:.5)$){};
		\node [fullVertex] (c3) at ($(c1) + (220:.5)$){};
		\draw (c1) -- (c2);
		\draw (c1) -- (c3);
		\draw  (c2) -- + (140:.5);
		\draw [>-](c1) -- + (0:.5);
		\draw  (c3) -- + (220:.5);
		
		\node at (8.5,.5) {$+$};
		
		\node [treeVertex] (c1) at (10,.5){};
		\node [fullVertex] (c2) at ($(c1) + (140:.5)$){};
		\node [fullVertex] (c3) at ($(c1) + (220:.5)$){};
		\node [treeVertex] (c4) at ($(c1) + (140:1)$){};
		\node [treeVertex] (c5) at ($(c1) + (220:1)$){};
		\node [fullVertex] (c6) at ($(c1) + (180:.76)$){};
		
		\draw (c1) -- (c2);
		\draw (c1) -- (c3);
		\draw (c4) -- (c2);
		\draw (c5) -- (c3);
		\draw (c4) -- (c6);
		\draw (c5) -- (c6);
		\draw [>-](c4) -- + (140:.5);
		\draw [>-](c1) -- + (0:.5);
		\draw [>-](c5) -- + (220:.5);

		\node [anchor=west] at (-1,-2) {$G_4 = \ \left \langle 4 \right \rangle  $};
		\node [treeVertex] (c1) at (1.5,-2){};
		\node [fullVertex] (c2) at ($(c1) + (135:.5)$){};
		\node [fullVertex] (c3) at ($(c1) + (225:.5)$){};
		\node [fullVertex] (c4) at ($(c1) + (315:.5)$){};
		
		\draw (c1) -- (c2);
		\draw (c1) -- (c3);
		\draw (c1) -- (c4);
		
		\draw [>-](c1) -- + (45:.8);
		\draw (c2) -- + (135:.5);
		\draw (c3) -- + (225:.5);
		\draw (c4) -- + (315:.5);
		
		\node at (3,-2){$+ \ \left \langle 12 \right \rangle  $};
		
		\node [treeVertex] (c1) at (4.5,-2){};
		\node [fullVertex] (c2) at ($(c1) + (0:.5)$){};
		\node [treeVertex] (c3) at ($(c1) + (0:1)$){};
		\node [fullVertex] (c4) at ($(c1) + (140:.5)$){};
		\node [fullVertex] (c5) at ($(c3) + (315:.5)$){};
		
		\draw (c1) -- (c4);
		\draw (c1) -- (c2);
		\draw (c2) -- (c3);
		\draw (c3) -- (c5);
		
		\draw [>-](c1) -- + (220:.8);
		\draw [>-](c3) -- + (40:.8);
		\draw (c4) -- + (140:.5);
		\draw (c5) -- + (315:.5);
		
		\node at (7,-2){$+ \ \left \langle 12 \right \rangle  $};
		
		\node [treeVertex] (c1) at (8.5,-2){};
		\node [treeVertex] (c2) at ($(c1) + (30:1)$){};
		\node [treeVertex] (c3) at ($(c1) + (-30:1)$){};
		\node [fullVertex] (c4) at ($(c1) + (140:.5)$){};
		\node [fullVertex] (c5) at ($(c1) !.5! (c2)$){};
		\node [fullVertex] (c6) at ($(c1) !.5! (c3)$){};
		\node [fullVertex] (c7) at ($(c3) !.5! (c2)$){};

		\draw (c1) -- (c4);
		\draw (c1) -- (c3);
		\draw (c1) -- (c2);
		\draw (c2) -- (c3);
		
		\draw [>-](c1) -- + (220:.8);
		\draw [>-](c2) -- + (40:.8);
		\draw (c4) -- + (140:.5);
		\draw [>-](c3) -- + (315:.8);

		\node at (.8,-4.5) {$+ \ \left \langle 3 \right \rangle  $};
		
		\node [treeVertex] (c1) at (1.8,-3.9){};
		\node [treeVertex] (c2) at ($(c1) + (1.2,0)$){};
		\node [treeVertex] (c3) at ($(c1) + (0,-1.2)$){};
		\node [treeVertex] (c4) at ($(c1) + (1.2,-1.2)$){};
		
		\node [fullVertex] (c12) at ($(c1)!.5!(c2)$){};
		\node [fullVertex] (c13) at ($(c1)!.5!(c3)$){};
		\node [fullVertex] (c34) at ($(c3)!.5!(c4)$){};
		\node [fullVertex] (c24) at ($(c2)!.5!(c4)$){};
		
		\draw (c1) -- (c12);
		\draw (c12) -- (c2);
		\draw (c1) -- (c13);
		\draw (c13) -- (c3);
		\draw (c3) -- (c34);
		\draw (c34) -- (c4);
		\draw (c4) -- (c24);
		\draw (c24) -- (c2);
		
		\draw [>-](c2) -- + (45:.8);
		\draw [>-](c1) -- + (135:.8);
		\draw [>-](c3) -- + (225:.8);
		\draw [>-](c4) -- + (315:.8);
		
		\node at (4.8,-4.5) {$+ \ \left \langle 6 \right \rangle  $};
		
		\node [treeVertex] (c1) at (5.8,-3.9){};
		\node [treeVertex] (c2) at ($(c1) + (1.2,0)$){};
		\node [treeVertex] (c3) at ($(c1) + (0,-1.2)$){};
		\node [treeVertex] (c4) at ($(c1) + (1.2,-1.2)$){};
		
		\node [fullVertex] (c12) at ($(c1)!.5!(c2)$){};
		\node [fullVertex] (c13) at ($(c1)!.5!(c3)$){};
		\node [fullVertex] (c34) at ($(c3)!.5!(c4)$){};
		\node [fullVertex] (c24) at ($(c2)!.5!(c4)$){};
		\node [fullVertex] (c23) at ($(c2)!.5!(c3)$){};
		
		\draw (c1) -- (c12);
		\draw (c12) -- (c2);
		\draw (c1) -- (c13);
		\draw (c13) -- (c3);
		\draw (c3) -- (c34);
		\draw (c34) -- (c4);
		\draw (c4) -- (c24);
		\draw (c24) -- (c2);
		\draw (c2) -- (c23);
		\draw (c23) -- (c3);

		\draw [>-](c2) -- + (45:.8);
		\draw [>-](c1) -- + (135:.8);
		\draw [>-](c3) -- + (225:.8);
		\draw [>-](c4) -- + (315:.8);
		
		\node at (8,-4.5) {$+$};
		
		\node [treeVertex] (c1) at (9,-3.9){};
		\node [treeVertex] (c2) at ($(c1) + (1.2,0)$){};
		\node [treeVertex] (c3) at ($(c1) + (0,-1.2)$){};
		\node [treeVertex] (c4) at ($(c1) + (1.2,-1.2)$){};
		
		\node [fullVertex] (c12) at ($(c1)!.5!(c2)$){};
		\node [fullVertex] (c13) at ($(c1)!.5!(c3)$){};
		\node [fullVertex] (c34) at ($(c3)!.5!(c4)$){};
		\node [fullVertex] (c24) at ($(c2)!.5!(c4)$){};
		\node [fullVertex] (c23) at ($(c2)!.3!(c3)$){};
		\node [fullVertex] (c14) at ($(c1)!.3!(c4)$){};
		
		\draw (c1) -- (c12);
		\draw (c12) -- (c2);
		\draw (c1) -- (c13);
		\draw (c13) -- (c3);
		\draw (c3) -- (c34);
		\draw (c34) -- (c4);
		\draw (c4) -- (c24);
		\draw (c24) -- (c2);
		\draw (c2) -- (c23);
		\draw (c23) -- (c3);
		\draw (c1) -- (c14);
		\draw (c14) -- (c4);

		\draw [>-](c2) -- + (45:.8);
		\draw [>-](c1) -- + (135:.8);
		\draw [>-](c3) -- + (225:.8);
		\draw [>-](c4) -- + (315:.8);

	\end{tikzpicture}
	\caption{Connected amplitudes of a theory fulfilling \cref{rec_bn} with $u=1$. A prefactor $\left \langle j \right \rangle $ indicates the presence of in total $j$ graphs of the corresponding topology where external edges are permuted.  }
	\label{fig_rec_npoint}
\end{figure}
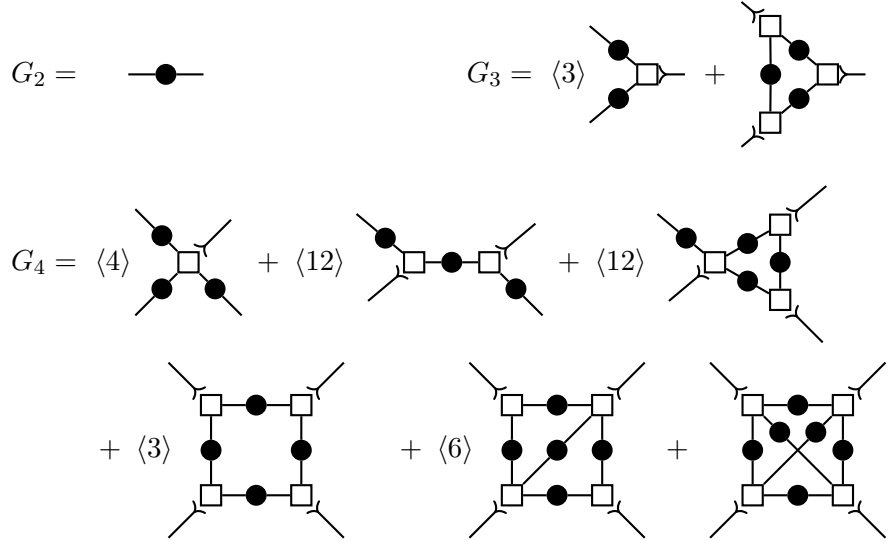

\section{1PI counterterms}\label{sec_1PI}
Knowing the meta-counterterms $C_{n,k}^{(l)}$, which cancel the divergences of connected amplitudes, we can reconstruct the amputated 1PI counterterms which we call $c_{n,k}^{(l)}$. The indices $n,l,k$ have the same meaning as for the meta-counterterms in \cref{sec_divergences}. Eventually, the sum $c_n^{(l)} := \sum_{k=0}^n c_{n,k}^{(l)}$ represents the $l$-loop counterterm to the 1PI $n$-point function, i.e. this is the object which normally is called counterterm in a local quantum field theory. 

\subsection{1PI counterterms with l=1 loops}
The 1PI one-loop graph of the 2-point-function coincides with the connected 2-point function \cref{C22l}, therefore 
\begin{align}\label{c2_1loop}
C_2^{(1)}= C_{2,2}^{(1)} = b_3^2 s^2 \frac{M^{(1)}_{\text{div}}}{2} &= c_2^{(1)} .
\end{align}

For the 3-point function, the connected 3-point divergence is the product of the 1PI 3-point divergence $c_3^{(1)}$ and three adjacent connected 2-point divergences. To one-loop order, this product can contain only  one divergent term  in total, either $c_3^{(1)}$ or one of the propagator corrections, consequently
\begin{align}\label{c3_1loop_ansatz}
C_3^{(1)} &= c_3^{(1)} + \langle 3 \rangle iv_3 \frac i {s_1} c_2^{(1)} (s_1).
\end{align}

First consider the case where all external legs are onshell, i.e. $C_{3,0}^{(1)}$. Then the meta-counterterms $C_3^{(1)}$ and $C_2^{(1)}$ vanish due to \cref{Cn0} and \cref{c3_1loop_ansatz} simplifies to
\begin{align}\label{c30_1loop}
0 &= c_{3,0}^{(1)} + 0.
\end{align}
Now let one of the legs be offshell. The connected counterterm $C^{(1)}_{3,1}$ vanishes due to \cref{Cn1} but one of the terms $C_2^{(1)}$ in \cref{c3_1loop_ansatz} survives, so
\begin{align}\label{c31_1loop_ansatz}
C_{3,1}^{(1)} &= 0 = c_{3,1}^{(1)} + \langle 3 \rangle  b_3(-i  s_1) \frac{i}{s_1} c_{2,2}^{(1)}(s_1).
\end{align}
This implies that 
\begin{align}\label{c31_1loop}
c_{3,1}^{(1)} &= - \left \langle 3 \right \rangle b_3 c_{2,2}^{(1)} (s_1) =  -b_3^3  \left( s_1^2 + s_2^2 + s_3^2\right) \frac { M^{(1)}_{\text{div}}}2.
\end{align}
For the higher $n$-point functions these arguments become increasingly cumbersome; they are much more transparent if carried out in a graphical manner, see \cref{c3_1loop_picture}.
\begin{figure}[htbp]
	\centering
	\begin{tikzpicture} 
		
		\node at (-2,-1){\cref{c2_1loop}:};
		
		\node[treeCounterVertex, label={[label distance=-2mm]5:{$\scriptstyle (1)$}}] (c) at (0,-1){};
		\draw [>-](c) --++(0:.4);
		\draw [>-](c) --++(180:.4);
		
		\node at (1,-1) {$=$};
		
		\node[counterVertex , label={[label distance=-2mm]5:{$\scriptstyle (1)$}} ] (c) at (2,-1){};
		\draw [>-](c) --++(0:.4);
		\draw [>-](c) --++(180:.4);
		
		\node [anchor=west] at (3,-1) {$=b_3^2 s^2 \frac{M^{(1)}_{\text{div}}}{2}$};

		\node at (-2,-3){\cref{c31_1loop_ansatz}:};
		
		\node [treeCounterVertex, label={[label distance=-2mm]5:{$\scriptstyle (1)$}}  ] (c) at (0,-3) {};
		\draw  (c) --++ (0:.4);
		\draw [-|] (c) --++ (-120:.4);
		\draw [-|] (c) --++ (120:.4);
		
		\node at (0,-3.7){$\underbrace{\hspace{1cm}}_{=0}$};
		
		\node at (1,-3) {$=$};
		
		\node [counterVertex, label={[label distance=-2mm]5:{$\scriptstyle (1)$}}  ] (c) at (2,-3) {};
		\draw  (c) --++ (0:.4);
		\draw [-|] (c) --++ (-120:.4);
		\draw [-|] (c) --++ (120:.4);
		
		\node at (3.5,-3) {$+$};
		
		\node [treeVertex] (c1) at (4.5,-3){};
		\node [counterVertex, label={[label distance=-2 mm]60:{$\scriptstyle (1)$}}] (c2) at ($(c1)+ (0:.7)$){};
		\draw   (c1) -- (c2);
		
		\draw [-|] (c1) -- + (120:.5);
		\draw [-|] (c1) -- + (-120:.5);
		\draw (c2) -- + (0:.5);
		
		\node at (6.5,-3) {$+$};
		
		\node [treeVertex] (c1) at (8,-2.6){};
		\node [counterVertex, label={[label distance=-2mm]120:{$\scriptstyle (1)$}}  ] (c2) at ($(c1)+ (-120:.7)$){};
		\draw   (c1) -- (c2);
		\draw [-|](c1) -- + (120:.5);
		\draw (c1) -- + (0:.5);
		\draw [-|](c2) -- + (-120:.5);
		
		\node [rotate=60] at (8.1,-3.5){$\underbrace{\hspace{.8cm}}_{=0}$};	
		
		\node at (9,-3) {$+$};
		
		\node [treeVertex] (c1) at (10.5,-3.4){};
		\node [counterVertex, label={[label distance=-2mm]30:{$\scriptstyle (1)$}} ] (c2) at ($(c1) + (120:.7)$){};
		\draw   (c1) -- (c2);
		\draw [-|] (c1) -- + (-120:.5);
		\draw (c1) -- + (0:.5);
		\draw [-|] (c2) -- + (120:.5);
		
		\node [rotate=-60] at (9.7,-3){$\underbrace{\hspace{.8cm}}_{=0}$};

		\node at (-2,-5){\cref{c31_1loop}:};
		
		\node [counterVertex, label={[label distance=-2mm]5:{$\scriptstyle (1)$}}  ] (c) at (0,-5) {};
		\draw  (c) --++ (0:.4);
		\draw [-|] (c) --++ (-120:.4);
		\draw [-|] (c) --++ (120:.4);

		\node at (1,-5) {$=-$};

		\node [treeVertex] (c1) at (2,-5){};
		\node [counterVertex, label={[label distance=-2 mm]60:{$\scriptstyle (1)$}}] (c2) at ($(c1)+ (0:.7)$){};
		\draw  [>-<] (c1) -- (c2);
		
		\draw [-|] (c1) -- + (120:.5);
		\draw [-|] (c1) -- + (-120:.5);
		\draw [>-] (c2) -- + (0:.5);
		
		\node [anchor=west] at (3.8,-5) {$=-b_3 \cdot b_3^2 s_1^2 \frac{M^{(1)}_{\text{div}}}{2}$};
		
	\end{tikzpicture}
	\caption{Graphical representation for the computation of $c_{3,1}^{(1)}$. The perpendicular line indicates an external edge which must not be cancelled by the adjacent vertex.  }
	\label{c3_1loop_picture}
\end{figure}
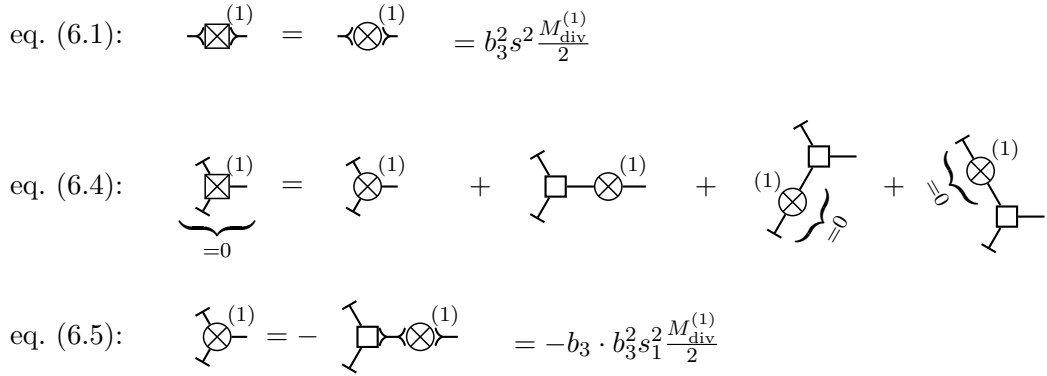

For $c_{3,2}^{(1)}$, the meta-counterterm $C_{3,2}^{(1)}$ does not vanish, see \cref{C32l}. The construction of $c_{3,2}^{(1)}$ is shown in \cref{c32_1loop_picture}, it yields 
\begin{align}\label{c32_1loop}
	c_{3,2}^{(1)} &= C_{3,2}^{(1)} - \left \langle 3 \right \rangle (-ib_3 s_1) \frac i {s_2} c_{2,2}^{(1)}(s_2) - \left \langle 3 \right \rangle (-ib_3 s_2) \frac i {s_1} c_{2,2}^{(1)}(s_1) \nonumber \\
	&= \left \langle 6 \right \rangle   b_{3} \left( b_{4}-b_3^2\right)  s_1 s_2 \frac{M^{(1)}_{\text{div}}}{2} = b_3 \left( b_{4}-b_3^2\right)    \frac{M^{(1)}_{\text{div}}}{2} \cdot 2 E_2 \left( s_1,s_2,s_3 \right) .
\end{align}
Note that we assumed that $M^{(1)}_{\text{div}}$ be independent of momenta only to simplify notation, the result would be $\propto b_3 (b_4-b_3^2)$ regardless.

\begin{figure}[htbp]
	\centering
	\begin{tikzpicture} 
		
		\node at (-2,0) {$-\renop  $};
		
		\draw [semithick]  (-1.3,-.8) to [ncbar=.1](-1.3,.8);
		
		\node [treeVertex] (c1) at (-.7,-.4){};
		\node [treeVertex  ] (c2) at ($(c1)+ (60:.8)$){};
		\draw [bend angle =40, bend left] (c1) to (c2);
		\draw [bend angle =40, bend right] (c1) to (c2);
		\draw [>-](c1) -- + (-120:.5);
		\draw (c2) -- + (0:.5);
		\draw [>-](c2) -- + (120:.5);
		
		\node at (.5,0) {$+$};
		
		\node [treeVertex] (c1) at (1.3,.4){};
		\node [treeVertex] (c2) at ($(c1) + (-60:.8)$){};
		\draw [bend angle =40, bend left] (c1) to (c2);
		\draw [bend angle =40, bend right] (c1) to (c2);
		
		\draw [>-](c1) -- + (120:.5);
		\draw (c2) -- + (0:.5);
		\draw [>-](c2) -- + (-120:.5);

		\draw [semithick] (2.5,-.8) to [ncbar=-.1] (2.5,.8);
		
		\node at (3,0) {$=$};
		
		\node [treeCounterVertex, label={[label distance=-2mm]5:{$\scriptstyle (1)$}}  ] (c) at (4,0) {};
		\draw [>-] (c) --++ (0:.5);
		\draw [>-] (c) --++ (120:.5);
		\draw [-|] (c) --++ (-120:.4);

		\node at (5,0) {$=$};
		
		\node [counterVertex, label={[label distance=-2mm]5:{$\scriptstyle (1)$}}  ] (c) at (6,0) {};
		\draw [>-] (c) --++ (0:.5);
		\draw [>-] (c) --++ (120:.5);
		\draw [-|] (c) --++ (-120:.4);
		
		\node at (7.5,0) {$+$};
		
		\node [treeVertex] (c1) at (8.5,0){};
		\node [counterVertex, label={[label distance=-2 mm]60:{$\scriptstyle (1)$}}] (c2) at ($(c1)+ (0:.7)$){};
		\draw   [-<](c1) -- (c2);
		
		\draw [-|] (c1) -- + (-120:.5);
		\draw [>-] (c1) -- + (120:.5);
		\draw [>-] (c2) -- + (0:.5);

		\node at (10,0) {$+$};
		
		\node [treeVertex] (c1) at (11.5,-.4){};
		\node [counterVertex, label={[label distance=-2mm]30:{$\scriptstyle (1)$}} ] (c2) at ($(c1) + (120:.7)$){};
		\draw   [-<](c1) -- (c2);
		\draw [-|] (c1) -- + (-120:.5);
		\draw [>-] (c1) -- + (0:.5);
		\draw [>-] (c2) -- + (120:.5);

	\end{tikzpicture}
	\caption{Graphical notation for the computation of $c_{3,2}^{(1)}$. The meta-counterterm has been taken from \cref{C32_picture}.}
	\label{c32_1loop_picture}
\end{figure}
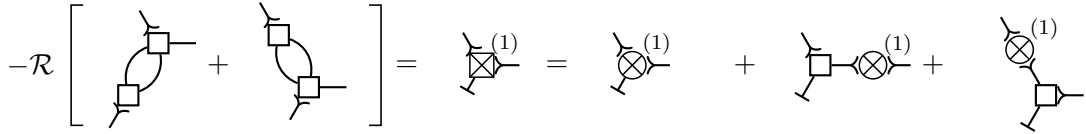

Finally, $C_{3,3}^{(1)}=0$ and there is no divergent connected graph that cancels three external edges at one loop, therefore
\begin{align}\label{c33_1loop}
	c_{3,3}^{(1)}=0.
\end{align}

\begin{figure}[htbp]
	
	\begin{tikzpicture}

	\node [treeCounterVertex, label={[label distance=-2mm]5:{$\scriptstyle (1)$}}  ] (c) at (-1,-3) {};
	\draw [-|]  (c) --++ (0:.4);
	\draw [-|] (c) --++ (-90:.4);
	\draw [-|] (c) --++ (90:.4);
	\draw [-|] (c) --++ (180:.4);

	\node at (.5,-3) {$= 0 = $};
	
	\node [counterVertex, label={[label distance=-2mm]5:{$\scriptstyle (1)$}}  ] (c) at (2,-3) {};
	\draw [-|] (c) --++ (0:.4);
	\draw [-|] (c) --++ (-90:.4);
	\draw [-|] (c) --++ (90:.4);
	\draw [-|] (c) --++ (180:.4);

	\node at (3.2,-3) {$+\ \langle 6 \rangle$};
	
	\node [treeVertex] (c1) at (5,-3.4){};
	\node [counterVertex, label={[label distance=-2mm]20:{$\scriptstyle (1)$}}  ] (c2) at ($(c1)+ (135:1)$){};
	\draw  [>-<<] (c1) -- (c2);
	\draw [-|] (c1) --++ (0:.4);
	\draw [-|] (c1) --++ (-90:.4);
	\draw [-|] (c2) --++ (90:.4);
	\draw [-|] (c2) --++ (180:.4);
	
	\node at (6,-3) {$+\ \langle 3 \rangle$};
	
	\node [treeVertex] (c1) at (8,-3.4){};
	\node [counterVertex, label={[label distance=-2mm]30:{$\scriptstyle (1)$}} ] (c2) at ($(c1) + (135:.7)$){};
	\node [treeVertex] (c3) at ($(c1) + (135:1.4)$){};
	\draw  [>-<] (c1) -- (c2);
	\draw  [>-<] (c2) -- (c3);
	\draw [-|] (c1) --++ (0:.4);
	\draw [-|] (c1) --++ (-90:.4);
	\draw [-|] (c3) --++ (90:.4);
	\draw [-|] (c3) --++ (180:.4);

	\end{tikzpicture}
	\caption{Construction of the onshell 4-point meta-counterterm from metavertices 1PI counterterms.}
	\label{c4_1loop_picture}
\end{figure}
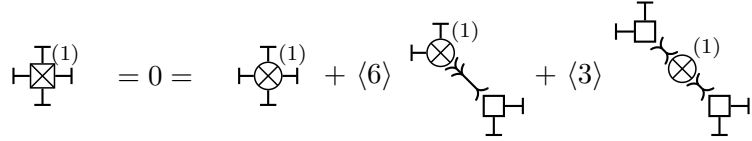

\Cref{c4_1loop_picture} indicates the contributions to the onshell 4-point meta-counterterm $C_4^{(1)}$. Using \cref{c31_1loop}, the last two summands partially cancel and what remains is 
\begin{align*}
c^{(1)}_{4,0} &= -\langle 3 \rangle c_{3,1}^{(1)} \frac{i}{s_{1+2}} (-ib_3 s_{1+2}) = -b_3^4 \left( s_{1+2}^2 + s_{1+3}^2 + s_{1+4}^2 \right) \frac  { M^{(1)}_{\text{div}}}2 \frac{1}{\epsilon}.
\end{align*}
With one external leg offshell, the meta-counterterm \cref{Cn1_picture} still vanishes and we obtain \cref{c41_1loop_picture}. 

\begin{figure}[htbp]
	
	\begin{tikzpicture}

\node [treeCounterVertex, label={[label distance=-2mm]5:{$\scriptstyle (1)$}}  ] (c) at (-1,-1) {};
\draw [>-]  (c) --++ (0:.4);
\draw [-|] (c) --++ (-90:.4);
\draw [-|] (c) --++ (90:.4);
\draw [-|] (c) --++ (180:.4);

	\node at (1,-1) {$=0=$};
	
	\node [counterVertex, label={[label distance=-2mm]5:{$\scriptstyle (1)$}}  ] (c) at (2,-1) {};
	\draw [>-]  (c) --++ (0:.4);
	\draw  [-|](c) --++ (-90:.4);
	\draw  [-|](c) --++ (90:.4);
	\draw [-|] (c) --++ (180:.4);
	
	\node at (3.5,-1) {$+\langle 3 \rangle$};
	
	\node [treeVertex] (c1) at (5.5,-1.4){};
	\node [counterVertex, label={[label distance=-2mm]30:{$\scriptstyle (1)$}} ] (c2) at ($(c1) + (135:.7)$){};
	\node [treeVertex] (c3) at ($(c1) + (135:1.4)$){};
	\draw [-<]  (c1) -- (c2);
	\draw  [>-<]  (c2) -- (c3);
	\draw  [>-](c1) --++ (0:.4);
	\draw  [-|] (c1) --++ (-90:.4);
	\draw  [-|](c3) --++ (90:.4);
	\draw [-|] (c3) --++ (180:.4);
	
	\node at (6.5,-1) {$+\ \langle 3 \rangle$};
	
	\node [treeVertex] (c1) at (8,-1.1){};
	\node [counterVertex, label={[label distance=-2mm]20:{$\scriptstyle (1)$}}  ] (c2) at ($(c1)+ (135:.7)$){};
	\draw  [-<<] (c1) -- (c2);
	\draw  [>-](c1) --++ (0:.4);
	\draw  [-|](c1) --++ (-90:.4);
	\draw [-|] (c2) --++ (90:.4);
	\draw [-|] (c2) --++ (180:.4);
	
	\node at (9,-1) {$+\ \langle 3 \rangle$};
	
	\node  [counterVertex, label={[label distance=-2mm]20:{$\scriptstyle (1)$}}  ] (c1) at (10.5,-1.1){};
	\node [treeVertex](c2) at ($(c1)+ (135:.7)$){};
	\draw   [>-<] (c1) -- (c2);
	\draw [>-] (c1) --++ (0:.4);
	\draw  [-|](c1) --++ (-90:.4);
	\draw  [-|](c2) --++ (90:.4);
	\draw  [-|](c2) --++ (180:.4);

	\node [treeCounterVertex, label={[label distance=-2mm]5:{$\scriptstyle (1)$}}  ] (c) at (-1,-3) {};
	\draw [>>-]  (c) --++ (0:.5);
	\draw [-|] (c) --++ (-90:.4);
	\draw [-|] (c) --++ (90:.4);
	\draw [-|] (c) --++ (180:.4);

	\node at (1,-3) {$=0=$};
	
	\node [counterVertex, label={[label distance=-2mm]5:{$\scriptstyle (1)$}}  ] (c) at (2,-3) {};
	\draw [>>-]  (c) --++ (0:.5);
	\draw  [-|](c) --++ (-90:.4);
	\draw  [-|](c) --++ (90:.4);
	\draw [-|] (c) --++ (180:.4);
	
	\node at (3.5,-3) {$+ \ \langle 4 \rangle $};
	
	\node [treeVertex] (c1) at (4.5,-3){};
	\node [counterVertex, label={[label distance=-2 mm]60:{$\scriptstyle (1)$}}] (c2) at ($(c1)+ (0:.7)$){};
	\draw   [>-<](c1) -- (c2);
	
	\draw  [>-](c2) --++ (0:.4);
	\draw [-|] (c1) --++ (-90:.4);
	\draw [-|] (c1) --++ (90:.4);
	\draw  [-|](c1) --++ (180:.4);

	\node at (6.5,-3) {$+\ \langle 3 \rangle$};
	
	\node  [counterVertex, label={[label distance=-2mm]20:{$\scriptstyle (1)$}}  ] (c1) at (8,-3.2){};
	\node [treeVertex](c2) at ($(c1)+ (135:.7)$){};
	\draw   [-<] (c1) -- (c2);
	\draw [>>-] (c1) --++ (0:.5);
	\draw  [-|](c1) --++ (-90:.4);
	\draw  [-|](c2) --++ (90:.4);
	\draw  [-|](c2) --++ (180:.4);

	\end{tikzpicture}
	\caption{4-valent meta-counterterm with one external leg offshell, which can be cancelled once or twice, indicated by arrows. Edges with perpendicular lines must not be cancelled.}
	\label{c41_1loop_picture}
\end{figure}
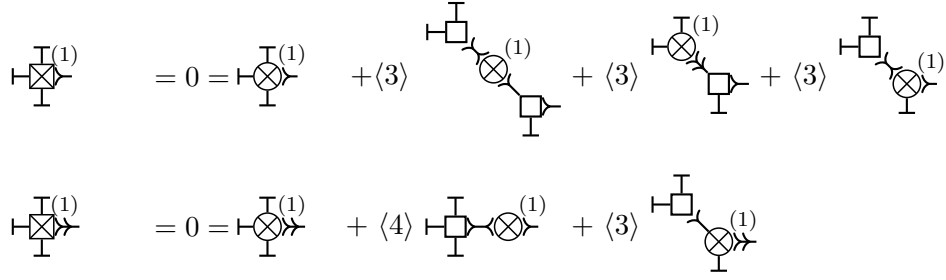

\Cref{c31_1loop} implies that the second and third graph cancel and therefore
\begin{align*} 
c_{4,1}^{(1)} &= \left \langle 4 \right \rangle \left(  b_4-3b_3^2\right)  b_3^2 \frac 12 M^{(1)}_{\text{div}}s_1^2    - \left \langle 4 \right \rangle  2 b_3^2 \left( b_4-b_3^2 \right) \frac 12 M^{(1)}_{\text{div}} \left( s_{1+2} + s_{1+3} + s_{1+4}  \right) s_1.
\end{align*}
The meta-counterterms $C_n^{(l)}$ are always symmetric polynomials in the external momenta. Conversely, the 1PI counterterms   can also contain inner offshell variables $s_{i+j+\ldots}$.  This is completely analogous to the 1PI vertices $iv_n$ \cref{vn} which, unlike the metavertices $iV_n$, must contain such internal offshell variables in order to fulfil \cref{ordinary_recursive}.

By power-counting, the one-loop 1PI counterterm $c_{n}^{(1)}$ is proportional to two powers of offshell variables, hence, five different dependencies are possible: Square of an external offshell variable $s_j^2$, square of an internal one $s_{i+j+\ldots}^2$, two different external ones $s_i s_j$, two different internal ones $s_{i+j+\ldots} s_{k+l+\ldots}$ or a mixture of both types $s_j s_{k+l+\ldots}$. In every case, symmetric sums are understood. 

\begin{lemma}\label{lem_c1_1}
	The summand in $c_n^{(1)}$, which is proportional to a square of an external offshell variable, is
	\begin{align*}
	  +(n-1)! a_{n-2} b_3^2 \frac{M^{(1)}_{\text{\normalfont div}}}{2} \left( s_1^2 + \ldots + s_n^2 \right),
	\end{align*}
	where $a_n$ is given by \cref{anbn}.
\end{lemma}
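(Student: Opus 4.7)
The strategy is to isolate the coefficient of $s_j^2$ in $c_n^{(1)}$ by restricting to the kinematic configuration where only leg~$1$ is offshell, i.e., $s_j=0$ for all $j\neq 1$; by the $S_n$-symmetry of $c_n^{(1)}$ in its external legs, the coefficient of $s_1^2$ obtained this way equals the coefficient of every $s_j^2$.

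First, I would use the identity $C_{n,1}^{(1)}=0$ (\cref{Cn1}) together with the standard 1PI decomposition of the connected one-loop meta-counterterm, which at this kinematic point reads
\[
0 = C_{n,1}^{(1)} = c_n^{(1)}\bigr|_{s_{j\neq 1}=0} + \sum_{\text{reducible trees with one } c_m^{(1)}\text{-insertion}}\!\!(\cdots)\bigr|_{s_{j\neq 1}=0}.
\]
Because $c_2^{(1)}(s)=b_3^2 s^2 M^{(1)}_{\text{div}}/2$ vanishes at $s=0$, the only surviving external-leg insertion is on leg~$1$. Among all reducible topologies, only the graph in which a single bare vertex $iv_n$ is dressed on leg~$1$ by $c_2^{(1)}(s_1)$ contributes to the pure $s_1^2$ monomial: any other topology (a $v_3\text{--}v_3$ chain with $c_2^{(1)}$ on an internal line, a tree with $c_3^{(1)}$-insertion, etc.) involves internal propagators whose momenta squared are sums $s_{1+a+\cdots}$ by momentum conservation, and therefore produces only $s_1 s_{1+a+\cdots}$- or $s_{1+a+\cdots}^2$-terms when the variables $s_1, s_{1+2}, s_{1+3},\ldots$ are treated as algebraically independent.

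Second, I would evaluate the single contributing graph using the leading piece $i(n-1)!\,a_{n-2}(s_1+\cdots+s_n)$ of $iv_n$ from \cref{vn_leading}: only this leading piece produces a pure $s_1^2$-contribution, while the $O(s_{i+j})$-corrections in $iv_n$ yield only $s_1 s_{1+a+\cdots}$-terms. A direct computation gives the $s_1^2$-coefficient
\[
\left[ c_2^{(1)}(s_1)\cdot\frac{i}{s_1}\cdot iv_n \right]_{s_1^2,\; s_{j\neq 1}=0} = -(n-1)!\,a_{n-2}\,b_3^2\,\frac{M^{(1)}_{\text{div}}}{2},
\]
and substituting into the meta-counterterm identity yields $\left[c_n^{(1)}\right]_{s_1^2}=+(n-1)!\,a_{n-2}\,b_3^2\,M^{(1)}_{\text{div}}/2$, which by symmetry is the common coefficient of every $s_j^2$.

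The main obstacle is the combinatorial argument of the first step: one must rigorously justify that every reducible tree topology, including those with $c_m^{(1)}$-insertions for $m\geq 3$, contributes only to $s_{1+a+\cdots}$-monomials and never to the pure $s_1^2$-monomial. This reduces to a careful tracking of momentum variables at internal edges of tree graphs, where momentum conservation guarantees that every internal edge carries a subset-sum of external momenta involving leg~$1$ together with at least one other external leg, so its offshell variable $s_{1+a+\cdots}\neq s_1$; any $c_m^{(1)}$-insertion on such an edge therefore introduces $s_{1+a+\cdots}$-factors rather than pure $s_1$-factors.
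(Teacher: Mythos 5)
Your overall strategy --- restricting to the configuration where only leg~$1$ is offshell, invoking $C_{n,1}^{(1)}=0$ from \cref{Cn1}, and extracting the coefficient of the pure $s_1^2$ monomial --- matches the paper's, and your evaluation of the graph $c_2^{(1)}(s_1)\,\tfrac{i}{s_1}\,iv_n$ via \cref{vn_leading} and \cref{c2_1loop} is correct. However, the combinatorial claim on which your reduction rests --- that every reducible topology other than this single graph contributes only $s_1 s_{1+a+\cdots}$- or $s_{1+a+\cdots}^2$-monomials --- is false, and this is a genuine gap. Two families of reducible trees do produce pure $s_1^2$ monomials. First, a counterterm $c_{j}^{(1)}$ with $3\le j<n$ can sit \emph{adjacent to the external leg $1$}, its remaining legs attached through internal propagators to subtrees all of whose external legs are onshell; those subtrees (propagator included) sum to the pure numbers $b_k$, so no internal offshell variable survives, and the $s_1^2$-summand of $c_j^{(1)}$ --- which is exactly what the lemma asserts to be nonzero at valence $j$ --- yields a pure $s_1^2$ contribution. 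Concretely, for $n=4$ the tree $c_3^{(1)}(s_1,s_2,s_{3+4})\,\tfrac{i}{s_{3+4}}\,iV_3$ contributes $-b_3^4 s_1^2 M^{(1)}_{\text{div}}/2$ by \cref{c31_1loop}. Second, $c_2^{(1)}(s_1)$ can dress leg~$1$ of a \emph{multi-vertex} tree rather than of the single vertex $iv_n$; the adjacent tree again reduces to numbers times a single factor $s_1$, giving another pure $s_1^2$ monomial. Your momentum-conservation argument only shows that counterterm insertions on genuinely internal edges introduce $s_{1+a+\cdots}$-factors; it does not apply to insertions on, or at the vertex carrying, the offshell external leg itself.

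These two families cancel pairwise, and establishing that cancellation is precisely the inductive step of the paper's proof: assuming the lemma for all valences $j<n$, the $s_1^2$-part of $c_{j,1}^{(1)}$ equals $-\,iv_j\big|_{\propto s_1}\tfrac{i}{s_1}c_2^{(1)}(s_1)$, so each tree with a $c_j^{(1)}$ adjacent to leg~$1$ cancels the corresponding multi-vertex tree with $c_2^{(1)}$ on leg~$1$, leaving only the single-vertex term $iv_n\big|_{\propto s_1}\tfrac{i}{s_1}c_2^{(1)}(s_1)$ together with the unknown $c_{n,1}^{(1)}$. To close the gap you must either run this induction on $n$ (anchored at \cref{c31_1loop}) or give an independent proof of the pairwise cancellation; without one of these, the reduction to a single graph is unjustified, even though the final formula you obtain is the correct one.
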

\begin{proof} 
	This can be proved by rigorously constructing $C_{n,1}^{(1)}$ from trees involving precisely one 1PI counterterm $c_j^{(1)}$ of valence $j\leq n$ and arbitrary many tree sums $b_k$. However, an inductive proof is shorter. The 3-valent term \cref{c31_1loop} has the required form.
	
	Assume the statement holds for any valence $j<n$. Consider the divergence of the $n$-valent connected amplitude, where one external edge $e$ is double-cancelled. It is given by the following contributions, shown in \cref{cn1_1loop_picture}:
	\begin{enumerate}
		\item A sum over all trees where a counterterm $c_{j,1}^{(1)}$ double-cancels $e$. There is a suitable number of tree sums $b_k$ connected to $c_{j,1}^{(1)}$. 
		\item A sum over all trees where a counterterm $c_2^{(1)}$ is inserted into $e$, thus double-cancelling it.
		\item A single vertex $iv_n$ connected to a counterterm $c_2^{(1)}$ in the edge $e$. To produce an overall factor $s_e^2$, the vertex $iv_n$ must cancel $e$, therefore only the part $\propto s_e$ is relevant. 
		\item The counterterm $c_{n,1}^{(1)}$ double-cancelling $e$.
	\end{enumerate}
	The first two cases cancel each other because of the induction hypothesis. On the other hand, the overall sum is zero since it is $C_{n,1}^{(1)}=0$ by \cref{Cn1}. Therefore, 
	\begin{align*}
		0 &=  iv_n\Big|_{\text{summand $\propto s_e$}} \frac{i}{s_e} c_{2}^{(1)} (s_e) + c_{n,1}^{(1)}\Big|_{\text{summand $\propto s_e^2$}} .
	\end{align*}
Using \cref{vn_leading} and \cref{c2_1loop} and summing over all $n$ orientations produces the statement.

\end{proof}

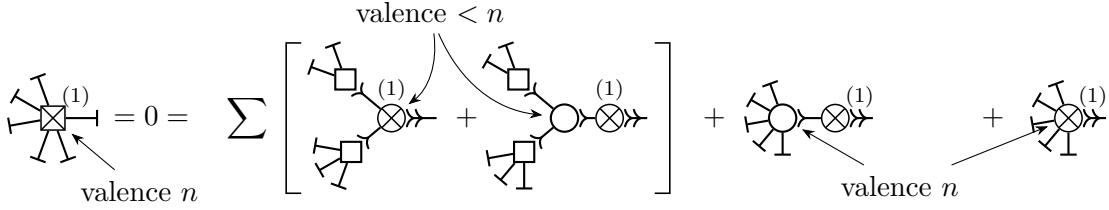
\begin{figure}[htbp]
	
	\begin{tikzpicture}
		
		\node [treeCounterVertex, label={[label distance=-2mm]5:{$\scriptstyle (1)$}}] (c) at (-1,0){};
		
		\draw [-|]  (c) --++ (0:.6);
		\draw  [-|](c) --++ (-70:.6);
		\draw  [-|](c) --++ (-110:.6);
		\draw [-|] (c) --++ (190:.6);
		\draw [-|] (c) --++ (110:.6);
		\draw [-|] (c) --++ (150:.6);
		
		\node (tx) at ($(c)+(-40:1.5)$) {valence $n$};
		\draw [thin, -Stealth, bend angle = 20,shorten >=1mm ] (tx) to (c);

		\node [anchor=east] at (2,0) {$= 0  =\quad \displaystyle \sum  $};

		\draw [semithick]  (2.3,-1) to [ncbar=.1](2.3,1);

		\node [counterVertex, label={[label distance=-1mm]90:{$\scriptstyle (1)$}}  ] (c) at (3.5,0) {};
		
		\node [treeVertex] (b1) at ($(c)+(140:.8)$) {};
		\node [treeVertex] (b2) at ($(c)+(220:.7)$) {};

		\draw [>>-]  (c) --++ (0:.6);
		\draw [-<] (c) -- (b1);
		\draw [-<] (c) -- (b2);

		\draw [-|] (b1) --++ (110:.5);
		\draw [-|] (b1) --++ (160:.5);

		\draw [-|] (b2) --++ (170:.5);
		\draw [-|] (b2) --++ (210:.5);
		\draw [-|] (b2) --++ (250:.5);
		
		\node (tx) at ($(c)+(70:1.5)$) {valence $<n$};
		\draw [thin, -Stealth, bend left,bend angle = 20,shorten >=1mm ] (tx) to  (c);
		
		\node at (4.5,0) {$+$};
		
		\node [ diffVertex ] (c) at (5.8,0) {};
		\node [counterVertex, label={[label distance=-1mm]90:{$\scriptstyle (1)$}}] (b) at ($(c)+(0:.6)$) {};
		\node [treeVertex] (b1) at ($(c)+(140:.8)$) {};
		\node [treeVertex] (b2) at ($(c)+(220:.7)$) {};
		
		\draw [>-]  (c) --(b);
		\draw [>>-]  (b) --++ (0:.5);
		\draw [-<] (c) -- (b1);
		\draw [-<] (c) -- (b2);
		
		\draw [-|] (b1) --++ (110:.5);
		\draw [-|] (b1) --++ (150:.5);
		\draw [-|] (b2) --++ (190:.5);
		\draw [-|] (b2) --++ (230:.5);
		\draw [-|] (b2) --++ (270:.5);
		
		 \draw [thin, -Stealth, bend right, bend angle = 20,shorten >=1mm ] (tx) to  (c);
		
		\draw [semithick] (7,-1) to [ncbar=-.1] (7,1);

		\node [anchor=west] at (7.5,0){$+ $};
		
		\node [ diffVertex] (v) at (8.7,0) {};
		\node [counterVertex, label={[label distance=-2mm]10:{$\scriptstyle (1)$}} ] (c) at ($(v)+(0:.7)$) {};
		
		\draw [>-] (v) --  (c);
		\draw [>>-]  (c) --++ (0:.5);
		\draw [-|] (v) --++ (110:.5);
		\draw [-|] (v) --++ (150:.5);
		\draw [-|] (v) --++ (190:.5);
		\draw [-|] (v) --++ (230:.5);
		\draw [-|] (v) --++ (270:.5);
		
		\node (tx) at ($(v)+(330:1.8)$) {valence $n$};
		\draw [thin, -Stealth, bend angle = 20,shorten >=1mm ] (tx) to  (v);
		
		\node at (11.5,0){$+$};
		
		\node [counterVertex,label={[label distance=-2mm]5:{$\scriptstyle (1)$}} ] (c) at (12.5,0) {};
		
		\draw [>>-]  (c) --++ (0:.5);
		\draw [-|] (c) --++ (110:.5);
		\draw [-|] (c) --++ (150:.5);
		\draw [-|] (c) --++ (190:.5);
		\draw [-|] (c) --++ (230:.5);
		\draw [-|] (c) --++ (270:.5);
		
		\draw [thin, -Stealth, bend angle = 20,shorten >=1mm ] (tx) to  (c);

	\end{tikzpicture}
	\caption{ Proof of \cref{lem_c1_1}. The bracket vanishes by induction hypothesis. }
	\label{cn1_1loop_picture}
\end{figure}

\begin{lemma}\label{lem_c1_0}
	In the 1PI counterterm $c^{(1)}_n$, the contributions proportional to $s_p^2$, the square of the offshell variable of some partition $p$ of the external momenta, is
	\begin{align*}
	 b_3^2\frac {M^{(1)}_{\text{\normalfont div}}}{2} \frac 12 \sum_{k=2}^{n-2} \sum_{p\in Q^{(n,k)}} k! a_{k-1}   a_{n-k-1} (n-k)!  s^2_{p}
	\end{align*}
	where $Q^{(n,k)}$ denotes the set of all possibilities to choose $k$ out of $n$ external legs.
\end{lemma}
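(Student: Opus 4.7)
My plan is to mirror the proof of \cref{lem_c1_1}, applying its strategy to internal offshell variables. I fix a subset $A \subset \{1,\ldots,n\}$ of size $k$ with $2 \leq k \leq n-2$, corresponding to the partition $p$ with offshell variable $s_p$, and I aim to extract the coefficient of $s_p^2$ from the identity $C_{n,0}^{(1)} = 0$ (from \cref{Cn0}).

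First, I would argue that among all tree-level diagrams contributing to this identity, only a very restricted class can generate a factor of $s_p^2$ once every external variable is set to zero. Since $s_j = 0$ for every external leg, every factor of $s$ in the amplitude must arise from either an internal edge or a counterterm insertion. The variable $s_p$ itself appears only on the unique internal edge separating $A$ from its complement in a tree. If either subtree attached to the two ends of that separating edge contains more than one vertex, the extra internal edges inside the subtree necessarily carry momenta corresponding to strict sub-partitions of $A$ (or of its complement), whose offshell variables differ from $s_p$ and cannot combine to produce $s_p^2$. Consequently, only configurations with a single bare vertex on each side of the separating edge can contribute to the coefficient of pure $s_p^2$.

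Second, the relevant leading-order vertex factor on each side is read off from \cref{vn_leading}: $iv_{k+1}$ contributes $ik!\,a_{k-1}\,s_p$ for its offshell internal leg on the $A$-side, and analogously $iv_{n-k+1}$ contributes $i(n-k)!\,a_{n-k-1}\,s_p$ on the complementary side. The remaining $s_p^2$ is supplied by a $c_2^{(1)}(s_p) = b_3^2\,s_p^2\,M^{(1)}_{\text{\normalfont div}}/2$ insertion on the separating edge, which is the only mechanism to generate that factor at one loop (equivalently, the insertion can be reorganized as a $c_{k+1,1}^{(1)}$ counterterm with its cancelled leg being the separating edge, which by \cref{lem_c1_1} gives the same coefficient). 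Combining the two vertex factors, the two propagators $i/s_p$ flanking the insertion, and the insertion itself, a short calculation gives a contribution proportional to $k!\,(n-k)!\,a_{k-1}\,a_{n-k-1}\,b_3^2\,s_p^2\,M^{(1)}_{\text{\normalfont div}}/2$.

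Third, I would solve $C_{n,0}^{(1)}|_{s_p^2\text{-part}} = 0$ for $c_n^{(1)}|_{s_p^2\text{-part}}$ and sum over all subsets $p \in Q^{(n,k)}$ and all valid $k \in \{2,\ldots,n-2\}$. The prefactor $\frac{1}{2}$ in the stated formula accounts for the $A \leftrightarrow \bar A$ redundancy: each physical partition is represented twice in this double sum, once as a subset of size $k$ and once as its complement of size $n-k$.

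The main technical obstacle is the momentum-routing argument of the first step. To make it watertight, one should proceed by induction on the number of internal edges inside the subtrees, showing that any multi-vertex configuration contributes only to cross-terms $s_p s_{p'}$ or $s_{p'}^2$ for strict sub-partitions $p' \neq p$, never to a pure $s_p^2$ term. Once this combinatorial fact is established, the stated formula follows directly from the leading-order vertex coefficients of \cref{vn_leading} and the formula for $c_2^{(1)}$ in \cref{c2_1loop}.
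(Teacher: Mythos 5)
Your overall strategy is the paper's: both proofs extract the coefficient of $s_p^2$ from the identity $C_{n,0}^{(1)}=0$, reduce by induction to configurations with a single bare vertex on each side of the separating edge $e$, and read off the vertex factors from \cref{vn_leading} together with $c_2^{(1)}$ from \cref{c2_1loop}. The gap is in your second step, where you assert that the $c_2^{(1)}(s_p)$ insertion on $e$ is ``the only mechanism'' and that the $c_{k+1,1}^{(1)}$ vertex counterterm is merely an equivalent rewriting of it ``with the same coefficient''. In the decomposition of the meta-counterterm $C_{n,0}^{(1)}$ into trees carrying exactly one 1PI counterterm (cf.\ \cref{c3_1loop_ansatz}), these are \emph{distinct} diagrams that both occur: per unordered partition there are \emph{two} vertex-counterterm diagrams ($c^{(1)}_{k+1,1}$ double-cancelling $e$ from the $A$-side, and $c^{(1)}_{n-k+1,1}$ from the $\bar A$-side) and \emph{one} $c_2^{(1)}$-insertion diagram, and by \cref{lem_c1_1} each of the former equals \emph{minus} (not the same as) the corresponding insertion diagram. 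Their sum is therefore $-1$ times the insertion diagram, so solving $C_{n,0}^{(1)}=0$ gives $c_{n,0}^{(1)}$ equal to $+1$ times the insertion diagram per partition, whereas your accounting (only the insertion diagram present) yields $-1$ times it, i.e.\ the wrong overall sign, and conflating the two families also obscures the multiplicity count. This partial cancellation, written in the paper as $(2)+(3)=\tfrac12\,(2)$, is precisely the crux of its proof and is what your argument is missing; your attribution of the factor $\tfrac12$ solely to the $A\leftrightarrow\bar A$ redundancy papers over it. The momentum-routing/induction argument in your first step is sound and parallels the paper's treatment of multi-vertex subtrees.
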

\begin{proof}
	Analogous to the proof of \cref{lem_c1_1}. 
	Setting all external legs onshell, the sum of all connected graph divergences vanishes by \cref{Cn0}, $C_{n,0}^{(1)}=0$ . 
	
	The tree point counterterm is $c^{(1)}_{3,0}=0$ since there are no internal edges at all. Assume now that for all $j<n$, the counterterms $c^{(1)}_{j,0}$ are chosen such that no factor $s_e^2$ remains in the sum of all connected trees with $j$ external legs. Following \cref{cn0_1loop_picture}, at $n$ external legs, there are four structures in the sum of connected trees, each of which involves precisely one counterterm $c^{(1)}$: 
	\begin{enumerate}
		\item Trees $T$, where at least one vertex is connected through an edge $e$ which is not triple-cancelled. Then, triple-cancellation can only occur in the original tree $T$, but it has valence $<n$ and is, when summed over all trees, free of triple-cancellation by induction hypothesis. All trees with more than two vertices fall in this category by powercounting.
		\item A single   counterterm, where one vertex is added through a new edge $e$ which is triple-cancelled. There can be only one such edge $e$ by power-counting and it represents a partition of the external legs into precisely two sets, one on each side.
		\item Two single vertices cancelling the edge $e$ and a counterterm $c_2^{(1)}$ in that same edge $e$.
		\item A new counterterm $c^{(1)}_{n,0}$.
	\end{enumerate}
	
	Since (1) vanishes by induction hypothesis, the last three contributions have to cancel. We know from \cref{lem_c1_1} that cases (2) and (3) add up to zero, only that in the present construction, case (3) has two interchangeable vertices and hence an overall factor $\frac 12$. Consequently, $(3) = -\frac 12 \cdot (2)$ and $(2)+(3) = \frac 12 (2)$.

	The counterterm $c^{(1)}_{n,0}$ must be chosen to be the sum of all ways to partition the $n$ external legs into two sets where the first set has $k-1$ elements and amounts to a $k$-valent counterterm $c_{k,1}^{(1)}$ and the second set has $n-k+1$ elements connected to a vertex $i v_{n-k+2}$. The set of all such partitions is denoted $Q^{(n,k-1)}$. For the vertex $iv_{n-k+2}$,  the  only relevant summand ist the one cancelling the edge $e$ from \cref{vn_leading}, for $c_{k,1}^{(1)}$ the only relevant one is the part double-cancelling $e$ from \cref{lem_c1_1}, because all other contributions do not produce a triple-cancelled edge $e$.
	\begin{align*}
	c_{n,0}^{(1)} &= - \frac 12 \cdot \sum_{k=3}^{n-1} c^{(1)}_{k,1} \frac{i}{s_e} iv_{n-k+2}\\
	&= -\frac 12 \sum_{k=3}^{n-1} \sum_{p\in Q^{(n,k-1)}} (k-1)! a_{k-2} b_3^2 \frac {M^{(1)}_{\text{div}}}2 \cdot s_e^2 \cdot a_{n-k} (n-k+1)!  
	\end{align*}
	
	Note that this proof is similar to the derivation of $b_n$ from $iv_j$ in \cite{balduf_perturbation_2020}, but backwards.
\end{proof}

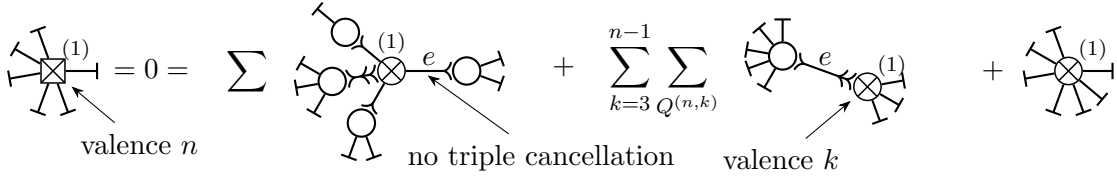
\begin{figure}[htbp]
	
	\begin{tikzpicture}
	
	\node [treeCounterVertex, label={[label distance=-2mm]5:{$\scriptstyle (1)$}}] (c) at (-1,0){};
	
	\draw [-|]  (c) --++ (0:.6);
	\draw  [-|](c) --++ (-70:.6);
	\draw  [-|](c) --++ (-110:.6);
	\draw [-|] (c) --++ (190:.6);
	\draw [-|] (c) --++ (110:.6);
	\draw [-|] (c) --++ (150:.6);
	
	\node (tx) at ($(c)+(-40:1.5)$) {valence $n$};
	\draw [thin, -Stealth, bend angle = 20,shorten >=1mm ] (tx) to (c);

	\node [anchor=east] at (2,0) {$= 0  =\quad \displaystyle \sum  $};

	\node [counterVertex, label={[label distance=-1mm]90:{$\scriptstyle (1)$}}  ] (c) at (3.5,0) {};
	\node [diffVertex] (b) at ($(c)+(0:1)$) {};

	\node [diffVertex] (b1) at ($(c)+(140:.8)$) {};
	\node [diffVertex] (b2) at ($(c)+(190:.8)$) {};
	\node [diffVertex] (b3) at ($(c)+(240:.8)$) {};
	
	\draw [-<]  (c) --  node[label={[label distance = -2mm]above:$e$}]{}(b);
	\draw [-<] (c) -- (b1);
	\draw [>>-<] (c) -- (b2);
	\draw [-<] (c) -- (b3);
	
	\draw  [-|](b) --++ (20:.5);
	\draw  [-|](b) --++ (-20:.5);
	\draw [-|] (b1) --++ (140:.5);
	\draw  [-|](b3) --++ (-70:.5);
	\draw  [-|](b3) --++ (-110:.5);
	\draw [-|] (b2) --++ (190:.5);
	\draw [-|] (b2) --++ (150:.5);
	\draw [-|] (b2) --++ (240:.5);
	
	\node (tx) at ($(c)+(-30:2.3)$) {no triple cancellation};
	
	\draw [thin, -Stealth, bend angle = 20,shorten >=1mm ] (tx) to  ($(c)+(0:.4)$);

	\node [anchor=west] at (5.5,0){$+\quad \displaystyle \sum_{k=3}^{n-1} \sum_{Q^{(n,k)}}$};
	
	\node [ diffVertex] (v) at (8.7,.2) {};
	\node [counterVertex, label={[label distance=-2mm]5:{$\scriptstyle (1)$}} ] (c) at ($(v)+(-20:1.2)$) {};
	
	\draw [>-<<] (v) -- node[label={[label distance = -2mm]above:$e$}]{} (c);
	
	\draw  [-|](c) --++ (10:.5);
	\draw  [-|](c) --++ (-30:.5);
	\draw  [-|](c) --++ (-70:.5);
	
	\draw  [-|](v) --++ (90:.5);
	\draw  [-|](v) --++ (130:.5);
	\draw [-|] (v) --++ (200:.5);
	\draw [-|] (v) --++ (170:.5);
	\draw [-|] (v) --++ (240:.5);
	
	\node (tx) at ($(c)+(220:1.5)$) {valence $k$};
	\draw [thin, -Stealth, bend angle = 20,shorten >=1mm ] (tx) to  (c);
	
		\node at (11.5,0){$+$};
	
	\node [counterVertex,label={[label distance=-2mm]5:{$\scriptstyle (1)$}} ] (c) at (12.5,0) {};
	
	\draw [-|]  (c) --++ (0:.6);
	\draw [-|]  (c) --++ (-40:.6);
	
	\draw  [-|](c) --++ (-70:.6);
	\draw  [-|](c) --++ (-110:.6);
	\draw [-|] (c) --++ (190:.6);
	\draw [-|] (c) --++ (110:.6);
	\draw [-|] (c) --++ (150:.6);

	\end{tikzpicture}
	\caption{ Proof of \cref{lem_c1_0}. The first sum involves a triple cancellation in the subtree left of $e$. This subtree has less than $n$ external legs and vanishes by induction hypothesis. }
	\label{cn0_1loop_picture}
\end{figure}

\begin{lemma}\label{lem_c1_2}
	If $b_n = \lambda^{n-2}$ then $c^{(1)}$ does not contain any summands which are proportional to $s_e \cdot s_f$, where $e\neq f$ can be external or internal offshell variables.
\end{lemma}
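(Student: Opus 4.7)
The plan is to proceed by strong induction on $n$, in direct parallel to the proofs of \cref{lem_c1_1} and \cref{lem_c1_0}. The exponential structure enters through the single algebraic identity
\begin{align*}
b_{j_1} b_{j_2} \cdots b_{j_r} \;=\; b_{\,j_1+j_2+\ldots+j_r-2(r-1)},
\end{align*}
which is valid only for $u=1$ and which will make all tree-decoration products collapse to a common $\lambda^{n-2}$ factor.

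For the base case $n=3$, formulas \cref{c31_1loop}, \cref{c32_1loop} and \cref{c33_1loop} show that the only potentially cross-term-bearing piece is $c^{(1)}_{3,2} \propto b_3(b_4-b_3^2)\,E_2(s_1,s_2,s_3)\,M^{(1)}_{\text{div}}$. Exponentiality with $u=1$ forces $b_4 = \lambda^2 = b_3^2$, so this coefficient vanishes identically and $c^{(1)}_3$ is pure-square.

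For the inductive step, fix a pair $s_e \neq s_f$ of offshell variables (external or internal partition) and isolate the coefficient of $s_e s_f$ in $c^{(1)}_n$. As in the earlier two lemmas, this coefficient is constrained by the meta-counterterms $C^{(1)}_{n,k}$: I decompose $C^{(1)}_{n,k}$ into tree diagrams containing exactly one 1PI-counterterm vertex, the rest being bare tree sums $b_j$. Tree decorations carrying a lower-valence $c^{(1)}_m$ with $m<n$ produce no cross term from the counterterm factor itself by the induction hypothesis; the only $s_e s_f$ contributions must come either from a pure-square factor inside a $c^{(1)}_m$ that gets "split" by the bare tree around it (as in the proof of \cref{lem_c1_0}), or from the new counterterm $c^{(1)}_n$ sitting at the root. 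The only non-vanishing meta-counterterm at one loop is $C^{(1)}_{n,2} \propto E_2(s_1,\ldots,s_n)$ by \cref{lem_Cn2_rec_1loop}, containing exclusively external-external cross terms $s_i s_j$.

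I then match the three kinds of putative cross terms in $c^{(1)}_n$ separately. Pure external-external cross terms $s_i s_j$ in $c^{(1)}_n$: the $E_2$ piece of $C^{(1)}_{n,2}$ is already reproduced by inserting $c^{(1)}_2 = \tfrac12 b_3^2 s^2 M^{(1)}_{\text{div}}$ on every internal edge between two vertices that cancel the external legs $i$ and $j$, the combinatorics of such insertions collapsing via $b_j b_k = b_{j+k-2}$ to the precise $2^{n-3} \lambda^n$ prefactor of \cref{lem_Cn2_rec_1loop}. Mixed $s_i s_P$ and internal-internal $s_P s_Q$ cross terms in $c^{(1)}_n$: since $C^{(1)}_{n,k}$ has no dependence on internal partition variables beyond what $E_2$ provides, any such contribution would necessarily arise from lower-order $c^{(1)}_m$ already containing a cross term, which is excluded by the induction hypothesis. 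The hardest part of the argument, and the main obstacle, is the exhaustive combinatorial bookkeeping of tree decorations that produce the external-external $E_2$ structure and the verification that the $b_jb_k = b_{j+k-2}$ collapse yields exactly the $2^{n-3}$ coefficient --- without this identity (i.e.\ for a generic non-exponential diffeomorphism) the cancellation fails and genuine cross terms persist, reflecting the very special role of the exponential choice \cref{rec_bn}.
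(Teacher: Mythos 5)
Your proposal is correct and follows essentially the same route as the paper: strong induction on $n$, reduction of the mixed and internal--internal cases to the external--external case via the \cref{lem_c1_1}/\cref{lem_c1_0}-type cancellations, and the observation that the tree contributions built from $c_{2,2}^{(1)}$ reproduce $C_{n,2}^{(1)}$ summand-by-summand exactly when $b_{j+2}b_{n-j+2}=b_{j+1}b_{n-j+1}b_3^2$, i.e.\ for the exponential diffeomorphism. The only cosmetic difference is that you verify the cancellation via the collapsed total prefactor $2^{n-3}\lambda^n$ from \cref{lem_Cn2_rec_1loop}, whereas the paper exhibits the vanishing coefficient $b_{j+2}b_{n-j+2}-b_{j+1}b_{n-j+1}b_3^2$ for each partition $j$ separately.
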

\begin{proof}
	The Lemma concerns in principle three different contributions:
	\begin{enumerate}
		\item Summands proportional to two distinct external edges,
		\item Summands proportional to one internal edge and one external edge,
		\item Summands proportional to two distinct internal edges.
	\end{enumerate}
	Assume case (1) is shown to vanish and consider case (2). Let $e$ be the internal edge in question, there is a factor $\propto s_e^1$ in the counterterm, that means, $e$ is double-cancelled. There are three graphs which can give rise to such contributions, each of them consist of the edge $e$ and one vertex on each of its ends: First, a vertex $c_{j,1}^{(n)}$ which double-cancels $e$, connected to some $b_k$ which does not cancel $e$. Second, a $b_j$ which cancels  $e$, then a 2-point-counterterm $c_{2,2}^{(1)}$ in $e$ and another $b_k$ which does not cancel $e$. And third, some $b_j$ which cancels $e$, connected to  $c_{n-n+2, 2}^{(1)}$ which cancels $e$ once. The first two of these add up to zero due to \cref{cn1_1loop_picture}. The third one is proportional to $c_{n-j+2,2}^{(1)}$ which is assumed to vanish. Hence, case (2) does not contribute.
	
	Similarly, case (3) does not occur if one assumes that case (1) vanishes by the repetition of the above argument for two distinct internal edges.

	It remains to show that case (1) vanishes. Use induction. If we assume that $c_{j,2}^{(1)}=0$ for all $2<j<n$ then at $n$ legs there are only three topologies which cancel two distinct external edges:
	\begin{enumerate}
		\item The counterterm $c_{n,2}^{(1)}$,
		\item A treesum $-ib_n$ with one counterterm $c_{2,2}^{(1)}$,
		\item Two treesums $-ib_j, -ib_{n-j+2}$ connected by a counterterm $c_{2,2}^{(1)}$.
	\end{enumerate}
	The latter two topologies amount to a sum over all partitions of the $n$ external edges into two nonempty distinct sets. Let $j$ be the cardinality of one of these sets, then the two treesums contribute $b_{j+1}\cdot b_{n-j+1}$, which is also valid for $j=1$ or $j=n-1$. The counterterm $c_{2,2}^{(1)}$ is known from \cref{c2_1loop}. Finally, cancelling one external edge at each of the tree sums produces a factor $(s_1 + \ldots  + s_j) \cdot (s_{j+1} + \ldots + s_n)$, where we have assumed that $\{s_1, \ldots, s_j\}$ are the $j$ edges connected to the first tree sum. In summary, the latter two topologies produce terms of the form
	\begin{align*}
		\left \langle K_j \right \rangle b_{j+1} b_{n-j+1} b_3^2 (s_1+\ldots + s_j) (s_{j+1} + \ldots + s_n) \frac{M^{(1)}_{\text{div}}}{2}
	\end{align*}
	where $\left \langle K_j \right \rangle $ denotes the number of possible ways to choose $j$ out of $n$ momenta. We have to sum over $j$ from $1$ to $n-1$ and include an overall factor $\frac 12 $ to account for the interchangeability of the two vertices. 
	
	But this is precisely the same construction which defines the one-loop meta-counterterm $C^{(1)}_{n,2}$ in \cref{lem_Cn2}. The summands match one to one, the only difference is the prefactors, we obtain
	\begin{align*}
		c_{n,2}^{(1)} &= C_{n,2}^{(1)}- \frac 12 \sum_{j=1}^{n-1}  \left \langle K_j \right \rangle b_{j+1} b_{n-j+1} b_3^2 (s_1+\ldots + s_j) (s_{j+1} + \ldots + s_n) \frac{M^{(1)}_{\text{div}}}{2}\\
		&= \frac 12 \sum_{j=1}^{n-1}  \left \langle K_j \right \rangle \left(b_{j+2}b_{n-j+2} - b_{j+1} b_{n-j+1} b_3^2\right) (s_1+\ldots + s_j) (s_{j+1} + \ldots + s_n) \frac{M^{(1)}_{\text{div}}}{2}.
	\end{align*}
	For every $j$, the prefactor vanishes if $b_{j+2} = b_{j+1} b_3$. This is \cref{rec_bn}.
\end{proof}

\begin{theorem}\label{thm_c1}
	If $b_n = \lambda^{n-2}$ then the one-loop counterterm $c^{(1)}_n$ has the same structure as the 1PI vertex $v_n$ \cref{vn_general},
	\begin{align*}
	c_n^{(1)} &=  v_n \Big|_{s_e \rightarrow \frac{M^{(1)}_{\text{\normalfont div}}}{2} \lambda^2 s_e^2}.
	\end{align*}
\end{theorem}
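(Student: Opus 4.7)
The plan is to combine the three preceding lemmas (\cref{lem_c1_1}, \cref{lem_c1_0}, \cref{lem_c1_2}), which together exhaust every monomial that can appear in the local, degree-two, one-loop counterterm, and to verify that the resulting expression is precisely the image of \cref{vn_general} under the substitution $s_P \mapsto \frac{M^{(1)}_{\text{div}}}{2}\lambda^2 s_P^2$.

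By power counting each term of $c_n^{(1)}$ is quadratic in offshell variables, hence of one of the five types $s_j^2$, $s_P^2$ (non-trivial partition $P$), $s_i s_j$, $s_j s_P$, or $s_P s_Q$. \Cref{lem_c1_1} supplies the coefficient of $\sum_j s_j^2$, \cref{lem_c1_0} supplies the coefficient of $\sum_{P} s_P^2$, and \cref{lem_c1_2} shows that under the hypothesis $b_n = \lambda^{n-2}$ none of the three cross-type monomials survives. Substituting $b_3^2 = \lambda^2$ and adding gives
\[
c_n^{(1)} = \lambda^2\,\frac{M^{(1)}_{\text{div}}}{2}\left[(n-1)!\,a_{n-2}\sum_{j=1}^n s_j^2 \;+\; \tfrac12\sum_{k=2}^{n-2} k!\,(n-k)!\,a_{k-1} a_{n-k-1}\!\!\sum_{P\in Q^{(n,k)}}\!\! s_P^2\right].
\]

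On the other hand, applying the substitution $s_P \mapsto \frac{M^{(1)}_{\text{div}}}{2}\lambda^2 s_P^2$ term-by-term to \cref{vn_general} yields
\[
v_n\big|_{\text{sub}} = \frac{M^{(1)}_{\text{div}}}{2}\lambda^2 \cdot \tfrac12\sum_{k=1}^{n-1} k!\,(n-k)!\,a_{k-1}a_{n-k-1}\!\!\sum_{P\in Q^{(n,k)}}\!\! s_P^2.
\]
The $k=1$ and $k=n-1$ summands each equal $(n-1)!\,a_{n-2}\sum_j s_j^2$, since by overall momentum conservation $s_{\hat P} = s_P$ for the quadratic propagators considered, so the outer $\tfrac12$ combines them into the first bracket of $c_n^{(1)}$. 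The remaining range $2 \leq k \leq n-2$ matches the second bracket term-by-term after using the same $k \leftrightarrow n-k$ symmetry.

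The only delicate point is the combinatorial bookkeeping: each unordered partition $\{P, \hat P\}$ is counted twice in the $k$-sum of \cref{vn_general}, and the factor $\tfrac12$ there is precisely what is needed to match the factor $\tfrac12$ of \cref{lem_c1_0}, while the absence of such a factor in \cref{lem_c1_1} is compensated by the $k=1$ and $k=n-1$ summands already collapsing onto the same $\sum_j s_j^2$. No further computation is needed; once this alignment is verified, the identity of the theorem holds.
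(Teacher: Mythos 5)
Your proposal is correct and follows essentially the same route as the paper: the paper's proof likewise adds the contributions of \cref{lem_c1_1} and \cref{lem_c1_0}, invokes \cref{lem_c1_2} under the hypothesis $b_n=\lambda^{n-2}$ to rule out the mixed monomials, and identifies the resulting symmetric sum with $v_n$ under the stated substitution. Your explicit check that the $k=1$ and $k=n-1$ terms of \cref{vn_general} collapse (via $s_{\hat P}=s_P$) onto the external-square contribution of \cref{lem_c1_1}, with the overall factor $\tfrac12$ accounting for the double counting, is exactly the bookkeeping the paper leaves implicit.
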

\begin{proof}
	Add the contributions of \cref{lem_c1_1,lem_c1_0} to obtain
	\begin{align*}
		c_n^{(1)} &= 	 (n-1)! a_{n-2} b_3^2 \frac{M^{(1)}_{\text{\normalfont div}}}{2} \left( s_1^2 + \ldots + s_n^2 \right) + b_3^2\frac {M^{(1)}_{\text{  div}}}{2} \frac 12  \sum_{k=2}^{n-2} \sum_{p\in Q^{(n,k)}}     a_{n-k-1}  a_{k-1} (n-k)! k! s^2_{p}\\
		&= \frac 12 \sum_{k=1}^{n-2} \sum_{p\in Q^{(n,k)}}    a_{n-k-1}a_{k-1}  (n-k)! k! s_p^2 b_3^2\frac {M^{(1)}_{\text{  div}}}{2} .
	\end{align*}
	This equals the vertex \cref{vn_general} up to the stated replacement. 
	Given  $b_n = \lambda^{n-2}$,  \cref{lem_c1_2} asserts the absence of other terms.
\end{proof}

\Cref{thm_c1} asserts that if we set
\begin{align}\label{sren}
	is^R := is - (is)^2 b_3^2\frac {M^{(1)}_{\text{  div}}}{2},
\end{align}
then
\begin{align*}
	i v^R_n &:=  \frac 12 \sum_{k=1}^{n-1} a_{n-k-1} a_{k-1} (n-k)!k! \sum_{p\in Q^{(n,k)}} is^R_p= iv_n +  c_n^{(1)}
\end{align*}
is a \enquote{renormalized} vertex in the sense that using $i v^R_n$ in place of $i v_n$, all one-loop divergences are removed from the theory. The \enquote{renormalization} \cref{sren} is a divergent non-linear rescaling of a quantity, much like the rescaling $g^R = g + \mathcal O (g^2)$ in conventional  renormalization, only that it is not a rescaling of a coupling parameter, but, in a certain sense, a non-linear rescaling of spacetime.

\subsection{1PI counterterms with l=2 loops}

 The meta-counterterm of the connected 2-point-function is $C^{(2)}_{2,2} = -\frac 16 b_4^2 s^2 M^{(2)}_{\text{div}}(s)$ from \cref{C22l}. On the other hand, it can be built from 1PI counterterms as shown in \cref{c2_2loop_picture}. After working out  the cancellations due to \cref{c31_1loop}, what remains is
 \begin{align}\label{c2_2loop} 
 	c_ {2,2}^{(2)}(s) &= C_{2,2}^{(2)}  - c_{2,2}^{(1)} \frac i s   c_{2,2}^{(1)} =    b_4^2 s^2 \frac{M^{(2)}_{\text{div}}(s)}{6} - ib_3^4 s^3  \left( \frac{M^{(1)}_{\text{div}} (s) }2 \right) ^2.
 \end{align}

\begin{figure*}[ht]
	\begin{tikzpicture}
	
	\node [treeCounterVertex, label={[label distance=-2mm]5:{$\scriptstyle (2)$}}  ] (c) at (0,0) {};
	\draw  [>>-] (c) --++ (0:.5);
	\draw   [>-](c) --++ (180:.5);

	\node at (1,0) {$=$};
	
	\node [counterVertex, label={[label distance=-2mm]5:{$\scriptstyle (2)$}}  ] (c) at (2,0) {};
	\draw  [>>-] (c) --++ (0:.5);
	\draw   [>-](c) --++ (180:.5);
	
	\node at (3,0) {$+$  };
	
	\node [counterVertex, label={[label distance=-2mm]5:{$\scriptstyle (1)$}}  ] (c1) at (4,0) {};
	\node [counterVertex, label={[label distance=-2mm]5:{$\scriptstyle (1)$}}  ] (c2) at  ($(c1)+ (0:.7)$) {};
	\draw  [>-<](c1) -- (c2);
	\draw  [>-] (c2) --++ (0:.5);
	\draw  [>-] (c1) --++ (180:.5);
	
	\node at (6,0) {$+ \ \left \langle 2 \right \rangle $};
	
	\node [counterVertex, label={[label distance=-2mm]5:{$\scriptstyle (1)$}}  ] (c1) at (8,0) {};
	\node [treeVertex  ] (c2) at  ($(c1)- (0:1)$) {};
	\draw [-|-,bend angle=35,bend left] (c1) to (c2);
	\draw [-|-,bend angle=35,bend right] (c1) to (c2);
	
	\draw  [>-](c2) -- + (180:.5);
	\draw  [>>-](c1) -- + (00:.5);

	\node at (9.5,0) {$+ \ \left \langle 2 \right \rangle $};
	
	\node [treeVertex ] (c1) at (11.5,0) {};
	\node [counterVertex, label={[label distance=-2mm]5:{$\scriptstyle (1)$}}  ] (c0) at ($(c1)+ (0:.7)$){};
	\node [treeVertex  ] (c2) at  ($(c1)- (0:1)$) {};
	\draw [-|-,bend angle=35,bend left] (c1) to (c2);
	\draw [-|-,bend angle=35,bend right] (c1) to (c2);
	
	\draw  [>-<](c0) -- (c1);
	\draw  [>-] (c2) -- + (180:.5);
	\draw  [>-](c0) -- + (00:.5);
	
	\draw [decorate,decoration={brace,amplitude=10pt} ] 	(13,-.8) -- (6,-.8);
	
	\node at (8.5,-1.8) {$=\  \left \langle 2 \right \rangle $};
	
	\node [treeCounterVertex, label={[label distance=-1mm]90:{$\scriptstyle (1)$}}  ] (c1) at (10.7,-1.8) {};
	\node [treeVertex  ] (c2) at  ($(c1)- (0:1)$) {};
	\draw [-|-,bend angle=35,bend left] (c1) to (c2);
	\draw [-|-,bend angle=35,bend right] (c1) to (c2);
	
	\draw  [>-](c2) -- + (180:.5);
	\draw  [>>-](c1) -- + (00:.5);
	
	\node  (tx) at (12.5,-2){$C_{3,1}^{(1)}=0$};
	\draw [thin, -Stealth, shorten >=2mm , bend left] (tx.190) to (c1.270);
	
	\end{tikzpicture}
	\caption{Contributions to the two-loop divergence of the connected 2-point-function. Counterterms in the internal edges are not considered since they give rise to tadpole graphs. The last two graphs together vanish thanks to \cref{c31_1loop_ansatz}}
	\label{c2_2loop_picture}
\end{figure*}

\begin{example}[Massless theory]
	Using \cref{ex_multiedges}, in the massless theory the counterterm reads
	\begin{align*} 
	c_{2,2}^{(2)} &=    b_4^2 s^2  \frac 16\frac{ is }{4 (4\pi)^4}  \frac 1 \epsilon- ib_3^4 s^3   \left(  -\frac{1}{2(4\pi)^2}  \frac 1 \epsilon \right) ^2=  \frac{ is^3 }{24 (4\pi)^4} \left(b_4^2  \frac 1 \epsilon- 6 b_3^4  \frac 1  {\epsilon^2} \right) .
	\end{align*}
\end{example}

For the 3-point-function at two loops, if all legs are onshell, every graph vanishes and 
\begin{align}
	c_{3,0}^{(2)} &= 0 \label{c30_2loop}.
\end{align}
If one leg is offshell, we obtain an analogue of \cref{c31_1loop_ansatz} which is shown in \cref{c31_2loop_picture}. Two of the four graphs add to zero thanks to \cref{c31_1loop_ansatz}.
\begin{align} 
	C_{3,1}^{(2)}=0 &= c_{3,1}^{(2)} 
	+ \left \langle 3 \right \rangle c_{3,1}^{(1)} \frac{i}{s_1} c_{2,2}^{(1)}  (s_1) 
	+\left \langle 3 \right \rangle  (-ib_3s_1)\frac{i}{s_1} c_{2,2}^{(1)} (s_1)\frac{i}{s_1} c_{2,2}^{(1)}(s_1) 
	+\left \langle 3 \right \rangle  (-ib_3s_1)\frac{i}{s_1} c_{2,2}^{(2)} (s_1) \nonumber  \\
	\Rightarrow &\quad c_{3,1}^{(2)}  = -\left \langle 3 \right \rangle    b_3  c_{2,2}^{(2)} (s_1)  . \label{c31_2loop}
\end{align}
We find that $c_{3,1}^{(2)}$ behaves in just the same way with respect to $c_{2,2}^{(2)}$ as $c_{3,1}^{(1)}$ does with respect to $c_{2,2}^{(1)}$ in \cref{c31_1loop}.

\begin{figure*}[htbp]
	\begin{tikzpicture}
\node [treeCounterVertex, label={[label distance=-2mm]5:{$\scriptstyle (2)$}}  ] (c) at (-.5,0) {};
\draw  [>>>-](c) --++ (0:.6);
\draw  [-|](c) --++ (-120:.4);
\draw  [-|](c) --++ (120:.4);

\node at (1,0) {$=0=$};

\node [counterVertex, label={[label distance=-2mm]5:{$\scriptstyle (2)$}}  ] (c) at (2,0) {};
\draw  [>>>-](c) --++ (0:.6);
\draw  [-|](c) --++ (-120:.4);
\draw  [-|](c) --++ (120:.4);

\node at (3.5,0) {$+   $};

\node [counterVertex, label={[label distance=-2 mm]80:{$\scriptstyle (1)$}}] (c1) at (4.5,0){};
\node [counterVertex, label={[label distance=-2 mm]80:{$\scriptstyle (1)$}}] (c2) at ($(c1)+ (0:.8)$){};
\draw  [>>-<] (c1) -- (c2);

\draw [-|](c1) -- + (120:.5);
\draw [-|](c1) -- + (-120:.5);
\draw [>-](c2) -- + (0:.5);

\node at (6.5,0) {$+  $};

\node [treeVertex ] (c1) at (7.5,0){};
\node [counterVertex, label={[label distance=-2 mm]90:{$\scriptstyle (1)$}}] (c2) at ($(c1)+ (0:.7)$){};
\node [counterVertex, label={[label distance=-2 mm]90:{$\scriptstyle (1)$}}] (c3) at ($(c2)+ (0:.8)$){};
\draw  [>-<] (c1) -- (c2);
\draw  [>-<] (c2) -- (c3);

\draw [-|](c1) -- + (120:.5);
\draw [-|](c1) -- + (-120:.5);
\draw [>-](c3) -- + (0:.5);

\node at (10,0) {$+ $};

\node [treeVertex ] (c1) at (11,0){};
\node [counterVertex, label={[label distance=-2 mm]60:{$\scriptstyle (2)$}}] (c2) at ($(c1)+ (0:.7)$){};
\draw  [>-<] (c1) -- (c2);

\draw [-|](c1) -- + (120:.5);
\draw [-|](c1) -- + (-120:.5);
\draw [>>-](c2) -- + (0:.5);

\draw [decorate,decoration={brace,amplitude=10pt} ] 	(9,-.8) -- (4,-.8);

\node at (6,-1.8){$=$};
\node [treeCounterVertex , label={[label distance=-2 mm]60:{$\scriptstyle (1)$}}] (c1) at (7,-1.8){};
\node [counterVertex, label={[label distance=-2 mm]60:{$\scriptstyle (1)$}}] (c2) at ($(c1)+ (0:.8)$){};
\draw  [>>-<] (c1) -- (c2);

\draw [-|](c1) -- + (120:.5);
\draw [-|](c1) -- + (-120:.5);
\draw [>-](c2) -- + (0:.5);

\node  (tx) at (10,-2){$C_{3,1}^{(1)}=0$};
\draw [thin, -Stealth, shorten >=2mm , bend left] (tx.190) to (c1.270);

	\end{tikzpicture}
\caption{Contributions to the two-loop meta-counterterterm $C_{3,1}^{(2)}$ in terms of 1PI counterterms. }
\label{c31_2loop_picture}
\end{figure*}

With two legs offshell, the vanishing of $C^{(1)}_{3,1}$ and $C^{(1)}_{4,1}$ eliminates all one-loop graphs with inserted counterterms similar to the mechanism in \cref{c2_2loop_picture}. What remains are the tree graphs shown in \cref{c32_2loop_picture}. Inserting $C_{3,2}^{(l)}$ from \cref{C32l}, one finds
\begin{align}\label{c32_2loop}
	c_{3,2}^{(2)} &= C_{3,2}^{(2)} -\left \langle 6 \right \rangle  C_{3,2}^{(1)} \big|_{s_3=0} \frac i {s_1} c_{2,2}^{(1)}(s_1) - \left \langle 6 \right \rangle   (-ib_3 s_2) \frac i {s_1} c_{2,2}^{(2)}(s_1) \nonumber \\
	&=  \left \langle 3 \right \rangle \left( b_5 b_4-b_3 b_4^2 \right)  s_1 (s_2+s_3) \frac{M^{(2)}_{\text{div}}(s_1)}{6} - \left \langle 3 \right \rangle   b_3^3 b_4 2 i s_1^2 (s_2+s_3)     \left( \frac{M^{(1)}_{\text{div}}}{2}\right)^2 .
\end{align}
Compared to the one-loop case \cref{c32_1loop}, there is a structurally different contribution which stems from the braced graphs in \cref{c32_2loop_picture}.

\begin{figure*}[htbp]
	\begin{tikzpicture}
		\node [treeCounterVertex, label={[label distance=-2mm]5:{$\scriptstyle (2)$}}  ] (c) at (-.5,0) {};
		\draw  [>>-](c) --++ (0:.6);
		\draw  [>-](c) --++ (-120:.5);
		\draw  [-|](c) --++ (120:.4);
		
		\node at (1,0) {$=C_{3,2}^{(2)}=$};
		
		\node [counterVertex, label={[label distance=-2mm]5:{$\scriptstyle (2)$}}  ] (c) at (2.5,0) {};
		\draw  [>>-](c) --++ (0:.6);
		\draw  [>-](c) --++ (-120:.5);
		\draw  [-|](c) --++ (120:.4);
		
		\node at (3.5,0) {$+   $};
		
		\node [counterVertex, label={[label distance=-2 mm]80:{$\scriptstyle (1)$}}] (c1) at (4.5,0){};
		\node [counterVertex, label={[label distance=-2 mm]80:{$\scriptstyle (1)$}}] (c2) at ($(c1)+ (0:.8)$){};
		\draw  [>-<] (c1) -- (c2);
		
		\draw [-|](c1) -- + (120:.4);
		\draw [>-](c1) -- + (-120:.5);
		\draw [>-](c2) -- + (0:.5);
		
		\node at (6.5,0) {$+  $};
		
		\node [treeVertex ] (c1) at (7.5,0){};
		\node [counterVertex, label={[label distance=-2 mm]90:{$\scriptstyle (1)$}}] (c2) at ($(c1)+ (0:.7)$){};
		\node [counterVertex, label={[label distance=-2 mm]90:{$\scriptstyle (1)$}}] (c3) at ($(c2)+ (0:.8)$){};
		\draw  [-<] (c1) -- (c2);
		\draw  [>-<] (c2) -- (c3);
		
		\draw [-|](c1) -- + (120:.4);
		\draw [>-](c1) -- + (-120:.5);
		\draw [>-](c3) -- + (0:.5);
		
		\node at (10,0) {$+ $};
		
		\node [treeVertex ] (c1) at (11,0){};
		\node [counterVertex, label={[label distance=-2 mm]60:{$\scriptstyle (1)$}}] (c2) at ($(c1)+ (0:.7)$){};
		\node [counterVertex, label={[label distance=-1 mm]0:{$\scriptstyle (1)$}}] (c3) at ($(c1)+ (-120:.7)$){};
		\draw  [>-<] (c1) -- (c2);
		\draw  [-<] (c1) -- (c3);
		
		\draw [-|](c1) -- + (120:.5);
		\draw [>-](c3) -- + (-120:.5);
		\draw [>-](c2) -- + (0:.5);

		\node at (13,0) {$+ $};
		
		\node [treeVertex ] (c1) at (14,0){};
		\node [counterVertex, label={[label distance=-2 mm]60:{$\scriptstyle (2)$}}] (c2) at ($(c1)+ (0:.7)$){};
		\draw  [-<] (c1) -- (c2);
		
		\draw [-|](c1) -- + (120:.5);
		\draw [>-](c1) -- + (-120:.5);
		\draw [>>-](c2) -- + (0:.5);

		\draw [decorate,decoration={brace,amplitude=10pt} ] 	(12,-1) -- (4,-1);
		
		\node at (7,-2){$=$};
		\node [treeCounterVertex , label={[label distance=-2 mm]60:{$\scriptstyle (1)$}}] (c1) at (8,-2){};
		\node [counterVertex, label={[label distance=-2 mm]60:{$\scriptstyle (1)$}}] (c2) at ($(c1)+ (0:.8)$){};
		\draw  [>-<] (c1) -- (c2);
		
		\draw [-|](c1) -- + (120:.5);
		\draw [>-](c1) -- + (-120:.5);
		\draw [>-](c2) -- + (0:.5);

	\end{tikzpicture}
	\caption{Contributions to the two-loop meta-counterterterm $C_{3,2}^{(2)}$ in terms of 1PI counterterms. All contributions of one-loop-graphs cancel since $C_{n,1}^{(1)}=0$ \cref{Cn1}. The brace indicates \cref{c32_1loop_picture}.}
	\label{c32_2loop_picture}
\end{figure*}

Finally, allowing all three legs to be offshell, $C_{3,3}^{(2)}$ is the meta-counterterm computed in \cref{C33_2loop}. This allows to reconstruct the 1PI-counterterm where we use \cref{c2_1loop,c32_1loop}
\begin{align}\label{c33_2loop}
	c_{3,3}^{(2)} &= C_{3,3}^{(2)} - \left \langle 3 \right \rangle c_{2,2}^{(1)}(s_1) \frac{i}{s_1} (-i b_3 s_2) \frac i {s_3} c_{2,2}^{(1)} (s_3) - \left \langle 3 \right \rangle c_{2,2}^{(1)}(s_1) \frac i {s_1} c_{3,2}^{(1)}\big|_{s_1=0} \nonumber \\
	&= C_{3,3}^{(2)} -  3  i b_3^3 \left(2b_4 - b_3^2 \right)   s_1  s_2 s_3    \left( \frac{M^{(1)}_{\text{div}}}{2}\right)^2  .
\end{align}

\begin{example}[Massless theory]\label{ex_c33_2loop_massless}
	For the massless theory, $C_{3,3}^{(2)}$ was computed in \cref{C33_2loop_massless} and one has, inserting \cref{ex_multiedges}
	\begin{align*}
		c_{3,3}^{(2)}			&= i s_1 s_2 s_3  \frac{3}{4 (4\pi)^4} \left( \left( b_3 b_4^2+b_3^2 b_5  +b_3^5 -2b_3^3b_4  \right) \frac{1}{\epsilon^2} -  b_3 b_4^2   \frac 1 \epsilon\right).
	\end{align*}
\end{example}

\begin{example}[Exponential diffeomorphism]\label{ex_cn_2loop_rec}
	Assuming a massless theory and \cref{rec_bn}, $b_n = \lambda^{n-2}$,  the two-loop counterterms \cref{c2_2loop,c31_2loop,c32_2loop,c33_2loop} simplify to 
	\begin{align*}
		c_{2,2}^{(2)} &=    \frac{ is^3 }{24 (4\pi)^4} \lambda^4  \left(  \frac 1 \epsilon-  6\frac 1  {\epsilon^2} \right) \\
		c_{3,1}^{(2)}  &=      \frac{ i\left( s_1^3+s_2^2 + s_3^2\right)  }{24 (4\pi)^4} \lambda^5  \left(    6\frac 1  {\epsilon^2}-\frac 1 \epsilon \right) \\
		c_{3,2}^{(2)} &= -   \frac{2 i  \left(s_1^2 s_2 + s_1^2 s_3 + s_2^2 s_1 +s_2^2 s_3 + s_3^2 s_1 + s_3^2 s_2 \right)     }{4 (4\pi)^4   }  \lambda^5 \frac{1}{\epsilon^2}     \\
		c_{3,3}^{(2)}
		&=   \frac{3is_1s_2s_3 }{4 (4\pi)^4}  \lambda^5\left( \frac{1}{\epsilon^2} -   \frac 1 \epsilon\right).
	\end{align*}
	This shows that, contrary to the one-loop-counterterms in \cref{thm_c1}, the two-loop-counterterms can not be generated by a simple rescaling like \cref{sren} of the momenta in the bare vertices. This was to be expected because already in the connected perspective, the dunce's cap graph $\Gamma_A$ in \cref{fig_3point_2loop} clearly does not represent a propagator correction. Without this graph $c_{3,3}^{(2)}=0$ but still $c_{3,2}^{(2)}\neq 0$.
\end{example}

\subsection{1PI two-point-function}\label{sec_1PI_2point}

As opposed to $G_2(s)$ from \cref{G2}, the 1PI Function $G_2^{\text{1PI}}(s)$ contains divergences logarithmic in the momentum. Removing them requires the systematic use of 1PI counterterms. We have seen in \cref{c2_2loop_picture} that at 2 loops, $c_2^{(2)}$ can be computed from $C_2$ alone and does not involve meta-counterterms of higher valence. This is true to all loop orders. 

\begin{lemma}\label{lem_1PI_2point}
	Let 
	\begin{align*}
	\qquad C(t) := \sum_{l=1}^\infty C_2^{(l)} t^l = \sum_{l=1}^\infty  b_{l+2}^2 s^2  \frac{M^{(l)}_{\text{\normalfont div}}(s)}{(l+1)!}t^l
	\end{align*}
be the ordinary generating functions of  meta-counterterms of the 2-point-function, then
\begin{align*}
	c^{(l)} &=  -is[t^l]   \frac{C(t)}{C(t)-is}
\end{align*}
	
\end{lemma}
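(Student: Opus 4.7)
The approach is a Dyson-type resummation. In the connected perspective the amputated connected 2-point function is supported only on multiedges (\cref{fig_G2}), and these are themselves 1PI, but once metavertices are unfolded into sums of real vertices the amplitude organizes into chains of 1PI 2-point blocks joined by bare propagators $\frac{i}{s}$. The plan is to first translate this structural fact into a combinatorial identity expressing $C_{2,2}^{(l)}$ as a sum of such chains of 1PI counterterms, then convert it into a generating-function equation and invert.

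The first step establishes, by induction on $l$, the chain decomposition
\begin{align*}
	C_{2,2}^{(l)} \;=\; \sum_{n=1}^{l}\;\sum_{\stackrel{l_1,\ldots,l_n\geq 1}{l_1+\cdots+l_n = l}} c^{(l_1)} \cdot \tfrac{i}{s} \cdot c^{(l_2)} \cdots \tfrac{i}{s} \cdot c^{(l_n)}.
\end{align*}
The base case $l=1$ is immediate, since the one-loop multiedge is 1PI and has no subdivergences, so $c^{(1)}=C_{2,2}^{(1)}$. The inductive step generalizes the diagrammatic identity displayed in \cref{c2_2loop_picture}: assuming the decomposition holds at all orders $k<l$, the insertion of lower-order 1PI counterterms into a 2-point chain accounts precisely for the subdivergence subtractions produced by BPHZ (\cref{thm_metacounterterms}), so that $c^{(l)}$ is determined as what remains of $C_{2,2}^{(l)}$ after these chain contributions are removed. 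One must verify that higher-valence 1PI counterterms $c^{(k)}_{n,j}$ with $n\geq 3$ cannot sneak into a 2-point chain; the excess legs would have to close back into subdiagrams whose divergences would constitute contributions to $C_{n,0}^{(k)}$ or $C_{n,1}^{(k)}$, which both vanish identically by \cref{Cn0,Cn1}, and the vanishing propagates to the corresponding 1PI insertions by the same induction.

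The second step is a routine manipulation. Introducing the ordinary generating function $c(t):=\sum_{l\geq 1} c^{(l)} t^l$, the chain decomposition becomes a geometric series,
\begin{align*}
	C(t) \;=\; \sum_{n\geq 1}\left(\frac{i}{s}\right)^{\!n-1} c(t)^n \;=\; \frac{c(t)}{1 - \tfrac{i}{s}\, c(t)}.
\end{align*}
Solving this for $c(t)$ gives
\begin{align*}
	c(t) \;=\; \frac{-is\, C(t)}{C(t)-is},
\end{align*}
and extracting $[t^l]$ yields the formula of the lemma. A quick sanity check: expanding to order $t^2$ reproduces $c^{(1)}=C^{(1)}_{2,2}$ and $c^{(2)}=C^{(2)}_{2,2}-\tfrac{i}{s}(c^{(1)})^2$, matching \cref{c2_2loop}.

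The main obstacle is the first step, specifically the rigorous exclusion of higher-valence 1PI insertions from a 2-point chain. Doing so by enumeration of topologies would be cumbersome; I expect the cleanest argument to lean on the vanishing identities \cref{Cn0,Cn1} inductively, together with the non-cancellation restriction of \cref{thm_metacounterterms}, so that only pure 2-point chains ever contribute. Once the chain decomposition is in hand, the generating-function inversion is purely algebraic.
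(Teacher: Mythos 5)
Your proposal is correct and follows essentially the same route as the paper: an inductive chain decomposition of the connected 2-point divergence into strings of 1PI counterterms joined by $\frac{i}{s}$, followed by inversion of the resulting geometric-series relation between the generating functions $C(t)$ and $c(t)$ (the paper merely packages the chain sum as a Faà di Bruno composition with Bell polynomials, which is the same computation). The obstacle you flag — excluding higher-valence insertions — is handled in the paper exactly as you anticipate, via the vanishing of tadpoles (so multiedges have no subdivergences) together with $C_{n,1}=0$.
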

\begin{proof}
	The stated form of $C_2^{(l)}$ is \cref{C22l}. 	Use induction to prove the lemma. At one loop, $c^{(1)}_2 = C^{(1)}_2$ by \cref{c2_1loop}.  Consider loop order $l$ and assume all $c^{(j)}$ for $j<l$ are local. 
	
	Then the connected $l$-loop 2-point-function  is given by the sum over all chains of $k$ 1PI-graphs $\Gamma^{(j)}$ where $1 \leq k \leq l$ and $\Gamma^{(j)}$ is supposed to be the sum of all $j$-loop 1PI graphs. Following the usual BPHZ renormalization procedure, for each $\Gamma^{(j)}$ with $j<l$ the graph can be replaced by its corresponding 1PI counterterm $c^{(j)}$. Thereby, we remove all divergences except one: In the case that no $c^{(j)}$ are inserted at all but only the graphs $\Gamma^{(j)}$ themselves, we know by \cref{thm_feynmanrules} that these graphs sum up to $M^{(l)}$. Therefore, the summand which arises if all the $c^{(j)}$ are inserted  has no graph to cancel. It remains as an additional divergent term which needs to be absorbed by the only undetermined quantity, the counterterm $c^{(l)}$. 
	
	If the chain consists of $k$ factors, the so-produced divergence amounts to all possible  partitions of $l$ loops into $k$ counterterms, each connected by a propagator $\frac i s$, in total
	\begin{align*}
		\frac{k!}{n!} \left( \frac i s \right) ^{k-1} B_{n,k} \left( 1!c^{(1)}, 2! c^{(2)}, 3! c^{(3)}, \ldots  \right) . 
	\end{align*} 
	The factorials are present because we do not distinguish between the factors, e.g. $c^{(1)}\frac is c^{(1)} \frac is c^{(1)}$ appears only once, not $3!$ times. Finally, we have to sum over all $k$ where $k=1$ represents the new, undetermined counterterm. The result is minus the overall divergence of the connected $l$-loop function, but we know this to be $C^{(l)}_2$ where no subdivergences occur. So
	\begin{align*}
		C^{(l)}_2 &=  \sum_{k=1}^n	\frac{k!}{n!} \left( \frac i s \right) ^{k-1} B_{n,k} \left( 1!c^{(1)}, 2! c^{(2)}, 3! c^{(3)}, \ldots  \right) . 
	\end{align*} 
	This is   Faa di Brunos formula for ordinary generating functions. Define 
	\begin{align*}
		F(t) &:= \sum_{k=1}^\infty \left( \frac i s \right) ^{k-1} t^k = -is\frac{t}{-is-t} ,		\qquad c(t) := \sum_{l=1}^\infty c_2^{(l)} t^l,
	\end{align*}
then $C^{(l)}$ is the $l$-th coefficient of a power series $	C(t) =  F(c(t))$. Inversion gives the lemma.
\end{proof}

\begin{example}[Massless theory]
	In the massless theory, by \cref{lem_Ml}, 
	\begin{align*}
		\qquad C(t)  = +\frac 1 \epsilon \sum_{l=1}^\infty  b_{l+2}^2   \frac{(-is)^{l+1}}{(4\pi)^{2l} (l!)^2(l+1)!}t^l.
	\end{align*}
	As a peculiar example, let $b_n=0$ for all $n>3$, then $C(t) = -\frac{s^2 b_3^2}{2(4\pi)^2 \epsilon}t$ and the $l$-loop counterterm is
	\begin{align*}
		c^{(l)} &= -is \left( \frac{-is b_3^2}{2(4\pi)^2\epsilon} \right) ^l.
	\end{align*}
If one uses analytic continuation of the sum, then the all-order counterterm is regular in the limit $\epsilon \rightarrow 0$ and cancels the 1PI tree-level 2-valent vertex $-is$:
\begin{align*}
	c_{2} &= \sum_{l=1}^\infty c_2^{(l)} = is \frac{is b_3^2}{2(4\pi)^2\epsilon + isb_3^2}  \rightarrow is.
\end{align*} 
Apart from amusement, taking the limit $\epsilon \rightarrow 0$ at this point is illicit since one then fails to reproduce \cref{G2div_massless}.
This curious example indicates that regularity or divergence of low-order perturbative quantities does not necessarily imply the same for their all-order  counterparts. A similar behaviour is suspected e.g. for the wavefuction renormalization factor $Z$, which diverges in perturbation theory but is assumed to fulfil $0<\abs Z <1$ on physical grounds, see \cite[Sec. 8]{lutz} and references therein.
\end{example}

\section{Ward-Slavnov-Taylor-Identities}\label{sec:st}

In \cref{sec_1PI}, we computed several 1PI counterterms and found that many of them are related. See e.g. \cref{c31_1loop,c31_2loop} or the fact that the all order counterterm $c_2$ of the 2-point-function does not require knowledge of vertex-counterterms of higher valence. All these effects can be summarized by the following Slavnov-Taylor-like identities. 

\begin{theorem}\label{thm_st}
	Let $c_n^{(l)} = \sum_{k=0}^n c_{n,k}^{(l)}$ be the $l$-loop $n$-valent 1PI counterterms and let
	\begin{align*}
		\Gamma_2 &:= -i s - \sum_{l=1}^\infty c_2^{(l)}(s) ,\qquad 	\Gamma_{n\geq 3} := iv_n + \sum_{l=1}^\infty c_n^{(l)}, 
	\end{align*}
	and assume that tadpoles vanish, then for $n\geq 2$
	\begin{align*}
		(1)& \qquad \left[ \Gamma_n \frac{1}{\Gamma_2}(-is) \right] _{\textnormal{only $s$ offshell}} = iv_n \Big| _{\textnormal{only $s$ offshell}} ,\\  
		(2) &\qquad 	\sum_{\Gamma_j \star \Gamma_k = \Gamma_n}\left[ \Gamma_j \frac{1}{\Gamma_2} \Gamma_k \right] _{\textnormal{onshell}} = -\Gamma_n \Big|_{\textnormal{onshell}} ,
	\end{align*}
where the product $\star$ implies $j+k=n+2$ and a sum over all orientations of the graphs.
\end{theorem}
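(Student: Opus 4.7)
The plan is to derive both identities from the vanishing meta-counterterm relations $C_{n,0}^{(l)}=0$ and $C_{n,1}^{(l)}=0$ of \cref{Cn0} and \cref{Cn1}, combined with the BPHZ expansion of each meta-counterterm in terms of 1PI counterterms $c_j^{(l')}$ and bare vertices $iv_j$ that was carried out order by order in \cref{sec_1PI}. The former vanishing expresses diffeomorphism invariance of the $S$-matrix; the latter follows from the tadpole assumption.

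I would establish identity (1) first. In the ``only $s$ offshell'' kinematic every external leg other than the one carrying $s$ satisfies $s_e=0$, and because $c_2^{(l)}(s_e)\propto s_e^2$ vanishes at $s_e=0$, the only place 2-point counterterm insertions can occur in the BPHZ expansion of $C_{n,1}^{(l)}$ is on the single offshell leg. The expansion at loop $l$ is therefore a convolution of an $n$-valent 1PI blob (either the bare $iv_n$ or $c_n^{(l_0)}$) with a chain of $c_2^{(l_i)}$ insertions along the offshell leg joined by bare propagators $i/s$. Summing the conditions $C_{n,1}^{(l)}=0$ over $l\geq 1$ collapses the chain into a geometric series,
\[
\Gamma_n - iv_n \;=\; -\,\Gamma_n\,\frac{(i/s)\,c_2}{1-(i/s)\,c_2},\qquad c_2:=\sum_{l\geq 1}c_2^{(l)}.
\]
Clearing the denominator and using $\Gamma_2(s)=-is\bigl(1-(i/s)\,c_2\bigr)$ yields $\Gamma_n\cdot(-is)/\Gamma_2(s)=iv_n$ in this kinematic, which is (1). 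The one-loop instance is precisely \cref{c31_1loop_ansatz}, and the all-loop version is the obvious packaging of the same manipulation.

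For (2) I would start from the standard tree expansion of the amputated connected $n$-point function in 1PI building blocks, $W_n^{\mathrm{amp}} = \Gamma_n + \sum_{j+k=n+2}\Gamma_j\,(1/\Gamma_2)\,\Gamma_k + (\text{trees with}\geq 3\text{ vertices})$, and apply diffeomorphism invariance of the $S$-matrix, which forces $W_n^{\mathrm{amp}}|_{\text{onshell}}=0$. The 2-vertex reducible contributions are precisely the left-hand side of (2). The higher-vertex trees, which appear only for $n\geq 5$, are disposed of by iteratively applying (1) at every endpoint of the tree: an endpoint 1PI vertex has only its internal edge off-shell, so $\Gamma_j\cdot(1/\Gamma_2)$ collapses to $iv_j\cdot(i/s)$ by (1). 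After all endpoints are processed the remaining object is a bare-vertex tree with bare propagators, and the tree-level recursion \cref{ordinary_recursive} sends its onshell value to zero. Matching the surviving 2-vertex contributions against $-\Gamma_n|_{\text{onshell}}$ on the other side of $W_n^{\mathrm{amp}}|_{\text{onshell}}=0$ yields (2).

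The principal obstacle is controlling the higher-tree cascade in (2): after one round of endpoint collapses the interior 1PI vertices still carry two or more off-shell internal legs, so (1) does not directly apply to them. I would handle this by induction on the number of internal edges in the 1PI tree, with the inductive hypothesis furnishing the lower-$n$ instances of (1) and (2) needed to reduce partially collapsed trees; the induction closes because each step strictly shortens the tree and \cref{ordinary_recursive} eliminates the bare residue on-shell. The proof of (1) is comparatively straightforward once the geometric-series observation is in place, so the real technical work is concentrated in (2).
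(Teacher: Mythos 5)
Your overall strategy coincides with the paper's: both proofs rest on $C_{n,0}=C_{n,1}=0$ (\cref{Cn0,Cn1}), decompose the connected amplitude into trees of 1PI blobs, resum the $c_2$-chains into $\frac{1}{\Gamma_2}$, and induct. For part (2) your argument is essentially the one in the paper. But your argument for part (1) has a genuine gap.

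You assert that in the ``only $s$ offshell'' kinematics the 1PI decomposition of the connected amplitude (hence of $C_{n,1}^{(l)}$) is a \emph{single} $n$-valent 1PI blob convolved with a chain of $c_2$ insertions on the offshell leg, on the grounds that $c_2(s_e)\propto s_e^2$ vanishes on the onshell legs. That reasoning only excludes $c_2$ insertions on \emph{external} onshell legs; it does not exclude one-particle-reducible trees built from two or more $(\geq 3)$-valent 1PI blobs, whose \emph{internal} edges are generically offshell. Such trees do appear in the expansion and do carry nonvanishing divergences — e.g.\ the term $c_{j,1}^{(1)}\frac{i}{s_e}(\text{tree sums})$ in \cref{cn1_1loop_picture}, with $c_{3,1}^{(1)}\neq 0$ by \cref{c31_1loop} — so the relation $C_{n,1}^{(l)}=0$ is \emph{not} just the single-blob-plus-chain identity you write down, and the geometric-series collapse does not by itself yield (1). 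What is needed is precisely the induction on $n$ that the paper performs for (1): the subtree hanging off any internal edge $e$ has only $e$ offshell, so the induction hypothesis (statement (1) for valence $<n$) collapses it to a bare vertex $iv_j$, after which the multi-blob trees reduce to bare trees and the remaining divergences are exactly those of your single-blob chain. You set up this kind of induction for (2) but omit it for (1); since your proof of (2) invokes (1) at the tree endpoints, the gap propagates. Supplying the $n$-induction in (1) (mirroring what you already do in (2)) closes the argument and brings it into line with the paper's proof.
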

\begin{proof}
	First note that
	\begin{align*}
		\frac{1}{\Gamma_2} = \frac{i}{s-i c_2} = \frac i s \sum_{r=0}^\infty \left( \frac i s c_2 \right) ^r
	\end{align*}
	is the non-amputated chain of all 1PI 2-point counterterms. Consequently, $\frac{1}{\Gamma_2}(-is)$ is the same chain where the outermost propagator is removed.
	
	For any $n\geq 3$, the connected $n$-point correlation function vanishes if not more than one external edge is offshell due to \cref{thm_feynmanrules}. Consequently, its divergent part vanishes and $C_{n,0}=C_{n,1}=0$, see \cref{Cn0,Cn1}. It suffices to consider connected graphs where all internal edges are cancelled since the remaining graphs are products of the former type. 
	
	First prove (1). 
	Use induction on $n$. For $n=2$, (1) becomes $(-is) = i v_2$ which is true. For $n=3$, since $c_2$ vanishes onshell, the connected graph where only $s$ is offshell is $\Gamma_3 \frac{1}{\Gamma_2}(-is)$ where $\Gamma_2$ is the counterterm of edge $s$. But $C_{3,1}=0$ and hence only the regular term survives of this sum, which is $iv_3$ as claimed in (1). 
	Now assume (1) holds for $j<n$. Then, in the sum of all connected graphs, all divergent contributions cancel where $s$ is adjacent to a $j$-valent counterterm, either directly or via a string of propagator counterterms. The only non-vanishing terms are those where a $n$-valent counterterm is involved. But again, the sum over all divergent terms has to vanish and the only remaining term is $iv_n$.  This proves (1). 
	
	For (2), the case $n=2$ reads $\Gamma_2|_{\text{onshell}} = -\Gamma_2|_{\text{onshell}}$ which is true since $\Gamma_2|_{\text{onshell}}=0$ by \cref{lem_1PI_2point}. The same holds for $n=3$ since, by \cref{Cn0}, $\Gamma_3|_{\text{onshell}}=0$.
	
	Assume (2) holds for $j,k<n$. The onshell connected amplitude can only be proportional to powers of internal momenta $s_e$. If there is only one such internal momentum, corresponding to one internal edge $e$, then all terms proportional to $s_e$ arise from $\Gamma_j \frac{1}{\Gamma_2(e)}\Gamma_k$. Since these terms are not present in the end result, we know $\Gamma_n$ must absorb them. If there is more than one edge, pick one and call it $e$. Then, there are two subtrees $T_j, T_k$, each of which has valence $<n$ and only one external edge offshell, namely $e$. But by \cref{Cn1}, such trees do not contain divergent terms.  In fact, as a consequence of (1), such trees do not even contain powers of internal momenta since they are made from tree-level vertices $iv_k$ and such trees evaluate to $b_j$ by \cref{thm_bn}. Therefore, the only relevant contribution stems from trees with exactly one internal multi-cancelled edge, which proves (2).	
\end{proof}
The compatibility of \cref{thm_st} with locality is expressed by the fact that such identifications between different $n$-point-functions represent Hopf ideals in the core Hopf algebra \cite{kreimer_recursive_2009,prinz_gauge_2019}. Note also that \cref{ordinary_recursive} is the tree-level version of statement (2). Technically, only statement (1) of \cref{thm_st} requires the vanishing of tadpoles, i.e. $C_{n,1}=0$, whereas (2) holds regardless. 

Statement (1) can be rewritten in the form
\begin{align*}
	\Gamma_n \Big|_{\text{only $s$ offshell}} &= \left[ iv_n \frac i s \Gamma_2(s) \right] _{\text{only $s$ offshell}},
\end{align*}
which implies $c_n^{(l)} \Big|_{\text{only $s$ offshell}}  = \left[ v_n s^{-1} c_2(s) \right] _{\text{only $s$ offshell}} $. This means, \cref{lem_c1_1} holds to all orders in perturbation theory. Inserting (1) into (2) produces
\begin{align*}
	-\Gamma_n  \Big|_{\text{onshell}}  &= \sum_{\Gamma_j \star \Gamma_k=\Gamma_n} \left[ i v_n \frac i s \Gamma_k   \right] _{\text{onshell}}  = \sum_{\Gamma_j \star \Gamma_k = \Gamma_n} \left[ i v_n \frac i s \Gamma_2(s) \frac i s iv_k \right] _{\text{onshell}} 
\end{align*}
and thereby also \cref{lem_c1_0} holds to all orders. It is \cref{lem_c1_2} which fails at higher than one-loop order: To all orders, the parts of the counterterms, which have the same momentum dependence as the vertices $iv_n$, can be obtained by replacing $-is \rightarrow \Gamma_2(s)$. But starting from two-loop order, there are additional kinematic form factors in the counterterms which are not obtained in this way. These terms can not be constructed inductively from the 2-point-counterterms.

\section{Analogy to gauge theory amplitudes}\label{sec:mhv}
In this last section we argue that there are surprising  similarities between a scalar field diffeomorphism and gauge theories. 

The most striking one is the presence of the Slavnov-Taylor-like identities \cref{thm_st}. 
In  quantum gauge theories, the various divergent amplitudes and especially their local counterterms are related to each other by the Ward identity \cite{ward_identity_1950} in QED respectively  Slavnonv-Taylor identities \cite{taylor_ward_1971,slavnov_ward_1972,thooft_renormalization_1971} in QCD. These identities guarantee gauge invariance for the quantized theory and imply that all divergent correlation functions can be rendered finite by a redefinition of \emph{the same} coupling constant, see for example \cite{gracey_selfconsistency_2019}. Relations similar to \cref{thm_st} were also proposed for quantum Einstein gravity \cite{gravity,kreimer_not_2009}.  

In our case, the invariance of the $S$-matrix under global field diffeomorphisms takes the role of local gauge invariance. Conceptually, the identities \cref{thm_st} play the same  role as in gauge theories: They guarantee that  the counterterms of loop-level amplitudes are compatible with the invariance, which in our case means that they vanish in the onshell limit $s_e\rightarrow 0$.  As a by-product, they significantly reduce the number of independent counterterms. For the $n$-valent counterterm, only the summand $c_{n,n}$ is truly independent, all $c_{n,k}$ for $k<n$ are determined by lower valence counterterms $c_{k}$. For example, in \cref{c2_2loop,c31_2loop,c32_2loop,c33_2loop}, the only independent contribution to the 3-valent counterterm is $c_{3,3}$, whereas $c_{3,0}, c_{3,1}$ and $c_{3,2}$ are determined by $c_2$. In the connected perspective, the corresponding statement is \cref{lem_subdivergences}.

The analogy between a scalar field diffeomorphism and QCD becomes visible in the case  $s_p = p^2$ and $a_1 = \frac{-g}2, a_{n>1}=0$ of the diffeomorphism \cref{def_diffeomorphism}.  Then, using $\partial_\mu \rho^2 = 2 \rho \partial_\mu \rho$, the Lagrangian of $\rho$ is
\begin{align}\label{Lrho_pt}
	\mathcal L_\rho  &= \frac 12 \left(- \partial_\mu \rho + g \;\rho \partial_\mu \rho   \right) \left( -\partial^\mu \rho + g \; \rho \partial^\mu \rho  \right)  
\end{align}
which is reminiscent of the Yang-Mills-Lagrangian of QCD,
\begin{align}\label{Lqcd}
	\mathcal L_{\text{QCD}} &= -\frac 14 \left(  \partial_\mu A^a_\nu -\partial_\nu A^a_\mu + g\; f^{abc} A^b_\mu A^c_\nu \right) \left(  \partial^\mu A^{a\nu} -\partial^\nu A^{a \mu} + g \; f^{abc} A^{b \mu} A^{c \nu} \right) .
\end{align}
The scalar field is not a Lorentz vector, so there  necessarily are differences in the tensor structure between \cref{Lrho_pt} and \cref{Lqcd}, and the latter also needs gauge-fixing. But still \cref{Lrho_pt} is perhaps the closest one can get to the Yang-Mills-Lagrangian by only using scalar fields.
 
In QCD, the maximum helicity violating (MHV) amplitudes are those where precisely two out of $n$ external onshell gluons have a different helicity than the rest. To leading order in $N$ of the gauge group $SU(N)$, their matrix element is given by the Parke-Taylor-Formula \cite{parke_amplitude_1986}
\begin{align}\label{MHV}
	\abs{M_{\text{MHV}}(1^-, 2^-, 3^+,\ldots )}^2 &= \frac{g^{2n-4}}{2^{2n-4}  }  \frac{N^{n-2} (N^2-1) }n   (p_1 \cdot p_2)^4 \sum_P \frac{1}{(p_1 \cdot p_2) (p_2 \cdot p_3) \cdots (p_n\cdot p_1)}  
\end{align}
where $P$ ranges over all permutations of $1, \ldots, n$. 
The validity of \cref{MHV} is a consequence of Berends-Giele relations \cite{berends_recursive_1988} in QCD. The BG relations are a recursive construction of $n$-gluon currents $\hat J^x_\xi(1,2,\ldots, n)$, which are the connected $(n+1)$-point gluon functions where precisely one leg is offshell. 
This is almost verbatim the definition of $b_{n+1}$ in \cref{def_bn}. Consequently, the proof of \cref{thm_bn} in \cite{KY17} is strikingly similar to the derivation of BG relations in \cite{berends_recursive_1988}. In this sense, the connected  amplitudes $iV_n \sim i b_n$ in \cref{Vn} are a scalar analogue of the Parke-Taylor-amplitudes in QCD. 
In the scalar case \cref{Lrho_pt}, the tree-level matrix element with one external edge offshell is the square of \cref{Vn},
\begin{align}\label{Vn_pt}
	\abs{V_n}^2 =\abs{b_n}^2 \sum_{i=1}^n s_i^2=  g^{2n-4} \left((2n-1)!!\right)^2  \sum_{i=1}^n \left( p_i \cdot p_i \right) ^2.
\end{align}
In terms of kinematics, the tree sums $b_n$ are simpler than the gluon currents $\hat J^x_\xi$ since there is no remainder at all of the internal edges. Interestingly, the $b_n$ resp. $\abs{V_n}^2$ can be computed easily even if \cref{Lrho_pt} contains more than quadratic terms in each factor.  This might be interesting for more complicated gauge theories such as quantum gravity.

\begin{example}[Exponential diffeomorphism]\label{ex_st_rec}
	In \cref{sec:rec}, we studied in detail the choice $a_n = \frac{(-1)^n g^n}{n+1}$ and, in \cref{rec_u1_L}, found it reminiscent of Einstein gravity .  It coincides at leading order with the choice $a_1 = \frac{-g}{2}$ in \cref{Lrho_pt}  and the Lagrangian can be written in the form
	\begin{align*}
		\mathcal L_\rho &= \frac 12 \left( -\partial_\mu \rho + g \; \rho \partial_\mu \rho - g^2 \; \rho^2 \partial_\mu \rho + g^3 \; \rho^3 \partial_\mu \rho \mp \ldots \right) ^2. 
	\end{align*}
	Since $b_n = g^{n-2}$, it has the leading order matrix element
	\begin{align*}
		\abs{V_n}^2 &= g^{2n-4} \sum_{i=1}^n \left( p_i \cdot p_i \right) ^2.
	\end{align*}
	Compared to \cref{Vn_pt}, this particular choice of diffeomorphism eliminates all combinatoric prefactors.
\end{example}

We do not claim that the QCD Slavnov-Taylor Relations or the Parke-Taylor-Formula can be derived from the scalar case or vice versa, but it is interesting to observe that similar structures exist in both cases. This implies that their presence is not exclusive to   a massless spin-1  field and its Lorentz representations, but follows rather generically from the algebraic structure of the Lagrangian being a square of an expression which is only quadratic in the fields. In this spirit it appears not surprising that the graphs of the standard model gauge theories can be generated algebraically from the graphs of a cubic scalar theory \cite{kreimer_quantization_2013,kreimer_properties_2013,prinz_corolla_2016}.

\section{Conclusion}

We have completed a survey of propagator cancelling scalar quantum field theories. Throughout, we have restricted ourselves to  those theories which are obtained by a global diffeomorphism of the field variable of a free scalar quantum field. This subclass  exhausts all massless scalar fields with quadratic propagator (\cref{thm:generality}). 
The key results are:
\begin{enumerate}
	\item We have demonstrated in \cref{sec:pos_interpretation} that the Feynman rules of a field diffeomorphism in momentum space, given by the connected perspective \cref{thm_feynmanrules}, are in accordance with the expected behaviour of diffeomorphisms in position space.
	\item We have derived the all-orders 2-point function $G_2$ for an arbitrary diffeomorphism of a free field with propagator $i/p^2$ in \cref{sec:twopoint}.
	\item We have extended the connected perspective to include also counterterm-metavertices and have computed several of them in \cref{sec_divergences}.
	\item In \cref{sec_1PI}, we have reconstructed 1PI-counterterms at one- and two-loop level from the meta-counterterms. Further, we showed that the 1PI counterterm of the 2-point-function can be computed without knowledge of higher valence counterterms (\cref{lem_1PI_2point}).
	\item In \cref{sec:rec}, we have examined in detail those theories  where the connected amplitudes are proportional to each other, $b_n \sim \lambda^{n-2}$,  with the following results:
	\begin{enumerate}
		\item The explicit field diffeomorphisms giving rise to all such theories are given by \cref{lem_rec_positionspace}. For the simplest cases the transformed Lagrangian \cref{rec_L_u1} amounts to introducing the inverse of the field variable into the kinetic term and is reminiscent of the Einstein-Hilbert-Lagrangian. 
		\item The all-order 2-point-function $G_2$ is the only amplitude which involves infinitely many graphs. The connected $n$-point loop-level amplitudes are then given by finitely many graphs on up to $n$ vertices, connected by $G_2$, both in momentum space (\cref{thm_rec_fey}) and position space (\cref{sec:pos_higher}).
		\item We computed explicitly the  position-space 2-point function (\cref{rec_lem_G2}) and its momentum-space divergence for massless fields in $4-2\epsilon$ dimensions (\cref{rec_lem_G2div}). For the simplest instance of recursion relations, $u=1$ in \cref{rec_bn}, it turns out to be the exponential superpropagator studied extensively in the early years of quantum field theory. Our result agrees with the literature up to a finite function which amounts to different renormalization conditions. Thereby we verified that the historic result is consistent with systematic momentum-space perturbation theory. 
		\item We showed in \cref{thm_c1} that if $u=1$ then all one-loop counterterms $c_{n}^{(1)}$ coincide with the tree-level vertices $iv_n$ if all offshell variables are scaled simultaneously. This is obvious in the connected perspective, see e.g. \cref{fig_rec_1L} where  each of the graphs is in 1:1 correspondence with the replacement of precisely one propagator by a multiedge. 
		\item In \cref{ex_cn_2loop_rec}, we found that at two loops, this scaling alone is not sufficient to render the theory finite. This is again plausible from the connected perspective as the graph $\Gamma_A$ in  \cref{fig_3point_2loop} is not a propagator correction. 
	\end{enumerate}
	\item In \cref{thm_st} we showed that the counterterms of the diffeomorphed field fulfil a set of equations which are analogous to Slavnov-Taylor-identities in QCD. These relations reflect the fact that the theory is invariant under diffeomorphisms at quantum level. They dramatically reduce the number of independent counterterms  but still infinitely many independent counterterms remain.
	\item We argued that propagator cancelling scalar fields show remarkable similarities to gauge theories and that at least three different sets of recursion relations, which have become indispensable in the modern understanding of perturbative quantum field theory, make an appearance:
	\begin{enumerate}
		\item Slavnov-Taylor-identities between all counterterms are realized by \cref{thm_st}. 
		\item Berends-Giele relations for the gluonic current $\hat J^x_\xi$ have an analogue in the recursive construction of tree sums $b_n$ as given in \cite{KY17}, see \cref{sec:mhv}. 
		\item BCFW relations are respected by the connected perspective \cref{thm_feynmanrules} at least at tree-level, see comment in \cref{sec:connectedperspective}.
	\end{enumerate} 
\end{enumerate} 

The  similarities between scalar field diffeomorphisms and QCD might  help to understand the behaviour of more complicated gauge theories such as quantum Einstein gravity.  There, contrary to \cref{Lqcd}, higher than quadratic terms are present in the two factors in the Lagrangian, and such behaviour can easily be included in the scalar case, see \cref{ex_st_rec}. Further, in the scalar case there is an alternative proof for the recursions \cref{thm_bn} using generating functions \cite{mahmoud_diffeomorphisms_2020}. This raises the question if a similar construction is possible in gauge theories as well.  

Given the above similarities, propagator cancelling scalar fields can also serve as a pedagogical model  to illustrate how a certain structure of the Lagrangian translates to certain behaviour of correlation functions, without obfuscating this correspondence by the extensive tensor notation of higher spin fields. 

Our discussion of Slavnov-Taylor identities for the propagator cancelling field in \cref{sec:st} has remained sketchy. To make it more precise, one would need to decompose the counterterms $c_n^{(l)}$ into kinematic form factors more systematically than we did. This is especially necessary for the case of quantum gravity, where an analogue set of identities must hold to ensure gauge invariance at quantum level.

\begin{appendix}

\section{Massless multiedge graphs}\label{sec_multiedge}
	By $M^{(l)} (s)$ we denote the $l$-loop multiedge graph without vertex- and symmetry-factors. 	
	In the case of massless fields, $m=0$, all Feynman amplitudes $M^{(l)}$ are known to evaluate to Gamma functions. To see this, consider the massless one-loop multiedge with propagator powers $\alpha, \beta$ and set $s:=p^2$, it has the amplitude (e.g.  \cite{usyukina_representation_1975,milgram_tables_1985,davydychev_recursive_1992})
	\begin{align*} 
	I(\alpha, \beta; s)&:=\int \frac{\d^D k}{(2\pi)^D} \frac{1}{(k^2)^\alpha ((k+p)^2)^{\beta}}  = \frac{s^{\frac D 2 -\alpha -\beta}}{(4 \pi)^{\frac D 2}}   \frac{ \Gamma \left( \alpha+\beta-\frac D 2 \right) \Gamma \left( \frac D 2 -\alpha  \right) \Gamma \left( \frac D 2 -\beta  \right)  }{\Gamma(\alpha) \Gamma(\beta)  \Gamma \left( D-\alpha-\beta  \right) }.
	\end{align*}
	Since this amplitude   again is a monomial in $s$ with power $\frac D 2 -\alpha -\beta$, it can be inserted recursively to produce multiedges with more than one loop and arbitrary propagator powers. If all propagators have the same power $(k^2)^{-\alpha}$, the $l$-loop multiedge graph has the Feynman amplitude
	\begin{align}\label{Ml_gamma} 
	M^{(l)} &=   \frac{s^{ l\left(\frac D 2 -\alpha\right)  -\alpha } }{ (4\pi)^{l \frac D 2}} \left(\frac{   \Gamma   \left( \frac D 2 -\alpha  \right)  }{  \Gamma (\alpha)   }\right)^{l+1}   \frac{ \Gamma \left( \alpha -l\left(\frac D 2 -\alpha\right)   \right)  }{    \Gamma \left( (l+1)\left( \frac D 2 - \alpha\right)  \right) }
	\end{align}
	These graphs need a symmetry factor $\frac{1}{(l+1)!}$ for the exchange of $l+1$ equivalent edges which is not included in \cref{Ml_gamma}.

	\begin{lemma}\label{lem_Ml}
		Let $H_n = \sum_{j=1}^n j^{-1}$ be the $n^{\text{th}}$ harmonic number and define $s:=p^2$ to be the external momentum squared.
		Then in $D=4-2\epsilon$ dimensions, the massless $l$-loop multiedge with propagators $i(k^2)^{-1}$, not including the symmetry factor, has the Feynman amplitude
		\begin{align*} 
		M^{(l)}(s)  = -   \frac{ \left(- i  s \right) ^{l-1}}{ (4\pi)^{2l} \left( l! \right) ^2} \left(  \frac{1}{\epsilon }  + (2l+1) H_l-1 +l\left( \ln(4\pi)-\gamma_E\right) - l\ln s   \right) +\mathcal O \left( \epsilon  \right)  .
		\end{align*}
	\end{lemma}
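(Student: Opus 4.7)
The plan is to specialise the closed-form Gamma-function representation \cref{Ml_gamma} to the case $\alpha=1$, $D=4-2\epsilon$ relevant here, and then expand the resulting product of Gamma functions to order $\epsilon^0$. Each propagator factor $i/k^2$ together with the standard Wick-rotation measure contributes an overall imaginary prefactor $i^{l+1}\cdot i^{l} = -i\cdot(-1)^{l-1}$ which turns $s^{l-1}$ into $-(-is)^{l-1}$; the rest of the proof is a bookkeeping exercise around \cref{Ml_gamma}.

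Substituting $\alpha=1$ and $D/2-1 = 1-\epsilon$ into \cref{Ml_gamma} I get
\begin{align*}
M^{(l)}(s) \;\propto\; \frac{s^{l-1}s^{-l\epsilon}}{(4\pi)^{2l}(4\pi)^{-l\epsilon}}\,\Gamma(1-\epsilon)^{l+1}\,\frac{\Gamma(1-l+l\epsilon)}{\Gamma((l+1)(1-\epsilon))},
\end{align*}
up to the overall $i$-factor discussed above. The only singular Gamma is $\Gamma(1-l+l\epsilon)$. Using $\Gamma(z)=\Gamma(z+l)/\prod_{k=1}^{l}(z+k-1)$ with $z=1-l+l\epsilon$,
\begin{align*}
\Gamma(1-l+l\epsilon) = \frac{\Gamma(1+l\epsilon)}{l\epsilon\,\prod_{k=1}^{l-1}(l\epsilon-k)} = \frac{(-1)^{l-1}\Gamma(1+l\epsilon)}{l!\,\epsilon}\prod_{k=1}^{l-1}\!\left(1-\tfrac{l\epsilon}{k}\right)^{-1},
\end{align*}
which isolates the pole and yields the harmonic-number contribution $+l\epsilon H_{l-1}$ inside the $\epsilon$-expansion.

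The remaining Gamma factors are expanded with $\Gamma(1+x)=1-\gamma_E x + \mathcal O(x^2)$ and $\Gamma(n+x)=\Gamma(n)\bigl(1+x\psi(n)+\mathcal O(x^2)\bigr)$, using $\psi(l+1)=-\gamma_E+H_l$ for the denominator. The monomial factors give $s^{-l\epsilon}=1-l\epsilon\ln s+\mathcal O(\epsilon^2)$ and $(4\pi)^{l\epsilon}=1+l\epsilon\ln(4\pi)+\mathcal O(\epsilon^2)$. Collecting the coefficients of $\epsilon$ produced by each factor gives
\begin{align*}
[\epsilon^0]\text{-coefficient} \;=\; -l\gamma_E + l H_{l-1} + (l+1)\gamma_E - (l+1)(H_l-\gamma_E) \cdot(-1) - l\ln s + l\ln(4\pi),
\end{align*}
where the signs on the $(l+1)(H_l-\gamma_E)$ term come from $1/\Gamma((l+1)(1-\epsilon))$. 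The $\gamma_E$-terms collapse to $-l\gamma_E$, and the harmonic-number terms combine via $H_{l-1}=H_l-1/l$ into $(2l+1)H_l-1$, matching exactly the bracket in the lemma.

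The main obstacle is not conceptual but combinatorial: keeping the signs, the $\gamma_E$ pieces coming from four different sources, and the harmonic-number rearrangement straight, while simultaneously tracking the overall factor of $i^{2l+1}$ so that $(-1)^{l-1}s^{l-1}$ is correctly rewritten as $-(-is)^{l-1}$. Once both bookkeeping tasks are done the stated expansion of $M^{(l)}(s)$ follows directly.
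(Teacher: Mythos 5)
Your proposal follows essentially the same route as the paper: specialise \cref{Ml_gamma} to $\alpha=1$, $D=4-2\epsilon$, isolate the pole of $\Gamma(1-l+l\epsilon)$, and expand every remaining factor to order $\epsilon$. The harmonic-number and $\gamma_E$ bookkeeping in your display is correct and reassembles to $(2l+1)H_l-1-l\gamma_E$ exactly as in the lemma; your extraction of the pole via $\Gamma(z)=\Gamma(z+l)/\prod_{k=0}^{l-1}(z+k)$ is just a more explicit derivation of the Laurent coefficient $\frac{(-1)^{l-1}}{l!}\bigl(\frac1\epsilon+l(H_{l-1}-\gamma_E)\bigr)$ that the paper quotes as known.

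The one flaw is the phase accounting in your opening paragraph. The paper's convention is that \cref{Ml_gamma} is the bare nested integral and the only imaginary factor to be appended is $i^{l+1}$ from the $l+1$ propagators $i/k^2$; combined with the $(-1)^{l-1}$ from the pole of $\Gamma(1-l+l\epsilon)$ this gives $i^{l+1}(-1)^{l-1}s^{l-1}=-(-is)^{l-1}$, which is what the lemma states. Your additional factor $i^{l}$ attributed to the ``Wick-rotation measure'' would instead produce $i^{2l+1}(-1)^{l-1}s^{l-1}=-i\,s^{l-1}$, which differs from $-(-is)^{l-1}$ for general $l$ and contradicts both the lemma and the correct target you yourself state in your closing paragraph. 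Drop the spurious $i^{l}$ and the proof closes.
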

	\begin{proof}
		Expand \cref{Ml_gamma} in $\epsilon$, setting $\alpha=1$. With $D=4-2\epsilon$ we have
		\begin{align}\label{Ml_depsilon}
		M^{(l)} (s)   &=  \frac{s ^{l-1-l\epsilon}}{(4 \pi)^{l (2-\epsilon)}}   \left(\Gamma \left( 1-\epsilon \right) \right)^{l+1}  \frac{\Gamma \left(  -l+1+l\epsilon \right) }{   \Gamma \left( l+1 - (l+1)\epsilon \right)  }.
		\end{align}
		The only singular factor for $\epsilon \rightarrow 0$ is the second gamma function in the numerator. 
		Its series representation is well known,
		\begin{align*} 
		\Gamma(-l+1+l\epsilon) &= \frac{(-1)^{l-1}}{l!}\left(  \frac{1}{\epsilon } + l\psi(l) + \mathcal O \left( \epsilon \right) \right)  .
		\end{align*}
		Here, $\psi(l)$ is the digamma function, with integer argument $l>0$ it has the value \cite[§5.4]{dlmf}
		\begin{align*}
		\psi(l) &= \sum_{k=1}^{l-1} \frac 1 k -\gamma_E = H_{l-1}-\gamma_E
		\end{align*}
		where $\gamma_E$ is Euler's constant. All other factors in \cref{Ml_depsilon} are regular for $\epsilon \rightarrow 0$, consequently their $\mathcal O(\epsilon^1)$ coefficients need to be included to produce an overall $\mathcal O (\epsilon^0)$ result. These are
		\begin{align*}
		\frac{1}{\Gamma \left( l+1-(l+1)\epsilon \right) } &= \frac{1}{\Gamma(l+1)}+ \epsilon (l+1) \frac{\psi(l+1)}{\Gamma(l+1)}+ \mathcal O \left( \epsilon^2 \right) = \frac{1}{l!}\left(1 + \epsilon(l+1) \left( H_l-\gamma_E \right)  + \mathcal O \left( \epsilon^2 \right)\right) ,
		\end{align*}
		\begin{align*}
		\left(\Gamma(1-\epsilon)\right)^{l+1} &= 1 +  \epsilon (l+1) \gamma_E + \mathcal O \left( \epsilon^2 \right) ,\qquad &s^{l-1-l\epsilon} &= s^{l-1} \left( 1- \epsilon l \ln s + \mathcal O \left( \epsilon^2 \right)   \right), \\
		(4\pi)^{-2l+l\epsilon} &= (4\pi)^{-2l} \left( 1+\epsilon l\ln (4\pi)  + \mathcal O \left( \epsilon^2 \right)   \right).
		\end{align*}
		Finally, use $H_l = H_{l-1}+l^{-1}$ and include  a factor $i^{l+1}$ for $l+1$ internal propagators.
	\end{proof}
	Note that the presence of momentum-independent terms such as $l \ln (4\pi)$ depends on the choice of integration measure and varies in the literature. 
	
	Especially, given $\frac D 2 -\alpha= 1-\epsilon$ (as is fulfilled for the conventional propagator $s=p^2$ in $D=4-2\epsilon$ dimensions), all massless multiedge graphs have a simple singularity
	\begin{align}\label{Mldiv}
	M^{(l)}_{\text{div}} &=  -    \frac{ \left(- i  s \right) ^{l-1}}{ (4\pi)^{2l} \left( l! \right) ^2}   \frac{1}{\epsilon }
	\end{align}
	with no logarithmic momentum-dependence. This reflects the fact that any subgraph would give rise to a massless tadpole cograph with vanishing amplitude. In this sense, massless multiedges are primitive despite containing power-counting divergent subgraphs.
	
	\begin{example}[Multiedges]\label{ex_multiedges}
		We will use frequently the one- and two-loop multiedges which read
		\begin{align*}
		M^{(1)}(s) &= -\frac{1}{(4\pi)^2}\left( \frac 1 \epsilon + 2-\gamma_E +\ln(4\pi) -\ln s  \right), \\
		M^ {(2)}(s) &=  \  \frac{ is }{4 (4\pi)^4} \left( \frac 1 \epsilon + \frac{11}{2}-2\gamma_E  + 2 \ln(4\pi) -2 \ln s   \right) .
		\end{align*}
	\end{example}

\section{Massless connected 3-point function}\label{sec:triangles}

	The amputated connected 3-point function in the connected perspective is given by four different graph topologies: 
	\begin{enumerate}
		\item A single vertex $G_{3,(1)} = -ib_3 \left( s_1 + s_2 + s_3 \right) $
		\item One Multiedge $M^{(l)}$ where one end is adjacent to two external edges, the other  to  one,
		\item Two multiedges joined at one external vertex,
		\item Triangle-shaped graphs where the three internal edges themselves are multiedges.
	\end{enumerate}
	The first type is trivial, the second one comes in three different orientations with regard to the external momenta. For each orientation, there is a sum over all $l$-loop multiedges,
	\begin{align*}
		G_{3,(2)}&= -\sum_{l=1}^\infty \frac{b_{l+2} b_{l+3} }{(l+1)!}\left( (s_1+s_2)s_3 M^{(l)}(s_3) + (s_1+s_3)s_2 M^{(l)}(s_2) + (s_2+s_3) s_1M^{(l)}(s_1) \right) . 
	\end{align*}
	Observe that the three series do not coincide with the two-point function \cref{G2}: The latter has coefficients $\propto b_{l+2}^2$ whereas here they are $b_{l+2} b_{l+3}$. Since all $b_n$ are independent, there need not be any specific relation between the two series.

		Using \cref{lem_Ml}, $G_3^{(2)}$ for the massless theory $s=p^2$ reads
		\begin{align}\label{G32}
			G_{3,(2)} &= \left \langle 3 \right \rangle    \sum_{l=1}^\infty \frac{(-i )^{l+1} b_{l+2}b_{l+3}}{(4 \pi)^{2l}(l+1)! (l!)^2}      (s_1 + s_2)s_3^l   \left( \frac{1}{\epsilon }  + (2l+1) H_l-1 -l \gamma_E   + l \ln (4\pi) -l \ln s_3  \right) 
		\end{align} 
		where $\left \langle 3 \right \rangle $ by \cref{def_permutations} denotes the symmetric sum over three cyclic permutations of $\left \lbrace s_1,s_2,s_3 \right \rbrace  $.

	Similarly, the third topology is a product of two multiedges
	\begin{align}\label{G33}
		G_{3,(3)}&= \left \langle 3 \right \rangle i s_1s_2s_3 \sum_{l_1=1}^\infty\sum_{l_2=1}^\infty b_{l_1+2} b_{l_2+2} b_{l_1+l_2+3} \frac{M^{(l_1)}(s_1)}{(l_1+1)!} \frac{M^{(l_2)}(s_2)}{(l_2+1)!}. 
	\end{align}
	
	The fourth contribution to the 3-point-function for the massless theory is given by triangle graphs where the propagators are  multiedges on $n$ edges, $M^{(n-1)}$.Let $T^{(n_1,n_2,n_3)}$ be the triangle graph where the internal edge opposed to the external momentum $s_j$ is a multiedge $M^{(n_j-1)}$ . Like for $M^{(l)}$ in \cref{lem_Ml}, we do not include the symmetry factor into $T^{(n_1,n_2,n_3)}$, it is $\frac{1}{n_1! n_2! n_3!}$. Then, the fourth contribution to the connected 3-point amplitude is the series
	\begin{align*}
		G_{3,(4)} &= \sum_{n_1=0}^\infty \sum_{n_2=0}^\infty \sum_{n_3=0}^\infty b_{n_1+n_2+1} b_{n_2+n_3+1} b_{n_3+n_1+1} \frac{1}{n_1! n_2! n_3!} T^{(n_1,n_2,n_3)} (s_1,s_2,s_3).
	\end{align*}

	In a massless theory, the multiedge graph $M^{(l)} \propto     s^{ l\left(\frac D 2 -\alpha\right)  -\alpha }  $  (\cref{Ml_gamma}) amounts to a propagator with the power $-l\left( \frac D 2-\alpha \right) +\alpha$. This means that a triangle graph, where the edges are replaced with multiedes, reduces up to prefactors to a simple massless triangle graph where the edges carry said non-integer propagator power.
	The latter graph evaluates to Appel's hypergeometric $F_4$ functions \cite{boos_method_1987,suzuki_massless_2002}.   At this point we just quote well-known results for the lowest orders. The one-loop order is given in \cite[Eq. (2.11)]{davydychev_recursive_1992}, it is convergent in 4 dimensions. 	
	At two-loop order, the corresponding graph is called the massless dunce's cap ($\Gamma_A$ in \cref{fig_3point_2loop}), again without the symmetry factor $\frac 12$  it reads \cite{ussyukina_new_1994}
	\begin{align}\label{G34_2loop} 
	T^{(0,0,1)} (s_1,s_2,s_3)&=  \frac{1}{2 (4\pi)^4}\left(  \frac{1}{\epsilon^2} +   \left(  5 -2\gamma_E   -2\ln s_3   + 2 \ln (4 \pi) \right) \frac{1}{\epsilon } + \text{finite terms} \right).   
	\end{align}
	If two of the external edges are onshell then $T^{(n_1, n_2, n_3)}$ can be computed similarly to the multiedge graph and evaluates to a product of Gamma functions.

\end{appendix}

\printbibliography

\end{document}